  \providecommand\BibTeX{{%
    \normalfont B\kern-0.5em{\scshape i\kern-0.25em b}\kern-0.8em\TeX}}}
\newcommand{\cmark}{\ding{51}}
\newcommand{\xmark}{\ding{55}}
\newcommand{\newtheoremwithqedsymbol}[4]{%
  \newtheorem{#1-inner}[#2]{#3}%
  \newenvironment{#1}{%
    \def\qedsymbol{#4}%
    \pushQED{\qed}%
    \begin{#1-inner}%
  }{%
    \popQED%
    \end{#1-inner}%
  }
}
\DeclareSymbolFont{brackets}{OT1}{cmr}{m}{n}
\DeclareSymbolFont{largebrackets}{OMX}{cmex}{m}{n}
\DeclareMathDelimiter{(}{\mathopen}{brackets}{`(}{largebrackets}{"00}
\DeclareMathDelimiter{)}{\mathclose}{brackets}{`)}{largebrackets}{"01}
\DeclareMathDelimiter{[}{\mathopen}{brackets}{`[}{largebrackets}{"02}
\DeclareMathDelimiter{]}{\mathclose}{brackets}{`]}{largebrackets}{"03}
\newcommand{\E}{\mathbb{E}}
\newcommand{\var}{\mathrm{Var}}
\newcommand{\cov}{\mathrm{Cov}}
\newcommand{\smean}{\mathrm{SMean}}
\newcommand{\svar}{\mathrm{SVar}}
\newcommand{\scov}{\mathrm{SCov}}
\newcommand{\smeansans}{\mathrm{SMean}_\text{sans-one}}
\newcommand{\scovsans}{\mathrm{SCov}_\text{sans-one}}
\newcommand{\hquad}{\hspace{0.5em}}
\newcommand{\sign}{\mathrm{sgn}}
\renewcommand\bar[1]{%
  \hbox{%
    \vbox{%
      \hrule height 0.5pt 
      \kern0.4ex
      \hbox{%
        \kern-0.1em
        \ensuremath{#1}%
        \kern-0.0em
      }%
    }%
  }%
} 
  \newcommand{\cellcenter}[1]{\multicolumn{1}{c}{#1}} 
  \newcolumntype{P}[1]{>{\centering\arraybackslash}p{#1}} 
  \newcolumntype{M}[1]{>{\centering\arraybackslash}m{#1}} 
  \newcolumntype{B}[1]{>{\centering\arraybackslash}b{#1}} 
  \newcolumntype{.}{D{.}{.}{-1}} 
\let\originalleft\left
\let\originalright\right
\renewcommand{\left}{\mathopen{}\mathclose\bgroup\originalleft}
\renewcommand{\right}{\aftergroup\egroup\originalright}
\newcommand{\fref}[1]{\hyperref[#1]{Fig.~\ref*{#1}}}
\newcommand{\Fref}[1]{\hyperref[#1]{Fig.~\ref*{#1}}}
\newcommand{\sref}[1]{\hyperref[#1]{Section~\ref*{#1}}}
\newcommand{\Sref}[1]{\hyperref[#1]{Section~\ref*{#1}}}
\newcommand{\tref}[1]{\hyperref[#1]{Table~\ref*{#1}}}
\newcommand{\Tref}[1]{\hyperref[#1]{Table~\ref*{#1}}}
\newcommand{\aref}[1]{\hyperref[#1]{Appendix~\ref*{#1}}}
\newcommand{\Aref}[1]{\hyperref[#1]{Appendix~\ref*{#1}}}
\newcommand{\thref}[1]{\hyperref[#1]{Theorem~\ref*{#1}}}
\newcommand{\Thref}[1]{\hyperref[#1]{Theorem~\ref*{#1}}}
\newcommand{\lref}[1]{\hyperref[#1]{Lemma~\ref*{#1}}}
\newcommand{\Lref}[1]{\hyperref[#1]{Lemma~\ref*{#1}}}
\newcommand{\alref}[1]{\hyperref[#1]{Algorithm~\ref*{#1}}}
\newcommand{\Alref}[1]{\hyperref[#1]{Algorithm~\ref*{#1}}}
\newcommand\what[1]{\hstretch{2}{\hat{\hstretch{.5}{#1}}}}
\newcommand{\Test}[1]{\what{T}_{\!\normalfont\text{#1}}}
\newcommand{\sigmaest}[1]{\what{\sigma}_{\normalfont\text{#1}}}
\newcommand{\reportnumber}{FERMILAB-PUB-23-636-ETD-STUDENT}
\def\arxivdateorig{February 12, 2025}
\def\arxivdatethis{February 12, 2025}
\let\oldmaketitle\maketitle
\renewcommand{\maketitle}{\oldmaketitle \thispagestyle{firstpage}}
\begin{document}

\title{CV4Quantum: Reducing the Sampling Overhead in Probabilistic Error Cancellation Using Control Variates}

\author{Prasanth Shyamsundar}
\email{prasanth@fnal.gov}
\orcid{0000-0002-2748-9091}
\affiliation{%
  \institution{Fermi National Accelerator Laboratory}
  \department{Fermilab Quantum Division, Emerging Technologies Directorate}
  \streetaddress{PO Box 500}
  \city{Batavia}
  \state{Illinois}
  \country{USA}
  \postcode{60510-0500}
}

\author{Wern Yeen Yeong}
\email{wernyeenyeong@gmail.com} 
\orcid{0000-0003-2907-1710}
\affiliation{%
  \institution{University of Notre Dame}
  \department{Department of Mathematics}
  \streetaddress{255 Hurley Bldg}
  \city{Notre Dame}
  \state{Indiana}
  \country{USA}
  \postcode{46556}
}


\begin{abstract}
 Quasiprobabilistic decompositions (QPDs) play a key role in maximizing the utility of near-term quantum hardware. For example, Probabilistic Error Cancellation (PEC) (an error mitigation technique) and circuit cutting (which enables large quantum computations to be performed on quantum hardware with a limited number of qubits) both involve QPDs. Computations based on QPDs typically incur large sampling overheads that grow exponentially, e.g., with the number of error-terms mitigated or the number of circuit-cuts employed, limiting their practical feasibility. In this work, we adapt the control variates variance reduction technique from the statistics literature in order to reduce the sampling overhead in QPD-based computations. We demonstrate our method, dubbed CV4Quantum, using simulation experiments that mimic a realistic PEC scenario. In more than 50\% of the PEC-based estimations performed in the study, we observed a more than 50\% reduction in the number of samples needed to achieve a given precision when using our technique.
 We discuss how future research on constructing good control variates can lead to even stronger sampling overhead reduction.
\end{abstract}

\begin{CCSXML}
<ccs2012>
<concept>
<concept_id>10003752.10003753.10003758</concept_id>
<concept_desc>Theory of computation~Quantum computation theory</concept_desc>
<concept_significance>500</concept_significance>
</concept>
<concept>
<concept_id>10002950.10003648</concept_id>
<concept_desc>Mathematics of computing~Probability and statistics</concept_desc>
<concept_significance>500</concept_significance>
</concept>
<concept>
<concept_id>10003752.10003753.10003757</concept_id>
<concept_desc>Theory of computation~Probabilistic computation</concept_desc>
<concept_significance>500</concept_significance>
</concept>
<concept>
<concept_id>10003752.10003777.10003784</concept_id>
<concept_desc>Theory of computation~Quantum complexity theory</concept_desc>
<concept_significance>300</concept_significance>
</concept>
<concept>
<concept_id>10010520.10010521.10010542.10010550</concept_id>
<concept_desc>Computer systems organization~Quantum computing</concept_desc>
<concept_significance>300</concept_significance>
</concept>
</ccs2012>
\end{CCSXML}

\ccsdesc[500]{Theory of computation~Quantum computation theory}
\ccsdesc[500]{Mathematics of computing~Probability and statistics}
\ccsdesc[500]{Theory of computation~Probabilistic computation}
\ccsdesc[300]{Computer systems organization~Quantum computing}
\ccsdesc[300]{Theory of computation~Quantum complexity theory}

\keywords{quasiprobabilistic decompositions, sampling overhead, probabilistic error cancellation, quantum error mitigation, control variates}


\maketitle


 





\section{Introduction and Setup}
Quantum computing promises to transform multiple fields, including quantum chemistry, optimization, and high energy physics~\cite{lockwood2022empirical,Klco_2022,Bauer:2022hpo,shang2023practical,Blekos_2024}. However, the suboptimal noise-level and limited size (qubit-count) of current and near-future quantum processors impose strong limitations on the kind of quantum computations that can be successfully performed on them. Several classical-quantum hybrid techniques have been proposed to mitigate these limitations and improve the utility of current and near-future quantum hardware~\cite{PhysRevLett.119.180509,PhysRevLett.125.150504,Mitarai_2021,PRXQuantum.3.010309,Perlin2021,chen2024enhancedquantumcircuitcutting,filippov2023scalable,dutkiewicz2024error,koh2024readout}.

An important classical-quantum hybrid approach is to estimate the expected outcome of a quantum computation that cannot be performed on the available hardware as a linear combination of the expected outcomes of several related quantum computations that can be performed on the available hardware. Concretely, the target quantum operation $\mathcal{F}_\mathrm{target}$ is expressed as follows: 
\begin{align}
 \mathcal{F}_\mathrm{target}(\rho) \approx \mathcal{F}_\text{qpd}(\rho) \equiv \sum_{k_1=1}^{K_1}\dots\sum_{k_M=1}^{K_M}\left[\Big.q_1(k_1)\dots q_M(k_M)\hquad\mathcal{E}_{(k_1,\dots,k_M)}(\rho)\right]\,. \label{eq:qpd}
\end{align}
Here $\mathcal{F}_\mathrm{target}$ is approximated by $\mathcal{F}_\text{qpd}$, which is a linear combination of the quantum operations $\mathcal{E}_{(k_1,\dots,k_M)}$ that a) can be performed on the available hardware, and b) are indexed by the summation indices $k_1,\dots, k_M$. The coefficients in the linear combination, $q_1(k_1)$, \dots, $q_M(k_M)$, are real-valued. For simplicity, we will assume that each $q_m(k_m)$ is non-zero for all $k_m\in\{1,\dots,K_m\}$:\footnote{This assumption is not required in practice, and is only made to avoid the division-by-0 cases in the equations in this paper. The zero-coefficient-terms, if any, can simply be dropped from the summation to satisfy this assumption.}
\begin{align}
 q_m(k_m) \in \mathbb{R}\setminus \{0\}\,,\qquad\qquad\qquad \forall m\in\{1,\dots, M\}, \quad\forall k_m\in\{1,\dots,K_m\}\,.
\end{align}
If $\mathcal{F}_\text{qpd}$ and the individual $\mathcal{E}_{(k_1,\dots,k_M)}$-s are all trace-preserving, then each $q_m$ will\footnote{More accurately, each $q_m$ \emph{can be made to} sum to 1. Note that there is a redundancy in specification of $q_m$-s, since each $q_m$ can be scaled by a multiplicative scaling factor, as long as the overall scaling factor is 1.} sum, over $k_m\in\{1,\dots, K_m\}$, to 1. However, we will not make that assumption in this paper. Let $N_\mathrm{\Pi K}$ and $N_\mathrm{\Sigma K}$ represent the product and the sum of $K_m$-s, respectively:
\begin{align}
 N_{\Pi K} &\equiv \prod_{m=1}^M K_m\,,\qquad\qquad\qquad N_{\Sigma K} \equiv \sum_{m=1}^M K_m\,. \label{eq:Npisumdef}
\end{align}
$N_{\Pi K}$ is the total number of distinct operations $\mathcal{E}_{(k_1,\dots,k_M)}$ and $N_{\Sigma K}$ is the number of individual coefficients $q_m(k_m)$.

The decomposition of $\mathcal{F}_\mathrm{target}$ as in \eqref{eq:qpd} is sometimes referred to in the literature as the quasiprobability decomposition (QPD) \cite{Piveteau2022}, and has several applications in quantum computing. For example, in probabilistic error cancellation (PEC) \cite{PhysRevLett.119.180509,Piveteau2022}, the target result of a noiseless computation is expressed as a linear combination of the results of an ensemble of noisy quantum computations, each of which can be performed on the available hardware.
For scalability reasons, in order to apply this technique to a large or a deep circuit, one typically performs PEC independently on smaller portions of the circuit. For example, PEC can be applied layer by layer to mitigate the error in circuits of increasing depth. In this case the summation indices $k_1,\dots,k_M$ in the decomposition of the full circuit will include the summation indices in the decomposition of each individual layer. Consequently, the number of terms $N_{\Pi K}$ in the QPD summation will scale exponentially in the number of layers on which PEC is applied.

Quantum circuit cutting techniques \cite{PhysRevLett.125.150504,Mitarai_2021,Perlin2021} constitute another example of QPD. Here, a target quantum operation which entangles (and therefore requires) a large number of qubits is decomposed into a linear combination of several ``factorizable'' quantum operations. Such factorizable quantum operations contain unentangled subsets of qubits, and so can be performed on hardware with fewer qubits.\footnote{Note that in circuit cutting, the operations $\mathcal{E}_{(k_1,\dots,k_M)}$ could include intermediate measurements. The results of these intermediate measurements typically introduce additional sign factors into the summation, which are handled by classical post-processing. In our setup, any such modification arising from the results of the intermediate measurements is absorbed into the result of $\mathcal{E}_{(k_1,\dots,k_M)}$ itself, so the coefficients $q_1(k_1), \dots, q_M(k_M)$ themselves are constants independent of the measurement outcomes.} There are several methods to perform such decompositions, including wire cutting \cite{PhysRevLett.125.150504,brenner2023optimal}, gate cutting \cite{Mitarai_2021}, and entanglement forging \cite{PRXQuantum.3.010309}.
These techniques can be used repeatedly---for example, to cut multiple gates and wires in a circuit---in order to attain a desirable decomposition of the target circuit. The summation
indices $k_1,\dots,k_M$ in the decomposition of the full circuit will include the summation indices corresponding to each individual cut employed. Consequently, the number of terms $N_{\Pi K}$ in the QPD summation will scale exponentially in the numbers of gate- and wire-cuts used.

\subsection{Monte Carlo Estimation of the Target Result} \label{sec:monte_carlo_intro}

If the total number of terms $N_{\Pi K}$ in the summation in \eqref{eq:qpd} is sufficiently large, it will be impractical to perform the summation explicitly, i.e., by performing every quantum operation $\mathcal{E}_{(k_1,\dots,k_M)}$. In such situations, one can still estimate the target result probabilistically using Monte Carlo sampling \cite{PhysRevLett.119.180509}. The basic idea is to rewrite \eqref{eq:qpd} in terms of an expectation as follows:
\begin{align}
 \mathcal{F}_\text{qpd}(\rho) = \E\left[\frac{\prod_{m=1}^M q_m(k_m)}{P(k_1,\dots,k_M)}\,\mathcal{E}_{(k_1,\dots,k_M)}(\rho)\right]\,. \label{eq:qpd_as_expectation}
\end{align}
Here, $P$ is a unit-normalized, strictly positive probability distribution over the indices $(k_1,\dots,k_M)$ satisfying
\begin{align}
 P(k_1,\dots,k_M) &> 0\,,\qquad\qquad\forall k_m\in\{1,\dots,K_m\},\quad \forall m\in\{1,\dots,M\}\,,\\
 \sum_{k_1=1}^{K_1}\dots\sum_{k_M=1}^{K_M} P(k_1,\dots,k_M) &= 1\,,
\end{align}
and $\E[\,\cdot\,]$ represents an expectation over $(k_1,\dots,k_M)$ sampled as-per $P$. Now, one can use \eqref{eq:qpd_as_expectation} to estimate the target result as a weighted expectation of the results of randomly sampled operations $\mathcal{E}_{(k_1,\dots,k_m)}$.

Concretely, let the indices $k_1,\dots, k_M$ be random variables sampled as-per the joint-distribution $P$.\footnote{Performing such a random sampling of $(k_1,\dots,k_M)$ and computing the weight $W$ will be computationally feasible if $N_{\Sigma K}$ is sufficiently small---we make this assumption throughout this paper.} Let $X$ be a random variable denoting the result of the corresponding quantum operation $\mathcal{E}_{(k_1,\dots,k_M)}$. $X$ could be the observed outcome in one measurement of a given observable, or the sample average of several ``shots''. Let $W$ be a random variable denoting the corresponding weight, which is defined as
\begin{align}
 W\equiv W(k_1,\dots,k_M)\equiv \frac{\prod_{m=1}^M q_m(k_m)}{P(k_1,\dots,k_M)}\,.
\end{align}
From \eqref{eq:qpd_as_expectation}, the target result, say $T$, which corresponds to the operation $\mathcal{F}_\text{qpd}$, can be written as
\begin{align}
 T &\equiv \E\left[WX\right]\,. 
\end{align}
The basic Monte Carlo estimation procedure described in this section estimates this target as a sample average using, say $N$, independently sampled datapoints $\left(W^{(i)}, X^{(i)}\right)$, indexed by the superscript $^{(i)}$. The corresponding estimator $\Test{basic}$ is given by
\begin{align}
 \Test{basic} &\equiv \frac{1}{N}\sum_{i=1}^N W^{(i)}X^{(i)}\,.
\end{align}
It is easy to see that $\Test{basic}$ is an unbiased estimator of $T$, i.e., that $\E\left[\Test{basic}\right] = T$.

For scalability reasons, typically, one chooses $P$ to also be factorizable as follows:
\begin{align}
 P(k_1,\dots,k_m) &= \prod_{m=1}^M p_m(k_m)\,,
\end{align}
where each $p_m$ is a probability distribution for $k_m\in\{1,\dots,K_m\}$ satisfying
\begin{align}
 p_m(k_m) &> 0\,,\qquad\qquad\forall m\in\{1,\dots,M\}\,,\quad \forall k_m\in\{1,\dots,K_m\}\,,\\
 \sum_{k_m=1}^{K_m} p_m(k_m) &= 1\,,\qquad\qquad\forall m\in\{1,\dots,M\}\,.
\end{align}
For simplicity, we will assume this form for $P$ in the rest of the paper. 
In this case, the weight $W$ can be expressed as a product of constituent weight functions as follows:
\begin{align}
 W(k_1,\dots, k_M) = \prod_{m=1}^M w_m(k_m)\,,\quad\text{where}\quad w_m(k_m)\equiv \frac{q_m(k_m)}{p_m(k_m)}\equiv w_m\,.
\end{align}

Note that the expected value of $W$, say $\mu_W$, as well as the expectations of each of the constituent weights $w_m$ a) are independent of the sampling distributions $p_m$, and b) can be easily pre-computed as follows:
\begin{align}
 \E[w_m] &= \sum_{k_m=1}^{K_m} q_m(k_m)\,,\label{eq:wm_mean}\\
 \mu_W \equiv \E[W] &= \prod_{m=1}^M \E[w_m]\,\label{eq:w_mean}.
\end{align}
In typical PEC or circuit cutting scenarios $\mu_W$ will be 1;\footnote{This will be true if $\mathcal{F}_\text{qpd}$ and the individual $\mathcal{E}_{(k_1,\dots,k_M)}$-s are all trace-preserving.} however we do not make this assumption in the development of our CV4Quantum technique in this paper.

\subsection{Sampling Overhead}
The variance of $\Test{basic}$ is given by
\begin{align}
 \var\left[\Test{basic}\right] = \frac{\var[WX]}{N}\,. \label{eq:var_T_basic}
\end{align}
From \eqref{eq:var_T_basic}, one can see that the number of datapoints needed (i.e., the number of operations $\mathcal{E}_{(k_1,\dots,k_M)}$ required to be sampled and performed) in order to reach a given precision in the Monte Carlo estimate is proportional to the variance of $WX$. A difficulty with the Monte Carlo estimation technique is that this variance can be quite large, which results in a significant sampling overhead in the estimation of $T$ \cite{Xiong_2020,Piveteau2022,Xiong_2022,chen2024enhancedquantumcircuitcutting}. This paper is focused on reducing this sampling overhead, by constructing more efficient estimators of $T$ than $\Test{basic}$, using the control variates variance reduction technique \cite{doi:https://doi.org/10.1002/9781118445112.stat07947}, discussed \sref{sec:methodology} onwards.

Using the law of total variance \cite{Weiss2005ACI}, the variance of $WX$ can be written as
\begin{alignat}{2}
 \var[WX] &= \underbrace{\var\left[\Big.\E\left[WX\,\big|\,k_1,\dots,k_M\right]\right]}_{\text{inter-operation variance term}} &&+  \underbrace{~~\E\left[\Big.\var\left[WX\,\big|\,k_1,\dots,k_M\right]\right]~~}_{\text{intra-operation variance term}}\\
 &= \overbrace{\var\left[\Big.W\,\E\left[X\,\big|\,k_1,\dots,k_M\right]\right]} &&+  \overbrace{\E\left[\Big.W^2\,\var\left[X\,\big|\,k_1,\dots,k_M\right]\right]}\,. \label{eq:variance_decomposition}
\end{alignat}
Here the first term captures the effect of $\E\left[WX\,\big|\,k_1,\dots,k_M\right]$ not being the same for different choices of $(k_1,\dots,k_M)$. On the other hand, the second term captures the effect of the measurement $X$, and consequently $WX$, not being fully determined by the choice of $(k_1,\dots, k_M)$. We will refer to the first and second terms as the inter- and intra-operation variance terms, respectively. The intra-operation variance term can be reduced simply by increasing the number of shots used to compute $X$ after sampling $(k_1,\dots,k_M)$. On the other hand, the inter-operation variance term cannot be reduced this way; its contribution to the variance of $\Test{basic}$ can only be reduced by increasing $N$, the number of times the quantum operations $\mathcal{E}_{(k_1,\dots,k_M)}$ are sampled and performed. In current and near-term hardware, the resource-overhead for loading and performing different quantum circuits is much higher than the resource requirements for taking more shots of a given quantum circuit. 
Consequently, in typical PEC applications, e.g. as in Ref.~\cite{vandenBerg2023}, the inter-operation variance is the dominant source of the QPD sampling overhead. 
As will be seen later, the CV4Quantum technique only ameliorates the effect of this inter-operation variance (a generalization to ameliorate intra-operation variance is discussed briefly in \sref{sec:conclusions}).

There are situations where the dominant source of sampling overhead is the intra-operation variance term. For example, if the total number of terms $N_{\Pi K}$ in the linear combination in \eqref{eq:qpd} is sufficiently small (e.g., in a circuit cutting application with only a few gate-cuts), then one can perform every operation operation $\mathcal{E}_{(k_1,\dots,k_M)}$ in the summation, avoiding the MC estimation estimation technique discussed in this section entirely. In this case the only source of uncertainty is the intra-operation variance term. This is the case, for example, in the analysis of sampling overhead in Ref.~\cite{Mitarai_2021}; the techniques of this paper cannot be directly used to ameliorate the sampling overhead in such situations.

For completeness, next we will discuss the sampling overhead in terms of the $\gamma$-factor commonly used in the literature. In the context of PEC and circuit cutting techniques, the variance of $WX$ is typically driven by the variance of $W$, with the magnitude of $X$ bounded by a comparatively small number. The variance of $W$ is given by
\begin{align}
 \var[W] &= -\E^2[W] + \E\left[W^2\right] = -\mu_W^2 + \prod_{m=1}^M\left[~\sum_{k_m=1}^{K_m} \frac{\left|q_m(k_m)\right|^2}{p_m(k_m)}~\right]\,.
\end{align}
This variance is minimized if the sampling distributions $p_m$ are chosen to be proportional to $|q_m|$ as follows: 
\begin{align}
 p^\text{(special-case)}_m(k_m) &\equiv \frac{\left|q_m(k_m)\right|}{\gamma_m}\,,\qquad\qquad \text{where}\quad \gamma_m\equiv \sum_{k_m=1}^{K_m} \left|q_m(k_m)\right|\,.
\end{align}
This is a popular choice for the sampling distribution; in this case, the weights $w_m(k_m)$ and $W(k_1,\dots,k_M)$ are given by
\begin{align}
 w^\text{(special-case)}_m(k_m) &= \sign\left(\Big.q_m(k_m)\right)\,\gamma_m\,,\\
 W^\text{(special-case)}(k_1,\dots,k_M) &= \left[\prod_{m=1}^M\sign\left(\Big.q_m(k_m)\right)\right]\,\gamma\,,\qquad\qquad\text{where}\quad \gamma\equiv \prod_{m=1}^M \gamma_m\,.
\end{align}
Here $\sign$ is the sign function. The corresponding weight-variance is given by
\begin{align}
 \var^\text{(special-case)}\left[W^\text{(special-case)}\right] &= \gamma^2 - \mu_W^2\,.
\end{align}
In this way, it can be seen that a larger value of $\gamma$ corresponds to a larger variance of $W$. The connection between $\gamma$ and the variance of $WX$ can be seen easily for the special case when $X$ is a random variable with a constant magnitude, say $a$, i.e., if $X=X_\text{(constant-mod)}\in \{-a, a\}$. In this case, we have
\begin{align}
 \var^\text{(special-case)}\left[W^\text{(special-case)}X_\text{(constant-mod)}\right] &= \gamma^2\,a^2 - T^2\,.
\end{align}
This demonstrates that a larger value of $\gamma$ corresponds to a larger variance of $WX$, and thus a larger sampling overhead. This feature generally holds true a) for both the inter- and intra-operation variance terms, and b) even when $X$ does not have a constant magnitude. Note that similar to $N_{\Pi K}$, $\gamma$ also scales exponentially in $M$.\footnote{As an aside, for a given value of $N_{\Pi K}$, if $\gamma$ is sufficiently large, then the Monte Carlo estimation technique may not be more advantageous than performing the summation explicitly.} As a result, the sampling overhead grows exponentially, e.g., with circuit-depth in PEC, or with the number of wire- and gate-cuts in circuit cutting.

The value of $\gamma$ depends only on the values of the coefficients $q_m(k_m)$. As a result it is a useful performance metric to compare different quasiprobability decompositions of the same target quantum operation, for example as in Refs.~\cite{Piveteau2022,10304329}, which is concerned with creating decompositions with reduced sampling overhead. However in this paper we do not create new decompositions. Rather we develop new estimators for $T$ with lower variance than $\Test{basic}$, using a given decomposition. So, instead of $\gamma$-factors, we will evaluate performance in terms of a) degree of variance reduction or, equivalently, b) the number of events needed to reach a given precision. Furthermore, we will not assume the particular choice for sampling distributions used in this section, namely $p_m = p_m^\text{(special-case)}$ in the development of our technique (however we will use this form for $p_m$ in our simulation experiments).


The rest of this paper is organized as follows. In \sref{sec:cv_intro}, we briefly review the control variates variance reduction technique. In \sref{sec:cv_construction} to \sref{sec:weight_as_control}, we discuss the construction of control variates for QPD estimations. In \sref{sec:cv_technique_modifications}, we introduce a few adaptations of the standard control variates technique, appropriate for QPD estimations. In \sref{sec:formulas}, we present formulas for algorthmically performing QPD estimations with control variates  . In \sref{sec:experiments}, we apply our techniques to realistic PEC tasks and quantitatively evaluate the performance of our techniques, before providing some conclusions and outlook in \sref{sec:conclusions}.

\section{Methodology} \label{sec:methodology}
\subsection{Introduction to the Control Variates Technique} \label{sec:cv_intro}
Control Variates techniques \cite{doi:https://doi.org/10.1002/9781118445112.stat07947} is a popular Monte Carlo variance reduction technique. The basic idea is as follows: Let $U$ be a random variable whose expected value $\E\left[U\right]$ we are interested in estimating, and let $V$ be a random variable whose expected value $\mu$ is known a priori. Now, one can create a one-parameter family of random variables, say $U_\lambda$, given by
\begin{align}
 U_\lambda \equiv U - \lambda (V - \mu)\,,\qquad\qquad\qquad\forall \lambda\in\mathbb{R}\,,
\end{align}
such that $\E[U_\lambda] = \E[U]$ for all $\lambda \in \mathbb{R}$. Here $V$ is referred to as a control variate. Although $U_\lambda$ has the same expectation as $U$, it could have a different variance. Concretely, using Bienaym\'{e}'s identity \cite{Klenke2013}, the variance of $U_\lambda$ can be written as
\begin{align}
 \var[U_\lambda] = \var[U] + \lambda^2 \var[V] - 2\,\lambda\,\cov[U, V]\,,
\end{align}
where $\cov[U, V]$ represents the covariance of $U$ and $V$. $\var[U_\lambda]$ is minimized by setting the coefficient $\lambda$ to $\lambda^\ast$ given by
\begin{align}
 \lambda^\ast \equiv\begin{cases}
  \frac{\displaystyle\cov[U, V]}{\displaystyle\var[V]}\,,&\qquad\qquad\text{if } \var[V] \neq 0\,,\\
  \quad 0\,,&\qquad\qquad\text{otherwise}\,.
 \end{cases}\label{eq:opt_lambda_1cv}
\end{align}
The corresponding minimum variance is given by
\begin{align}
 \var[U_{\lambda^\ast}] &= \begin{cases}
  ~~\var[U] - \frac{\displaystyle\cov^2[U, V]}{\displaystyle\var[V]} = \var[U]\left(1-\rho_{(U,V)}^2\right)\,,&\qquad\quad\text{if } \var[V] \neq 0\,,\\[.5em]
  ~~\var[U]\,,&\qquad\quad\text{otherwise}\,.
 \end{cases}\label{eq:min_var_1cv}
\end{align}
where $\rho_{(U,V)}$ is the Pearson correlation coefficient of $U$ and $V$. Thus, if $V$ is correlated to $U$, \eqref{eq:min_var_1cv} provides a way to construct a random variable with the same expectation as $U$ but a lower variance, thereby ameliorating the sampling overhead---the stronger the correlation between $U$ and $V$, the greater the variance reduction. In the context of this paper, $U$ will be identified with $WX$, the random variable whose mean is to be estimated. We will postpone the discussion of suitable control variates to \sref{sec:cv_construction}.

The control variates technique can also be generalized to include multiple control variates, say $V_1,\dots, V_{N_\mathrm{cv}}$, with means $\mu_1,\dots, \mu_{N_\mathrm{cv}}$, respectively. In this case, one can create an $N_\mathrm{cv}$-parameter family of random variables $U_{(\lambda_1,\dots, \lambda_{N_\mathrm{cv}})}$ given by
\begin{align}
 U_{(\lambda_1,\dots, \lambda_{N_\mathrm{cv}})} \equiv U - \sum_{a=1}^{N_\mathrm{cv}} \lambda_a\,\left(V_a-\mu_a\right)\,,\qquad\qquad\qquad\forall \lambda_1,\dots,\lambda_{N_\mathrm{cv}}\in\mathbb{R}\,,\label{eq:multiple_cvs}
\end{align}
with the same expectation as $U$. Again using Bienaym\'{e}'s identity, the variance of $U_{(\lambda_1,\dots, \lambda_{N_\mathrm{cv}})}$ can be written as
\begin{align}
 \var\left[U_{(\lambda_1,\dots, \lambda_{N_\mathrm{cv}})}\right] = \var[U] + \sum_{a=1}^{N_\mathrm{cv}} \lambda_a\,\lambda_b\,K_{ab} - 2\,\sum_{a=1}^{N_\mathrm{cv}}\lambda_a\,\cov[U, V_a]\,,\label{eq:var_U_lambdas}
\end{align}
where $K_{ab}$ is the the covariance of $V_a$ and $V_b$ or, equivalently, the $(a,b)$-th element of the covariance matrix $\mathbf{K}$ of the control variates. $\var\left[U_{(\lambda_1,\dots, \lambda_{N_\mathrm{cv}})}\right]$ is minimized by setting $\lambda_a$-s to $\lambda^\ast_a$-s given by
\begin{align}
 \lambda^\ast_a = \sum_{b=1}^{N_\mathrm{cv}} K^+_{ab}\,\cov\left[U, V_b\right]\,,\label{eq:lambda_star}
\end{align}
where $K^+_{ab}$ is the $(a,b)$-th element of the Moore--Penrose inverse (or the pseudoinverse) $\mathbf{K}^+$ of the covariance matrix $\mathbf{K}$. The corresponding minimum variance is given by
\begin{align}
 \var\left[U_{(\lambda^\ast_1,\dots, \lambda^\ast_{N_\mathrm{cv}})}\right] &= \var[U] - \sum_{a,b=1}^{N_\mathrm{cv}} K^+_{ab}\,\cov[U, V_a]\,\cov[U, V_b]\,.\label{eq:var_U_lambdas_opt}
\end{align}
The variance of $U_{(\lambda^\ast_1,\dots, \lambda^\ast_{N_\mathrm{cv}})}$ can roughly be thought of as the variance in $U$ that is \emph{unexplained} by \cite{Achen_1990} or unaccounted for by the control variates. As an advantageous special case, if $(U-\E[U])$ lies in the span of the random variables $(V_a-\mu_a)$, i.e., if $(U-\E[U])$ can be written as a linear combination of $(V_a-\mu_a)$-s, then $\var\left[U_{(\lambda^\ast_1,\dots, \lambda^\ast_{N_\mathrm{cv}})}\right]$ will be zero (stated as \lref{lemma:span} and proved in \aref{appendix:lemmas_and_proofs}).

Note that the values of $\cov[U, V_a]$ and/or $K^+_{ab}$ may not be known a priori. In this case, $\lambda^\ast_a$ cannot be computed exactly, and good values for the coefficients $\lambda_a$ would have to be estimated from the data. This estimation process for the specific task at hand, namely sampling overhead reduction in QPD, will be discussed in \sref{sec:cv_technique_modifications}.

\subsection{Constructing Control Variates} \label{sec:cv_construction}
In order to use the control variates technique for QPD sampling overhead reduction, we need random variables with a priori known (i.e., pre-computable) expectations to use as controls. Note that the weight $W$ itself can be used as a control, with its mean $\mu_W$ pre-computed using \eqref{eq:w_mean}. Likewise, the constituent weights $w_m$ can also be used as controls, with their means pre-computed using \eqref{eq:wm_mean}. More generically, one can create a random variable $V_\mathrm{fac}$, which is a factorizable function of $(k_1,\dots,k_M)$, as follows:
\begin{align}
 V_\mathrm{fac}(k_1,\dots, k_M) &\equiv \prod_{m=1}^M v_m(k_m)\,. \label{eq:v_fac}
\end{align}
Here each $v_m$ is an arbitrary real-valued function of $k_m\in\{1,\dots, K_m\}$. $V_\mathrm{fac}$ can be used as a control, with its expected value pre-computed as follows:
\begin{align}
 \E\left[V_\mathrm{fac}\right] &= \prod_{m=1}^M \E\left[v_m\right]\,,\quad\text{where}\quad \E\left[v_m\right] = \sum_{k_m=1}^{K_m} p_m(k_m)\,v_m(k_m)\,. \label{eq:fac_expectation}
\end{align}
The full weight $W$ as well as the constituent weights $w_m$ are all special cases of the generic factorizable control variate $V_\mathrm{fac}$. In addition to factorizable functions like $V_\mathrm{fac}$, their linear combinations, e.g., $a\,V_\mathrm{fac} + b\,V'_\mathrm{fac}$, can also be used as control variates. The mean of the linear combination will simply be the linear combination of the means, which can be pre-computed easily as long as the number of terms in the linear combination is practicable. Note that by constructing control variates as functions of $k_m$-s, we are facilitating correlations between the controls and $WX$ (which itself is influenced by the $k_m$-s). Such correlations are crucial for variance reduction using the control variates technique. Also, because these control variates are deterministic functions of $(k_1,\dots,k_M)$, they can only reduce the effect of the inter-operation variance term in \eqref{eq:variance_decomposition}, and not the effect of the intra-operation variance term.

We end this subsection by noting that in addition to the expectations, other statistical properties like covariance can also be pre-computed for control variates of the forms discussed in this subsection. As an illustration, for pairs of factorizable control variates of the form in \eqref{eq:v_fac}, say $V_\mathrm{fac}$ and $V'_\mathrm{fac}$, the covariance can be pre-computed using the formula
\begin{align}
 \cov\left[V_\mathrm{fac}\,,\,V'_\mathrm{fac}\right] \equiv \E\left[V_\mathrm{fac}\,V'_\mathrm{fac}\right] - \E\left[V_\mathrm{fac}\right]\,\E\left[V'_\mathrm{fac}\right]\,,
\end{align}
by noting that the product $V_\mathrm{fac}\,V'_\mathrm{fac}$ is also a factorizable function of the $k_m$-s. For a pair of control variates, each of which is a linear combination of factorizable control variates, one can compute the covariance using the distributive property of covariance over addition, as long as the numbers of terms in the linear combinations are practicable.

\subsection{Existence of a ``Golden-Ticket'' Control Variate} \label{sec:golden-control}
As a special case of the $V_\mathrm{fac}$ in \eqref{eq:v_fac}, one can define indicator variables $I_{(k'_1,\dots,k'_M)}$ given by
\begin{alignat}{2}
 I_{(k'_1,\dots,k'_M)}(k_1,\dots,k_M) &\equiv \begin{cases}
  ~~1\,,\qquad\qquad&\text{if }k_m=k'_m\,\quad\forall m\in\{1,\dots,M\}\,,\\
  ~~0\,,\qquad\qquad&\text{otherwise}\,,
 \end{cases}\label{eq:indicator}
\end{alignat}
with expectations given by
\begin{align}
 \E\left[I_{(k'_1,\dots,k'_M)}\right] &= \prod_{m=1}^M p_m(k'_m)\,.
\end{align}
The indicator functions $I_{(k'_1,\dots,k'_M)}$ form a basis for the space of all possible deterministic functions of $(k_1,\dots,k_M)$. This means that \emph{any deterministic function} of $(k_1,\dots,k_M)$ can be written as a linear combination of factorizable functions of the form in \eqref{eq:v_fac}. A corollary is that the random variable
\begin{align}
 V_\mathrm{golden} \equiv W\,\E\left[X\,\big|\,k_1,\dots, k_M\right]\,,
\end{align}
is expressible as a linear combination of $V_\mathrm{fac}$-s.
Roughly speaking, this a priori unknown ``golden-ticket'' variable, if used as a control, can completely eliminate the inter-operation variance of $WX$. More precisely, the variance of $U_{\lambda^\ast}$ in \eqref{eq:min_var_1cv}, with $U$ set to $WX$ and $V$ set to $V_\mathrm{golden}$,\footnote{A control that is perfectly correlated with $V_\mathrm{golden}$ will be just as effective.} will only be the intra-operation part of the variance of $WX$, given by
\begin{align}
 \E\left[\Big.W^2\,\var\left[X\,\big|\,k_1, \dots, k_M\right]\right]\,;
\end{align}
this is stated as \lref{lemma:goldenticket} and proved in \aref{appendix:lemmas_and_proofs}. However, finding such a special control variate may not be easy, given that the dimensionality of the relevant function space is $N_{\Pi K}$, typically a very large number.

\subsection{The Effectiveness of Weight as a Control} \label{sec:weight_as_control}
As noted above, the dimensionality of the space of deterministic functions of $(k_1,\dots,k_M)$ could be very large. As such, if one picks a control variate randomly from this high-dimensional function space (using some arbitrary sampling distribution), one would expect the correlation between the control and $WX$ to be weak, and consequently, the control variate to be ineffective for variance reduction. However, it is possible to construct effective controls, as will be demonstrated in this paper. 

Let us consider the performance of the weight $W$ as a control. For simplicity, let us consider the scenario where $W\in \{+\gamma, -\gamma\}$, $\mu_W=1$, and $X\in \{+1, -1\}$. By construction, the quantities $\E\left[X\right]$, $\E\left[X\,\big|\,W=+\gamma\right]$, and $\E\left[X\,\big|\,W=-\gamma\right]$ all lie in the range $[-1, 1]$. In addition, as is typical, let the target $T\equiv \E[WX]$ also lie in the range $[-1, 1]$. In this setup, the optimal coefficient $\lambda^\ast$ from \eqref{eq:opt_lambda_1cv} (for $U\equiv WX$, $V\equiv W$) and $\rho^2_{(W, WX)}$, and their limits as $\gamma\rightarrow\infty$, are given by
\begin{alignat}{2}
 \lambda^\ast &= \frac{\gamma^2\,\E\left[X\right] - \E\left[WX\right]}{\gamma^2 - 1}\,,\qquad\qquad&&\lambda^\ast \longrightarrow \E\left[X\right]\quad\text{as}\quad \gamma\longrightarrow \infty\,, \label{eq:W_lambda_lim}\\
 \rho^2_{(W, WX)} &= \frac{\displaystyle\left(\gamma^2\,\E\left[X\right] - \E\left[WX\right]\right)^2}{\displaystyle\left(\gamma^2-1\right)\,\left(\gamma^2 - \E^2\left[WX\right]\right)}\,,\qquad\qquad&&\rho^2_{(W, WX)} \longrightarrow \E^2\left[X\right]\quad\text{as}\quad \gamma\longrightarrow \infty\,.
\end{alignat}
$\rho^2_{(W, WX)}$ captures the fraction by which the control variate $W$ can reduce the variance of $WX$. For example, roughly speaking, a value of $0.4$ for $\rho^2_{(W,WX)}$ corresponds to a $40\%$ reduction in the number of datapoints needed to achieve a given precision in the estimate for $T$. $\rho^2_{(W, WX)}$ is plotted, as a function of $\E\left[X\right]$ and $\E\left[WX\right]$, for a few different values of $\gamma$, in the heatmaps in \fref{fig:W_theoretical_performance}. The hatched regions in the heatmaps correspond to values of $\E\left[X\right]$ and $\E\left[WX\right]$ that are disallowed because they correspond to $\E\left[X\,\big|\,W=-\gamma\right]$ having a magnitude greater than 1. Qualitatively, we can see that representative values of $\rho^2_{(W, WX)}$ in the plots correspond to a reasonably effective variance reduction (we postpone a quantitative evaluation of the performance to \sref{sec:experiment_results}). Furthermore, this performance does not degrade as $\gamma$ and/or $N_{\Pi K}$ increases, unlike what could be expected of the performance of randomly picked (uninformed) control variates.
\begin{figure}
 \centering
 \includegraphics[width=\textwidth]{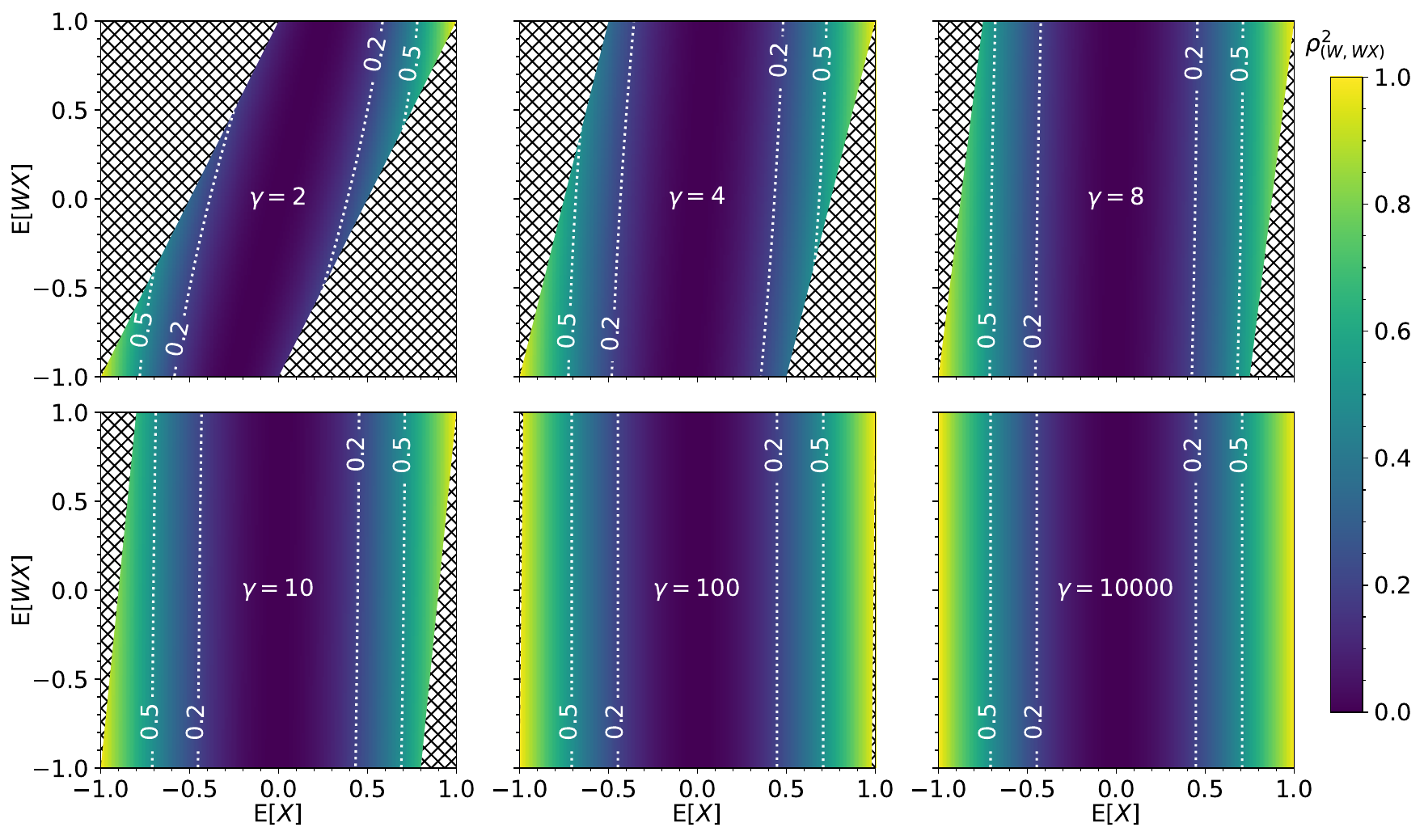}
 \caption{A heatmap of $\rho^2_{(W,WX)}$ as a function of $\E[W]$ and $\E[WX]$ for different values of $\gamma$, in the case where $W\in\{+\gamma, -\gamma\}$, $\E[W]=1$, and $X\in\{+1, -1\}$. The hatched regions in the heatmaps correspond to $\left|\big.\E\left[X\,|\,W=-\gamma\right]\right| > 1$, which is not allowed.}
 \label{fig:W_theoretical_performance}
\end{figure}

Leveraging the effectiveness of $W$ as a control, other effective control variates could be designed as slight modifications or variations of $W$. Examples of such a constructions are given in \sref{sec:experiments_cvs}. 

\subsection{Adapting the Standard Control Variates Technique} \label{sec:cv_technique_modifications}
Having described a method to construct control variates for QPD-estimations, next we will turn to the task of calculating good values for the coefficients $\lambda_a$ in \eqref{eq:multiple_cvs}. From \eqref{eq:lambda_star}, the optimal coefficients $\lambda^\ast_a$ for $U\equiv W\,X$ is given by
\begin{align}
 \lambda^\ast_a &= \sum_{b=1}^{N_\mathrm{cv}} K^+_{ab}\,\cov\left[W\,X, V_b\right] \label{eq:lambda_star_generic_form}\\
 &= \sum_{b=1}^{N_\mathrm{cv}} K^+_{ab}\,\cov\left[\Big.W\,\left(\big.X-\E\left[X\right]\right), V_b\right] + \sum_{b=1}^{N_\mathrm{cv}} K^+_{ab}\,\cov\left[\Big.W\,\E\left[X\right], V_b\right]\\
 &= \sum_{b=1}^{N_\mathrm{cv}} K^+_{ab}\,\cov\left[\Big.X, W\,\left(\big.V_b-\mu_b\right)\right] + \E\left[X\right]\,\sum_{b=1}^{N_\mathrm{cv}} K^+_{ab}\,C_b\,, \label{eq:lambda_star_special_form}
\end{align}
where $C_a$ is the covariance of $W$ and $V_a$:
\begin{align}
 C_a \equiv \cov\left[W, V_a\right]\,.
\end{align}
In general the value of $\lambda_a^\ast$ is not known a priori. In such situations, the general strategy is to use a data-based-estimate for the $\lambda^\ast_a$-s as the coefficients for the control variates. Explicitly, the corresponding control-variate-based estimator for $T$ would be given by
\begin{align}
 \Test{standard-cv} &\equiv \frac
{1}{N}\sum_{i=1}^N \left[W^{(i)}\,X^{(i)} - \sum_{a=1}^{N_\mathrm{cv}}\widehat{\lambda}^\ast_a \left(V_a^{(i)}-\mu_a\right)\right]
\end{align}
Here, $V_a^{(i)}$ is the value of the $a$-th control variate for the $i$-th datapoint. $\widehat{\lambda}^\ast_a$-s are data-based-estimates for $\lambda^\ast_a$---they are functions of the full dataset, which includes the values of $X^{(i)}$, $W^{(i)}$, and $V_a^{(i)}$ for all $i\in\{1, \dots, N\}$ and $a\in\{1,\dots,N_\mathrm{cv}\}$. Here we will modify this standard approach in the following ways.

\subsubsection{Eliminating Bias} Since the coefficients $\widehat{\lambda}^\ast_a$ are functions of the random data, they are themselves random variables that could be correlated with the values of $V_a^{(i)}$. As a result, $\Test{standard-cv}$ is not an unbiased estimator for $T$. As the number of datapoints used to estimate $\widehat{\lambda}^\ast_a$ increases, the bias in $\Test{standard-cv}$ will reduce in magnitude, since the correlation between $\widehat{\lambda}^\ast_a$ and $V_a^{(i)}$ for any given $i$ reduces in magnitude. In many applications of the control variates technique, the number of datapoints is large enough for the bias to be negligible. However, this may not be the case in quantum computing applications, so we modify
$\Test{standard-cv}$
as follows to eliminate the bias:
\begin{align}
 \Test{unbiased-cv} &\equiv \frac
{1}{N}\sum_{i=1}^N \left[W^{(i)}\,X^{(i)} - \sum_{a=1}^{N_\mathrm{cv}}\lambda^{(\neg i)}_a \left(V_a^{(i)}-\mu_a\right)\right]\,, \label{eq:Test_unbiased_cv}
\end{align}
where $\lambda^{(\neg i)}_a$ is an estimate for $\lambda^\ast_a$ that uses all the datapoints except the $i$-th one. Each datapoint $i$ will have a different value for $\lambda^{(\neg i)}_a$. Since $\lambda^{(\neg i)}_a$ is independent of $V_a^{(i)}$, $\Test{unbiased-cv}$ will be unbiased, i.e., $\E\left[\Test{unbiased-cv}\right] = T$. There are other ways to eliminate the bias, e.g., a) splitting the dataset into two halves, and for each half of the dataset, using the value of $\widehat{\lambda}^\ast_a$ estimated from the other half, b) using a small fraction of datapoints \emph{only} for estimating $\widehat{\lambda}^\ast_a$. However, we avoid such approaches here because they a) lead to estimators that are not invariant under permutations of the dataset, and b) only use a small fraction of the dataset (which is likely to be small to begin with in near-term quantum applications) to estimate the coefficients.

\subsubsection{Enforcing Shift Invariance}
Under the transformation $X\mapsto X+\delta$, it is easy to see that $T$ transforms as $T\mapsto T+\mu_W\,\delta$. This motivates the definition of a shift invariance property for estimators of $T$. Let us define an estimator $\widehat{T}$ for $T$ as being shift invariant if, under the transformation $X^{(i)} \mapsto X^{(i)} + \delta$, the estimator transforms as $\widehat{T} \mapsto \widehat{T} + \mu_W\,\delta$. Somewhat surprisingly, the basic Monte Carlo estimator $\Test{basic}$ is not a shift invariant estimator\footnote{This lack of shift-invariance was the original clue which pointed us towards the use of control variates for this problem.}, since it transforms as
\begin{align}
 \Test{basic} \mapsto \Test{basic} + \frac{\delta}{N}\sum_{i=1}^N W^{(i)}\not\equiv \Test{basic} +\mu_W\,\delta\,.
\end{align}
If one uses an estimator that is not shift invariant, the estimated value of $T$ would be different if one simply shifts $X$ by $\delta\neq 0$ before performing the QPD estimation (and applies the appropriate inverse-shift of $-\mu_W\,\delta$ after the estimation). This is arguably inelegant, so shift-invariance is a desirable property for estimators of $T$; we will design our control-variate-based estimators to satisfy this property. $\Test{unbiased-cv}$ will be shift invariant if the following two conditions are both satisfied (proved as \lref{lemma:shiftinvariance} in \aref{appendix:lemmas_and_proofs}):
\begin{enumerate}
 \item $(W-\mu_W)$ lies in the span of the random variables $(V_a-\mu_a)$. 
 \item $\lambda^{(\neg i)}_a$ transforms as $\displaystyle\lambda^{(\neg i)}_a \mapsto \lambda^{(\neg i)}_a + \delta \sum_{b=1}^{N_\mathrm{cv}}K^+_{ab}\,C_b$ under $X^{(i)}\mapsto X^{(i)}+\delta$.
\end{enumerate}
The first condition can be enforced simply by including $W$ (or a variable proportional to $W$) as one of the controls. In order to enforce the second condition, we will use the expression for $\lambda^\ast$ in \eqref{eq:lambda_star_special_form} to estimate it as follows:
\begin{align}
 \lambda^{(\neg i)}_a &\equiv \sum_{b=1}^{N_\mathrm{cv}}K^+_{ab}\,S_b^{(\neg i)} + \left(\frac{1}{N-1}\sum_{j\neq i} X^{(j)}\right)\sum_{b=1}^{N_\mathrm{cv}} K^+_{ab}\,C_b\,,\label{eq:lambda_a_sans1}
\end{align}
where $S_b^{(\neg i)}$ is the sample covariance of $X$ and $W(V_b-\mu_b)$ computed using all the datapoints except the $i$-th one. Note that the values of $K_{ab}$ and $C_b$ can be pre-computed for the control variates constructed and used in this paper, as discussed in \sref{sec:cv_construction}. 
The Moore--Penrose inverse of $\mathbf{K}$, namely $\mathbf{K}^+$, can also be pre-computed, using the  singular value decomposition of $\mathbf{K}$; this is implemented in many standard computational tools, e.g., Numpy package \cite{harris2020array} for the Python language \cite{van1995python}.

\subsubsection{Error Estimation}
The basic Monte Carlo estimator $\Test{basic}$ is the sample mean of $N$ independent and identically distributed random variables $W^{(i)}\,X^{(i)}$. In this case, one can estimate the error in $\Test{basic}$, e.g., as $(1/\sqrt{N})$ times the sample standard deviation of $W^{(i)}\,X^{(i)}$. On the other hand, because of the data dependence of the coefficients $\lambda^{(\neg i)}_a$, the control-variate-based estimators are sample means of identically distributed and \emph{dependent} (possibly correlated) random variables. In such situations, the aforementioned error-estimation-technique could underestimate the uncertainty of the estimator. The degree of this underestimation decreases as the number of datapoints $N$ increases---if $N$ is sufficiently large this problem can be ignored. However, to cover the case where $N$ is not very large, in \sref{sec:formulas}, we provide some error estimation formulas pertaining to our Cv4Quantum estimators. These formulas, which are derived in \aref{appendix:uncertainty_estimates}, include higher order correction terms to provide a better uncertainty estimate.

\subsection{Choosing the Number of Controls} \label{sec:num_controls}
Adding more controls to a given set of control variates cannot increase the minimum variance $\var\left[U_{\lambda_1^\ast,\dots,\lambda_{N_\mathrm{cv}}^\ast}\right]$ in \eqref{eq:var_U_lambdas_opt}. In fact, under some weak conditions, adding more controls will strictly decrease $\var\left[U_{\lambda_1^\ast,\dots,\lambda_{N_\mathrm{cv}}^\ast}\right]$. This suggests that one should use as many control variates as is computationally feasible in order to improve the variance reduction achieved. However, note that in practice one does not have access to the optimal coefficients $\lambda_a^\ast$, and instead one uses coefficients estimated from data; using poorly estimated coefficients can lead to a poor estimation of $T$. Since one only has access to a finite-sized dataset, increasing the number of controls indefinitely will naturally lead to poorly estimated coefficients and, consequently, worsening of the errors in the estimate of $T$ (this is reflected in the error estimation formula in \sref{sec:formulas}, \alref{alg:cv_estimator}).

The prudent approach is to find a good compromise between the potential improvement from incorporating more controls and the deterioration resulting from the poorer estimation of the corresponding coefficients. The number of controls $N_\mathrm{cv}$ one uses should depend on the number of data points $N$ used in the estimation---a larger value of $N$ should allow for a larger value of $N_\mathrm{cv}$. Providing a more concrete prescription for how to choose the number of controls is beyond the scope of this paper.

\section{CV4Quantum: Formulas for Estimation} \label{sec:formulas}
We dub the control-variates-based QPD estimation technique of this paper as ``CV4Quantum''. To facilitate the adoption of CV4Quantum, the information in \sref{sec:cv_intro}, \sref{sec:cv_technique_modifications}, and \aref{appendix:uncertainty_estimates} is distilled, in this section, into a set of formulas which can be used directly to implement our estimators. \Tref{tab:definitions} summarizes the descriptions of quantities which are provided as inputs to the estimators.
\begin{table}[ht]
 \centering
 \caption{The definitions and descriptions of quantities that are provided as inputs to the QPD estimators.}
 \label{tab:definitions}
 \resizebox{\textwidth}{!}{\begin{tabular}{M{.1\textwidth} m{.85\textwidth}}
  \toprule
   \textbf{Symbol} & \cellcenter{\textbf{Definition/Description}}\\
  \midrule[.8pt]
   $N$ & Number of data points.\\
   \cmidrule(lr){1-2}
   $N_\mathrm{cv}$ & Number of control variates. The control variates are $V_1,\dots, V_{N_\mathrm{cv}}$.\\
   \cmidrule(lr){1-2}
   $X^{(i)}$ & Value of the computation-result $X$ for the $i$-the datapoint.\\
   \cmidrule(lr){1-2}
   $W^{(i)}$ & Value of the weight $W$ for the $i$-the datapoint.\\
   \cmidrule(lr){1-2}
   $V_a^{(i)}$ & Value of the control variate $V_a$ for the $i$-the datapoint.\\
   \cmidrule(lr){1-2}
   $\mu_W$ & Population mean (i.e., expected value) of the weight $W$. This should be pre-computed, and will typically be 1 in PEC.\\
   \cmidrule(lr){1-2}
   $\mu_a$ & Population mean of the control variate $V_a$ (to be pre-computed).\\
   \cmidrule(lr){1-2}
   $K_{ab}$ & Population covariance of control variates $V_a$ and $V_b$ (to be pre-computed).\newline Constitutes the $(a,b)$-th element of the covariance matrix $\mathbf{K}$.\\
   \cmidrule(lr){1-2}
   $K^+_{ab}$ & $(a,b)$-th element of $\mathbf{K}^+$, the Moore--Penrose inverse of $\mathbf{K}$ (to be pre-computed).\\
   \cmidrule(lr){1-2}
   $C_a$ & Population covariance of $W$ and $V_a$ (to be pre-computed). \\
 \bottomrule
 \end{tabular}}
\end{table}

We will use three different notations for data-indices, which run from $1$ to $N$, namely a)~superscript~$^{(i)}$, b)~superscript~$^{(\neg i)}$, and c)~superscript~$^{(\sim i)}$, with the following meanings: $A^{(i)}$ is a quantity that depends \textbf{only on} the $i$-th datapoint. $A^{(\neg i)}$ is a quantity that \textbf{does not} depend on the $i$-th datapoint. $A^{(\sim i)}$ is a generic quantity indexed by $i$, that may or may not depend on any given datapoint.

Next we introduce shorthand notations for the sample mean, sample variance, and sample covariance as follows:
\begin{align}
 \smean\left[A\right] &\equiv \frac{1}{N}\sum_{i=1}^N A^{(\sim i)}\\
 \svar\left[A\right] &\equiv \frac{1}{N-1}\sum_{i=1}^N \left(A^{(\sim i)} - \smean\left[A\right]\right)^2\\
 \scov\left[A, B\right] &\equiv \frac{1}{N-1}\sum_{i=1}^N \left[\bigg.\left(A^{(\sim i)} - \smean\left[A\right]\right)\left(B^{(\sim i)} - \smean\left[B\right]\right)\right]
\end{align}
Additionally, we introduce shorthand notations for the sample mean and sample covariance leaving one element out:
\begin{align}
 &\smeansans^{(\sim i)}\left[A\right] \equiv \frac{1}{N-1}\sum_{j\neq i} A^{(\sim j)} = \frac{N}{N-1}\,\smean\left[A\right] - \frac{1}{N-1}\,A^{(\sim i)}\\
 &\scovsans^{(\sim i)}\left[A, B\right] \equiv \frac{1}{N-2}\sum_{j\neq i} \left[\bigg.\left(A^{(\sim j)} - \smeansans^{(\sim i)}\left[A\right]\right)\left(B^{(\sim j)} - \smeansans^{(\sim i)}\left[B\right]\right)\right]\\
 &\qquad\qquad= \frac{N-1}{N-2}\,\scov\left[A, B\right] - \frac{N}{(N-2)(N-1)}\left(A^{(\sim i)} - \smean\left[A\right]\right)\left(B^{(\sim i)} - \smean\left[B\right]\right)
\end{align}
If the superscripts of $A$ and $B$ are of the $^{(i)}$ type, then the superscripts of $\smeansans$ and $\scovsans$ can be changed to the $^{(\neg i)}$ type. Note that, starting from the values of $A^{(\sim i)}$ and $B^{(\sim i)}$, the values of $\smeansans^{(\sim i)}$ and $\scovsans^{(\sim i)}$ can be computed for all $i\in\{1, \dots, N\}$ in $\mathcal{O}(N)$ time. Finally, we define $R^{(\sim i)}$ as follows:
\begin{align}
 R^{(\sim i)} &\equiv X^{(i)} - \smean\left[X\right]\,.
\end{align}
By construction $R^{(\sim i)}$ a) is invariant under constant shifts of $X$, and b) has a sample mean of 0.

Using these preliminary definitions and shorthand notations, we will now provide concrete definitions for a few estimators of $T$. Each definition is a sequence of equations which describe an algorithm to compute $\Test{}$, which is an estimate for $T$, and $\sigmaest{}^2$, which is an estimate for $\var\left[\Test{}\right]$. For completeness, the computation of the basic Monte Carlo estimator discussed in \sref{sec:monte_carlo_intro} is outlined in \alref{alg:basic_estimator}.

\begin{algorithm}
\begin{subequations}
\begin{align}
 Y^{(i)}_\text{basic} &\equiv W^{(i)}\,X^{(i)}\,,\\
 \Test{basic} &\equiv \smean\left[Y_\text{basic}\right]\,,\\
 \sigmaest{basic}^2 &\equiv \frac{1}{N}\,\svar\left[Y_\text{basic}\right]\,.
\end{align}
\end{subequations}
\caption{\small Formulas to compute $\widehat{T}_\text{basic}$, the basic Monte Carlo estimator for $T$, and $\widehat{\sigma}_\text{basic}^2$, an estimate for the variance of $\widehat{T}_\text{basic}$.}
\label{alg:basic_estimator}
\end{algorithm}

In \alref{alg:centered_estimator}, we provide an estimator for $T$ that uses $W$ as a control, with the coefficient $\lambda^{(\neg i)}$ chosen to be $\smeansans^{(\neg i)}\left[X\right]$. The motivations behind this choice are a) the $\gamma\rightarrow\infty$ limit for $\lambda^\ast$ in \eqref{eq:W_lambda_lim} and b) making the estimator shift-invariant. The definition of $\Test{centered}$ in terms of $Y_\mathrm{centered}$ shows that $\Test{centered}$ is an unbiased estimator for $T$, while the definition in terms of $Z_\mathrm{centered}$ a) is numerically more stable, and b) demonstrates the shift invariance property of the estimator.
\begin{algorithm}
\begin{subequations}
\begin{align}
 Z^{(\sim i)}_\mathrm{centered} &\equiv \left[\frac{N\,W^{(i)} - \mu_W}{N-1}\right]\,R^{(\sim i)}\,,\\
 Y^{(\sim i)}_\mathrm{centered} &\equiv W^{(i)}\,X^{(i)} - \smeansans^{(\neg i)}\left[X\right]\left(W^{(i)}-\mu_W\right)\,,\label{eq:alg2_b}\\
 &= Z^{(\sim i)}_\mathrm{centered} + \mu_W\,\smean\left[X\right]\,,\\
 \Test{centered} &\equiv \smean\left[Y_\mathrm{centered}\right] = \smean\left[Z_\mathrm{centered}\right] + \mu_W\,\smean\left[X\right]\,,\\
 \sigmaest{centered}^2 &\equiv \frac{1}{N}\,\svar\left[Y_\mathrm{centered}\right] + \frac{1}{(N-1)^2}\,\scov^2\left[X, W\right] \label{eq:alg2_e}\\
 &= \frac{1}{N}\,\svar\left[Z_\mathrm{centered}\right] + \frac{1}{(N-1)^2}\,\scov^2\left[X, W\right]\,.
\end{align}
\end{subequations}
\caption{\small Formulas to compute $\widehat{T}_\text{centered}$ (the centered Monte Carlo estimator for $T$) and $\widehat{\sigma}_\text{centered}^2$ (an estimate for the variance of $\widehat{T}_\text{centered}$).}
\label{alg:centered_estimator}
\end{algorithm}

In \alref{alg:cv_estimator}, we provide a sequence of formulas for computing the control-variates-based estimate of $T$, which is one of the main contributions of this paper. The definition of $\Test{cv}$ via \eqref{eq:alg3_c}, \eqref{eq:alg3_d}, \eqref{eq:alg3_f}, and \eqref{eq:alg3_h} follows from plugging the expression for $\lambda^{(\neg i)}_a$ in \eqref{eq:lambda_a_sans1} into \eqref{eq:Test_unbiased_cv}. The formula for $\sigmaest{cv}$ is derived in \aref{appendix:uncertainty_estimates}. The definition of $\Test{cv}$ in \alref{alg:cv_estimator} in terms of $Y_\mathrm{cv}$ shows that $\Test{cv}$ is an unbiased estimator for $T$, while the definition in terms of $Z_\mathrm{cv}$ a) is numerically more stable, and b) demonstrates the behavior of the estimator under a constant shift in the values of $X^{(i)}$. In \eqref{eq:alg3_i} for $\sigmaest{cv}$, the second term on the right-hand side captures the error contribution from the fact that the coefficients $\lambda^{(\neg i)}_a$ are estimated from data. Note that this contribution to the variance increases with $N_\mathrm{cv}$, the number of coefficients to be estimated.\footnote{This term scales roughly linearly with $N_\mathrm{cv}$ (which becomes apparent from \eqref{eq:alg3_i} when $\mathbf{K}$ is diagonalized), provided a few other statistical properties of the control variates stay the same as more controls are added} This is related to the discussion in \sref{sec:num_controls} regarding choosing the number of controls.
\begin{algorithm}
\begin{subequations} \label{eq:cv_alg}
\begin{align}
 G_a^{(i)} &\equiv W^{(i)}\,\left(V_a^{(i)}-\mu_a\right)\,,\label{eq:alg3_a}\\
 L_a^{(\sim i)} &\equiv \frac{R^{(\sim i)}}{N-1}\left[(N-2)\,C_a + N\,\left(G^{(i)}_a-\smean\left[G_a\right]\right)\right]\,,\label{eq:alg3_b}\\
 W_\text{res}^{(i)} &\equiv W^{(i)} - \sum_{a,b=1}^{N_\mathrm{cv}}K^+_{ab}\,C_a\,\left(V_b^{(i)}-\mu_b\right)\,,\label{eq:alg3_c}\\
 S_a^{(\neg i)} &\equiv \scovsans^{(\neg i)}\left[X, G_a\right]\,,\label{eq:alg3_d}\\
 Z^{(\sim i)}_\mathrm{cv} &\equiv \left[\frac{N\,W^{(i)} - W_\text{res}^{(i)}}{N-1}\right]\,R^{(\sim i)} - \sum_{a,b=1}^{N_\mathrm{cv}} K^+_{ab}\,S^{(\neg i)}_a\,\left(V_b^{(i)}-\mu_b\right)\,,\\
 Y^{(\sim i)}_\mathrm{cv} &\equiv W^{(i)}\,X^{(i)} - \smeansans^{(\neg i)}\left[X\right]\,\left(W^{(i)}-W_\text{res}^{(i)}\right) - \sum_{a,b=1}^{N_\mathrm{cv}} K^+_{ab}\,S^{(\neg i)}_a\,\left(V_b^{(i)}-\mu_b\right) \label{eq:alg3_f}\\
 &= Z^{(\sim i)}_\mathrm{cv} + W_\text{res}^{(i)}\,\smean\left[X\right]\,,\\
 \Test{cv} &\equiv \smean\left[Y_\mathrm{cv}\right] = \smean\left[Z_\mathrm{cv}\right] + \smean\left[W_\text{res}\right]\,\smean\left[X\right]\,, \label{eq:alg3_h}\\
 \sigmaest{cv}^2 &\equiv \frac{1}{N}\,\svar[Y_\mathrm{cv}] + \frac{1}{(N-2)(N-3)}\sum_{a,b=1}^{N_\mathrm{cv}}K^+_{ab}\,\scov\left[L_a, L_b\right]\,.\label{eq:alg3_i}
\end{align}
\end{subequations}
\caption{\small Formulas to compute $\widehat{T}_\text{cv}$, the control-variates-based Monte Carlo estimator for $T$, and $\widehat{\sigma}_\text{cv}^2$, and estimate for the variance of $\widehat{T}_\text{cv}$.}
\label{alg:cv_estimator}
\end{algorithm}

\subsection{Properties of the Estimators}

Several properties of the three estimators are compiled in \tref{tab:estimator_properties}, with the proofs in \sref{sec:methodology}, \aref{appendix:uncertainty_estimates}, and/or \aref{appendix:estimator_properties}. To highlight some important ones, all three estimators for $T$ are unbiased, and the corresponding error-squared estimates $\sigmaest{}^2$ are non-negative. $\sigmaest{basic}^2$ is an unbiased estimator for $\var\left[\Test{basic}\right]$. On the other hand, $\sigmaest{centered}^2$ and $\sigmaest{cv}^2$ are slight overestimates of the variances of the corresponding $\Test{}$-s---they have non-negative biases of order $\mathcal{O}(N^{-3})$ and $\mathcal{O}(N^{-2})$, respectively. In all three cases, $\sigmaest{}$ is a slight overestimate for the standard deviation of $\Test{}$, with a non-negative bias of order $\mathcal{O}(N^{-1})$. Furthermore, in all three cases the relative Mean Squared Error (MSE) of $\sigmaest{}$ and $\sigmaest{}^2$, with respect to the variance and standard deviation of $\Test{}$, respectively is $\mathcal{O}(N^{-1})$. Here, the relative MSE is defined as follows:
\begin{align}
 \text{Relative MSE of } \hat{\varphi} \text{ with respect to } \varphi &\equiv \frac{\E\left[\left(\hat{\varphi}-\varphi\right)^2\right]}{\varphi^2} = \frac{\var\left[\hat{\varphi}\right] + \mathrm{Bias}^2\left[\hat{\varphi}\right]}{\varphi^2}\,,\label{eq:rmse_def}\\
 \qquad\text{where }\qquad\mathrm{Bias}\left[\hat{\varphi}\right] &\equiv \E\left[\hat{\varphi}\right] - \varphi\,.
\end{align}
\begin{table}[ht]
 \centering
 \caption{Properties of the different estimators defined in \sref{sec:formulas}.}
 \label{tab:estimator_properties}
 \small \setlength{\tabcolsep}{3pt}
 \begin{tabular}{M{.42\textwidth} M{.14\textwidth} M{.14\textwidth} M{.22\textwidth}}
  \toprule
  & \textbf{Basic Estimator} & \textbf{Centered Estimator} & \textbf{Control-Variate-Based Estimator}\\
  \midrule[.8pt]
  $\Test{}$ is unbiased. i.e., $\E\left[\,\Test{}\,\right]=T$ & \cmark & \cmark & \cmark\\
  \cmidrule(lr){1-4}
  $\sigmaest{}^2$ is non-negative, i.e., $\sigmaest{}^2 \geq 0$ & \cmark & \cmark & \cmark\\
  \cmidrule(lr){1-4}
  Permutation invariance: $\Test{}$ and $\sigmaest{}^2$ are unchanged under permutations of the dataset & \cmark & \cmark & \cmark\\
  \cmidrule(lr){1-4}
  Scale invariance: Under $X^{(i)}\mapsto a\,X^{(i)}$,\linebreak $\Test{}\mapsto a\,\Test{}$ and $\sigmaest{}^2\mapsto a^2\,\sigmaest{}^2$ & \cmark & \cmark & \cmark\\
  \cmidrule(lr){1-4}
  Shift invariance: Under $X^{(i)}\mapsto X^{(i)}+\delta$,\linebreak $\Test{}\mapsto \Test{}+\mu_W\delta$ and $\sigmaest{}^2\mapsto \sigmaest{}^2$ & \xmark & \cmark & {True if $(W-\mu_W)$ lies in the span of $(V_a-\mu_a)$-s}\\
  \cmidrule(lr){1-4}
  $\Test{}$ and $\sigmaest{}^2$ are invariant under invertible linear transformations of $V_a$ (with appropriate changes to $\mu_a$, $C_a$, and $K^+_{ab}$) & Not applicable & Not applicable & \cmark\\
  \cmidrule(lr){1-4}
  \multicolumn{4}{c}{\textbf{Scaling Properties}}\\
  \cmidrule(lr){1-4}
  Variance of $\Test{}$:~~ $\var\left[\,\Test{}\,\right]$ & $\mathcal{O}(N^{-1})$ & $\mathcal{O}(N^{-1})$ & $\mathcal{O}(N^{-1})$\\
  \cmidrule(lr){1-4}
  Bias of $\sigmaest{}^2$:~~ $\E\left[\sigmaest{}^2\right] - \var\left[\,\Test{}\,\right]$ & $= 0$ & $\geq 0$; $\mathcal{O}(N^{-3})$ & $\geq 0$; $\mathcal{O}(N^{-2})$\\
  \cmidrule(lr){1-4}
  Variance of $\sigmaest{}^2$:~~ $\var\left[\sigmaest{}^2\right]$ & $\mathcal{O}(N^{-3})$ & $\mathcal{O}(N^{-3})$ & $\mathcal{O}(N^{-3})$\\
  \cmidrule(lr){1-4}
  Bias of $\sigmaest{}$:~~ $\E\left[\sigmaest{}\right] - \sqrt{\var\left[\,\Test{}\,\right]}$ & $\geq 0$; $\mathcal{O}(N^{-1})$ & $\geq 0$; $\mathcal{O}(N^{-1})$ & $\geq 0$; $\mathcal{O}(N^{-1})$\\
  \cmidrule(lr){1-4}
  Variance of $\sigmaest{}$:~~ $\var\left[\sigmaest{}\right]$ & $\mathcal{O}(N^{-2})$ & $\mathcal{O}(N^{-2})$ & $\mathcal{O}(N^{-2})$\\
  \cmidrule(lr){1-4}
  Relative MSE of $\sigmaest{}^2$ and $\sigmaest{}$, with respect to $\var\left[\,\Test{}\,\right]$ and $\sqrt{\var\left[\,\Test{}\,\right]}$, respectively & $\mathcal{O}(N^{-1})$ & $\mathcal{O}(N^{-1})$ & $\mathcal{O}(N^{-1})$ \\
  \bottomrule
 \end{tabular}
\end{table}

\subsection{Computational Complexities}
The computational complexities of the various operations involved in performing a control variates based estimation are listed in \tref{tab:complexity}. In the first two rows, the weight $W$ and the controls $V_a$ are assumed to be factorizable, like $V_\mathrm{fac}$ in \eqref{eq:v_fac}. Note that $\Test{cv}$ and $\sigmaest{cv}$ are invariant under invertible linear transformations of the controls, i.e., transformations which do not change the span of $(V_a-m_a)$-s. One particular such transformation is the diagonalization the covariance matrix (third row of \tref{tab:complexity}), which would make the subsequent operations (fourth and fifth rows) faster. This would be worthwhile, especially if the same set of controls is to be used for multiple estimations. In the simulation study described in \sref{sec:experiments}, the values of $M$ and $N_{\Sigma K}$ are much larger than $N$ and $N_\mathrm{cv}$. In such situations, the computational bottlenecks (aside from running quantum circuits to collect $X^{(i)}$-values) in performing the control-variates based estimation would be the operations in the first two rows of \tref{tab:complexity}.
\begin{table}[ht]
 \centering
 \caption{The computational complexity of the different steps involved in performing a control-variates-based estimation. $N$ is the number of datapoints, $N_\mathrm{cv}$ is the number of control variates, and $N_{\Sigma K}$ is as defined in \eqref{eq:Npisumdef}.}
 \label{tab:complexity}
 \small
 \begin{tabular}{M{.4\textwidth} M{.45\textwidth}}
 \toprule
 \textbf{Operation} & \textbf{Computational complexity}\\
 \midrule[.8pt]
 Computing the values of  $\mu_W$, $\mu_a$ $C_a$, $K_{ab}$ for factorizable weight and control variates & $\mathcal{O}(N_{\Sigma K}\,N_\mathrm{cv}^2)$ if $\mathbf{K}$ is not inherently diagonal\linebreak $\mathcal{O}(N_{\Sigma K}\,N_\mathrm{cv})$ if $\mathbf{K}$ is inherently diagonal \\
 \cmidrule(lr){1-2}
 Computing the values of $W^{(i)}$ and $V_a^{(i)}$ for factorizable weight and control variates & $\mathcal{O}(N_{\Sigma K}\,N_\mathrm{cv}\,N)$\\
 \cmidrule(lr){1-2}
 (optional) Diagonalizing $\mathbf{K}$ and performing the corresponding rotations on $\mu_a$, $C_a$, $V_a^{(i)}$ & $\mathcal{O}(N_\mathrm{cv}^3 + N^2_\mathrm{cv}\,N)$\\
 \cmidrule(lr){1-2}
 Computing $\mathbf{K}^+$ from $\mathbf{K}$ & $\mathcal{O}(N_\mathrm{cv})$ if $\mathbf{K}$ is diagonal\linebreak $\mathcal{O}(N_\mathrm{cv}^3)$ if $\mathbf{K}$ is not diagonal\\
 \cmidrule(lr){1-2}
 Computing $\Test{cv}$ and $\sigmaest{cv}$ using \alref{alg:cv_estimator} & $\mathcal{O}(N_\mathrm{cv}\,N)$ if $\mathbf{K}$ is diagonal\linebreak $\mathcal{O}(N_\mathrm{cv}^2\,N)$ if $\mathbf{K}$ is not diagonal\\
 \bottomrule
 \end{tabular}
\end{table}

\section{Experiments and Results} \label{sec:experiments}
In this section we will demonstrate the application of our technique for the reduction of sampling overhead in PEC, using simulation experiments that are designed to imitate a realistic PEC setup studied in Ref.~\cite{vandenBerg2023}.

\subsection{Simulation Experiments Description}
\begin{figure}
 \centering
 \includegraphics[width=.8\textwidth]{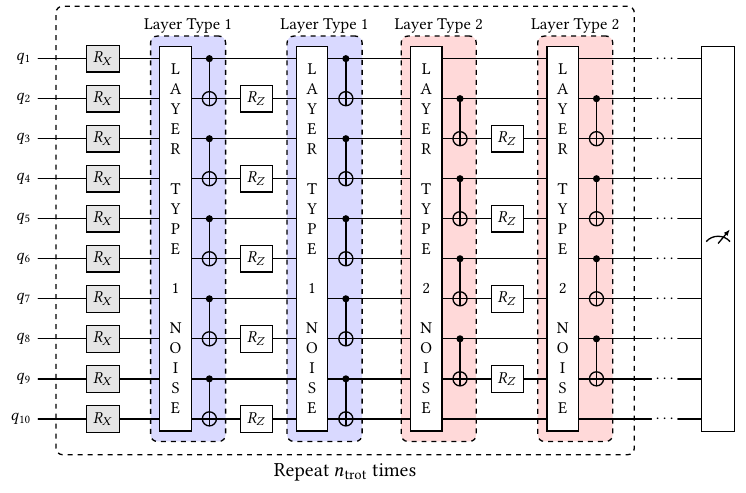}
 \caption{Diagram depicting the 10-qubit circuits, which simulate the Trotterized time-evolution, over $n_\mathrm{trot}$ Trotter steps, of an Ising model with ten sites. The 4-qubit circuits considered in this study are similar to the 10-qubits circuits, except that only the first four qubits in the diagram are kept.}
 \label{fig:circuit}
\end{figure}
\subsubsection{Target Quantum Circuits} Similar to Ref.~\cite{vandenBerg2023}, we consider circuits that simulate the Trotterized time-evolution of the one-dimensional transverse-field Ising model, for spin-chains with 4 and 10 sites. The circuits for the case of ten Ising-sites is depicted in \fref{fig:circuit}, where $R_X$ and $R_Z$ are one-qubit rotation gates with matrix representations given by
\begin{alignat}{2}
 R_X &\equiv \begin{bmatrix}
  \cos(\theta_x/2) & -i\sin(\theta_x/2)\\
  -i\sin(\theta_x/2) & \cos(\theta_x/2)
 \end{bmatrix}\,,\qquad &&\text{where } \theta_x = 2\,h\,\delta_t\,,\\[1em]
 R_Z &\equiv \begin{bmatrix}
  \exp(-i\,\theta_z/2) & 0\\
  0 & \exp(i\,\theta_z/2)
 \end{bmatrix}\,,\qquad &&\text{where } \theta_z = -2\,J\,\delta_t\,.
\end{alignat}
Here $h$ and $J$, both real-valued, are the Ising model parameters denoting the transverse magnetic field and the exchange coupling between neighboring sites, respectively. $\delta_t$ is a real-valued parameter denoting the Trotter step-size and $n_\mathrm{trot}$ is the number of Trotter steps simulated by the circuit. The circuits for the four Ising-sites case are identical to those for the ten Ising-sites case, except that only the first four qubits (and the gates which act exclusively on them) are kept. We use the following values for $h$, $J$, and $\delta_t$:
\begin{alignat}{4}
 &\text{In the 4-qubit circuits: }\quad &&h=1\,,\quad &&J=0.15\,,\quad &&\delta_t=0.5\,,\\
 &\text{In the 10-qubit circuits: }\quad &&h=1\,,\quad &&J=-0.5236\,,\quad &&\delta_t=0.5\,.
\end{alignat}
For the 4-qubit case, we run circuits with $n_\mathrm{trot}$ values ranging from 1 to 15. For the 10-qubit case, $n_\mathrm{trot}$ ranges from 1 to 7. We will refer to (the noiseless versions of) the $15+7=22$ circuits described here as ``target circuits'' in the rest of this paper.

\subsubsection{Noise Model}
We use the Pauli--Lindblad noise model studied in Ref.~\cite{vandenBerg2023} in our simulations. All one-qubit gates are taken to be noiseless. Each layer of two-qubit gates in the circuits has an associated stochastic noise, which is modeled as a Pauli noise-channel that acts immediately \emph{before} the noiseless version of that layer, as shown in \fref{fig:circuit}. The Pauli noise channel for a given layer is modeled as arising from a sparse set of local (non-identity) Pauli interactions, each of which acts either on only one qubit or on two neighboring qubits in the circuit (with qubits $q_i$ and $q_{i+1}$ taken to be neighbors for all $1\leq i<\text{number of qubits}$). The total number of such Pauli interactions, say $N_\mathrm{Paulis}$, is given by.
\begin{alignat}{2}
 N_\mathrm{Paulis} &= (4\times 3) + (3\times 9) = 39\,,\quad&& \text{for 4-qubit-circuits}\,,\\
 N_\mathrm{Paulis} &= (10\times 3) + (9\times 9) = 111\,,\quad&& \text{for 10-qubit-circuits}\,.
\end{alignat}
Note that in all the target circuits with a given number of qubits (say 10), there are only two types of layers containing two-qubit gates; these are labeled as ``Layer Type 1'' and ``Layer Type 2'' in \fref{fig:circuit}.
All layers of type 1 are assumed to have the same Pauli noise channel, and likewise for layers of type 2.

Concretely, the operation $U_\text{\sc t}$, which corresponds to a two-qubit-gates-layer of type $\text{T}\in\{1, 2\}$, is given by $U^\mathrm{ideal}_\text{\sc t} \circ \Lambda_\text{\sc t}$, where $U^\mathrm{ideal}_\text{\sc t}$ is the noiseless version of $U_\text{\sc t}$ and $\Lambda_\text{\sc t}$ is a stochastic noise channel given by
\begin{align}
 \Lambda_\text{\sc t}(\rho) &\equiv \left[\Big.\Lambda_{\text{\sc t}, N_\mathrm{Paulis}}\circ\dots\circ\Lambda_{\text{\sc t},1}\right](\rho)\,,\\
 \text{where}\,,\qquad\qquad\Lambda_{\text{\sc t},i}(\rho) &\equiv \left(1-\epsilon_{\text{\sc t},i}\right)\,\rho + \epsilon_{\text{\sc t},i}\,P_i\,\rho\,P_i^\dagger\,,\qquad \text{with } 0\leq \epsilon_{\text{\sc t},i} < 0.5
\end{align}
Here $\Lambda_{\text{\sc t},i}$ is the noise-channel associated with the $i$-th local Pauli interaction (represented by the unitary operator $P_i$), and $\epsilon_{\text{\sc t},i}$ is its error rate. The noise-model is completely specified by the values of the parameters $\epsilon_{\text{\sc t},i}$, for all relevant $P_i$-s and both values of $\text{T}$.

We use the same values for the parameters as in Ref.~\cite{vandenBerg2023}
(listed in \aref{appendix:noise_model_params} for completeness).
In that work, the parameter-values were estimated from hardware experiments on IBM's quantum processor named {\sc ibm\_hanoi} \cite{ibmhanoi} using a novel learning protocol. Furthermore, performing PEC on {\sc ibm\_hanoi} with these noise-parameter-values was demonstrated to provide results which accurately matched noiseless-simulations. This evidences that the noise-model a) was sufficiently accurate for the purposes of Ref.~\cite{vandenBerg2023},\footnote{In Ref.~\cite{vandenBerg2023}, the hardware noise is adapted using the Pauli twirling technique, so that the noise operator of each layer is a Pauli channel. This is essential for the chosen noise model to accurately describe the hardware noise.} and b) is realistic for the purposes of our study.

\subsubsection{PEC Implementation}
Our PEC implementation, described here for completeness, also mirrors Ref.~\cite{vandenBerg2023}. The inverse of $\Lambda_\text{\sc t}$ is given by
\begin{align}
 \Lambda_\text{\sc t}^{-1}(\rho) &= \left[\Lambda_{\text{\sc t},i}^{-1}\circ\dots\circ\Lambda_{\text{\sc t},N_\mathrm{paulis}}^{-1}\right](\rho)\,,\qquad\qquad\text{where}\,,\\
 \Lambda_{\text{\sc t},i}^{-1}(\rho) &= \frac{1}{1-2\,\epsilon_{\text{\sc t},i}}\left[\Big.\left(1-\epsilon_{\text{\sc t},i}\right)\,\rho - \epsilon_{\text{\sc t},i}\,P_i\,\rho\,P_i^\dagger\right]\,.
\end{align}
Note that the operations $\Lambda_{\text{\sc t},i}^{-1}$ and $\Lambda_{\text{\sc t},j}^{-1}$ commute with each other. The effect of the noise on the computation-results can be cancelled by quasi-probabilistically performing these inverse channels immediately \emph{before} performing the corresponding noisy two-qubit-gate-layers. Operationally, for every local Pauli noise-term $\Lambda_{\text{\sc t},i}$ (of every noisy layer in every Trotterization step in the circuit),
one quasi-probabilistically performs $\Lambda_{\text{\sc t},i}^{-1}$ by either 
inserting or not inserting (with a certain probability) the corresponding Pauli-gate $P_i$, immediately before the layer.
The resulting random PEC ``mitigation circuit instance'' corresponds to the operation $\mathcal{E}_{(k_1,\dots,k_M)}$ in \eqref{eq:qpd}.

Under the notation used in this paper, $M$ is the total number of such binary decisions to be made to generate a single mitigation circuit instance.
\begin{align}
 M &= 4\times N_\mathrm{Paulis} \times n_\mathrm{trot}\,,\\
 K_m &= 2\,,\qquad\qquad\qquad\qquad\forall m\in\{1,\dots,M\}\,.
\end{align}
For each value of $m$, let $\epsilon(m)$ represent the error rate $\epsilon_{\text{\sc t},i}$ of the corresponding noise-term, and let $k_m$ being $1$ and $2$ correspond to the choice of not inserting and inserting the corresponding Pauli, respectively. Under our notation
\begin{align}
 q_m(1) &= \frac{1-\epsilon(m)}{1-2\,\epsilon(m)}\,,\qquad\qquad q_m(2) = \frac{-\,\epsilon(m)}{1-2\,\epsilon(m)}\,.
\end{align}
Following the standard practice, we chose the PEC sampling probabilities $p_m$ to be proportional to the absolute value of $q_m$:
\begin{alignat}{2}
 p_m(1) &= 1-\epsilon(m)\,,\qquad &&p_m(2) = \epsilon(m)\,,\\
 \Longrightarrow w_m(1) &= +\gamma_m\,,\qquad &&w_m(2) = -\gamma_m\,,\qquad\qquad\text{where}\quad\gamma_m = \frac{1}{1-2\,\epsilon(m)}\,.
\end{alignat}
For the 4-qubit circuits, the value of $\gamma=\prod_m \gamma_m$, which is a gauge of the sampling overhead, ranges from approximately $1.13$ for $n_\mathrm{trot}=1$ to approximately $6.48$ for $n_\mathrm{trot}=15$. For the 10-qubit circuits, $\gamma$ ranges for approximately $1.66$ for $n_\mathrm{trot}=1$ to approximately $34.95$ for $n_\mathrm{trot}=7$.

\subsubsection{Simulation Details}
All the quantum circuit simulations in this study were performed using Qiskit \cite{Qiskit}. For each target circuit considered in this paper, a total of 200 mitigation circuit instances were sampled. Each of these mitigation circuit instances was simulated with noise to get $1024$ Pauli-Y measurements of all the qubits and, independently, another $1024$ Pauli-Z measurements of all the qubits. The stochastic noise considered in this paper was incorporated into the simulations using the trajectory simulation technique \cite{2111.02396}. Stated explicitly, to simulate each shot of a given mitigation circuit instance, a different noise-trajectory-instance was randomly sampled from the appropriate distribution and incorporated into the circuit.\footnote{The Pauli gates inserted for the purposes of PEC-based mitigation and/or noise-simulation were grouped and simplified where appropriate.}

In addition to these, for each target circuit, we also performed a) noisy simulations without PEC ($1024$ shots each in the Pauli-Y and Pauli-Z measurement bases), and b) noiseless simulations ($1024^2$ shots each in the Pauli-Y and Pauli-Z measurement bases) for validation purposes. Overall, the generation of the data used in this study involved simulating
\begin{itemize}
 \item $15\times(2 + (1 + 200)\times 2\times 1024) = 6{,}174{,}750$ four-qubit circuit instances, and
 \item $7\times(2 + (1 + 200)\times 2\times 1024) = 2{,}881{,}550$ ten-qubit circuit instances
\end{itemize}
of varying depths.\footnote{Some of these circuit instances may have been identical to each other. Each shot in a noisy simulation is counted as a circuit instance to be simulated.} In addition to the results of the circuits, we also recorded the value of the indices $(k_1,\dots,k_M)$ for each mitigation circuit instance, since this is needed for performing our control variates technique.

\subsection{Estimation Tasks}
Let $\mathcal{M}_{q_i,y}\in\{-1, +1\}$ represent a Pauli-Y measurement outcome (in a single shot) for the $i$-th qubit. Likewise, let $\mathcal{M}_{q_i,z}$ represent a Pauli-Z measurement outcome. For a circuit with $Q$ qubits, let $O_{\kappa,y}$ and $O_{\kappa,z}$ represent the average of all possible weight-$\kappa$ Pauli-Y and Pauli-Z observables, respectively, that one can construct using the $Q$ qubits.
\begin{align}
 O_{\kappa,y} &\equiv \frac{\kappa!\,(Q-\kappa)!}{Q!}~~\sum_{\substack{1\leq q_1<\ldots\\\ldots<q_\kappa\leq Q}}
 \left[\prod_{i=1}^\kappa \mathcal{M}_{q_i,y}\right]\,,\qquad\qquad\forall\kappa\in\{1,\dots,Q\}\,,\\
 O_{\kappa,z} &\equiv \frac{\kappa!\,(Q-\kappa)!}{Q!}~~\sum_{\substack{1\leq q_1<\ldots\\\ldots<q_\kappa\leq Q}} \left[\prod_{i=1}^\kappa \mathcal{M}_{q_i,z}\right]\,,\qquad\qquad\forall\kappa\in\{1,\dots,Q\}\,.
\end{align}
Furthermore, let $O_{2\text{-nearest},y}$ and $O_{2\text{-nearest},z}$ represent the average of all weight-2 Pauli-Y and Pauli-Z observables, respectively, which involve neighboring qubits in the circuit.
\begin{align}
 O_{2\text{-nearest},y} &\equiv \frac{1}{Q-1}\sum_{q=1}^{Q-1}~\mathcal{M}_{q,y}\,\mathcal{M}_{q+1,y}\,,\\
 O_{2\text{-nearest},z} &\equiv \frac{1}{Q-1}\sum_{q=1}^{Q-1}~\mathcal{M}_{q,z}\,\mathcal{M}_{q+1,z}
\end{align}
For each target circuit, we will estimate the noiseless-expectations of the observables $O_{\kappa,y}$ and $O_{\kappa,z}$ (with $\kappa$ running from $1$ to $Q$), as well as $O_{2\text{-nearest},y}$ and $O_{2\text{-nearest},z}$ using the different formulas in \sref{sec:formulas}. The quantity $X^{(i)}$ in these formulas is the shot-average of the relevant observable (over 1024 shots), for the $i$-th mitigation circuit instance (out of 200).

We will refer to the estimation of a given observable for a given target circuit as an ``estimation task'' in the rest of this paper; a total of $(15\times 2\times 5) + (7\times 2\times 11) = 304$ estimation tasks are considered in this study. Note that for a given target circuit, the estimations for all the Pauli-Y-based-observables are performed using the same measurement-results, and likewise for all the Pauli-Z-based-observables. Furthermore, the exact same 200 PEC circuit-instances are used for the Pauli-Y and the Pauli-Z measurements (the noise trajectories, however, are sampled independently). Consequently, the results of different estimation tasks corresponding to the same target circuit are mutually dependent.

\subsection{Choice of Control Variates} \label{sec:experiments_cvs}
We consider the following five sets of control variates to perform the control-variates-based estimations. CV set nos. 2, 3, and 5 contain factorizable control variates $V_a$ of the form
\begin{subequations}\label{eq:Va_form}
\begin{align}
 V_a(k_1,\dots,k_M) &\equiv \prod_{m=1}^M \frac{\tilde{v}_{a,m}(k_m)}{\mathtt{norm\_factor}_{a,m}}\,,\qquad\text{where}\\
 \mathtt{norm\_factor}_{a,m} &\equiv \sqrt{\sum_{k_m=1}^M p_m(k_m)\,\left[\big.\tilde{v}_{a,m}(k_m)\right]^2~}
\end{align}
\end{subequations}
here, the scaling factors $\mathtt{norm\_factor}_{a,m}$ are included simply to impose the property $\E\left[V_a^2\right]=1$; in principle (with arbitrary-precision arithmetic), the scaling factor will not affect the results of the estimations performed using these controls.

\subsubsection{CV Set 1}
This set only contains only one control $V_1\equiv \sign(W)$, which is proportional to $W$.

\subsubsection{CV Set 2}
This set contains five controls of the form in \eqref{eq:Va_form}, with values of $\tilde{v}_{a,m}$ given by
\begin{align}
 \tilde{v}_{a,m}(1) = \theta_a+1\,,\qquad&\qquad \tilde{v}_{a,m}(2) = \theta_a-1\,,\\
 \text{with}\quad\theta_1 = -1.5\,,\quad \theta_2 = -0.75\,,\quad &\theta_3 = 0\,,\quad \theta_4=0.75\,,\quad \theta_5=1.5\,.
\end{align}
Note that $V_3$ is proportional to $W$. The other controls can be thought of as modifications of the weight. The effectiveness of $W$ as a control, as discussed in \sref{sec:weight_as_control}, motivates this construction. The values of $\theta_a$ were not optimized in any meaningful way.\footnote{The parameters were just chosen to be equidistant, include $0$ and exclude $\pm 1$ (to prevent the controls from becoming 0 with high probability).}


\subsubsection{CV Set 3}
This set contains five controls of the form in \eqref{eq:Va_form}, with values of $\tilde{v}_{a,m}$ given by
\begin{align}
 \tilde{v}_{a,m}(1) = 1\,,\qquad&\qquad \tilde{v}_{a,m}(2) = \phi_a-1\,,\\
 \text{with}\quad\phi_1 = -3\,,\quad \phi_2 = -1.5\,,\quad &\phi_3 = 0\,,\quad \phi_4=1.5\,,\quad \phi_5=3\,.
\end{align}
As in CV set 2, $V_3$ is proportional to $W$ and the other controls can be thought of as modifications of the weight.\footnote{One can reproduce CV set 3 with different choices for $\theta_a$-s in CV set 2, and vice versa.}


\subsubsection{CV Set 4}
In CV set 4, we create controls which are products of subsets $w_m$-s, grouped based on the qubits on which the different noise terms act. For a circuit with $Q$-qubits, this set contains $2Q$ controls given by
\begin{alignat}{2}
 V_q &~~\equiv~~ \prod_{\substack{m \in \text{Noise terms which }\\\quad\text{only affect qubit }q}} \sign\left(\big.w_m\right)\,,\qquad\qquad&&\forall q\in\{1,\dots, Q\}\,,\\
 V_{Q+q} &~~\equiv~~ \prod_{\substack{m \in \text{Noise terms which affect }\\\quad\text{qubits }q\text{ and }q+1, \text{ both}}} \sign\left(\big.w_m\right)\,,\qquad\qquad&&\forall q\in\{1,\dots, Q-1\}\,,\\
 V_{2Q} &~~\equiv~~ \prod_{m\in\text{All noise terms}} \sign\left(\big.w_m\right) \qquad\qquad = \sign\left(\big.W\right)\,,
\end{alignat}
with $V_{2Q}$ being proportional to $W$.

\subsubsection{CV Set 5}
This set contains five controls of the form in \eqref{eq:Va_form}, with each parameter $\tilde{v}_{a,m}(k_m)$ sampled independently from the standard normal distribution. We use the same values of $\tilde{v}_{a,m}(k_m)$ for all estimations performed for a given target circuit. However, the parameters are sampled independently for different target circuits.\footnote{Note that each of the five cv sets corresponds to different controls for different target circuits, despite the cv set id being the same.} This set is not expected to be effective for variance reduction purposes. It is included only to demonstrate that the success of our technique depends on the ability to construct effective control variates.

\subsection{Note on numerical stability}

The value of $M$ gets sufficiently large in our experiments that one can run into underflow or overflow issues with finite precision arithmetic, when multiplying $M$ numbers to compute the controls as in
\eqref{eq:Va_form}.\footnote{If $Y \equiv y_1\times\dots\times y_M$, with the $y_m$-s being independent random variables, the dynamic range (DR) (i.e., ratio between the largest and smallest values) of $|Y|$ equals the product of DRs of $|y_m|$-s. In this way, the DR of $|Y|$ grows exponentially in $M$.
Similarly, the expected value of $Y^2$ (or $|Y|$) is the product of the expected values of $y_m^2$-s (or $|y_m|$-s); this can also grow exponentially if left unchecked. The scaling factors in \eqref{eq:Va_form} solve the latter issue by ensuring that $\E[V_a^2] = 1$, but this does not solve the DR issue. Note that the DR issue is not encountered when multiplying $w_m$-s of the form $\pm \gamma_m$ to compute $W$, since the DR of each $|w_m|$ is 1 in this case.} To mitigate this problem, we use log-domain computations where necessary; this involves representing each real-valued quantity $r$ with the pair $(\sign(r), \log(|r|))$, and performing a) multiplications/divisions using additions/subtractions of the logarithms and b) summations using a numerically stable implementation of the log-sum-exp operation \cite{2020SciPy-NMeth}. No underflow/overflow issues were encountered with this measure in place.

\subsection{Results} \label{sec:experiment_results}

For each of the $304$ estimation tasks, we estimate $T$ (the expected value of the observable under a noiseless computation) using seven different estimators---the basic estimator, the centered estimator, and the cv-based estimator, using each of the five cv sets in \sref{sec:experiments_cvs}. In this section we will refer to the seven estimators as estimation ``methods''.

The estimation results in the 4-qubits case, for the observables $O_{1,y}$ and $O_{1,z}$ are shown in \fref{fig:spiral}. The middle and right panels show the results for estimations performed using the basic estimator and cv-based estimator with cv set 1, respectively (solid lines). Note that both estimations are performed with the exact same data. The error bars depict $\pm \sigmaest{method}$ for the respective estimation method. The left panel of \fref{fig:spiral} shows the results from the noisy simulations without PEC (solid line). Each panel also has the results from noiseless simulations for reference (dotted lines). In each case, the results for sequential values of $n_\mathrm{trot}$ from 1 to 15 are connected to form the spirals shown in the figures. The error bars shown with the ``noisy no PEC'' results and noiseless results correspond to $\sigmaest{no-pec}$, whose square is given by
\begin{align}
 \sigmaest{no-pec}^2 \equiv \frac{\text{sample variance of the observable outcomes across the different shots}}{\text{number of shots}}\,. \label{eq:sigmasq_nopec}
\end{align}
\begin{figure}
 \centering
 \includegraphics[width=\textwidth]{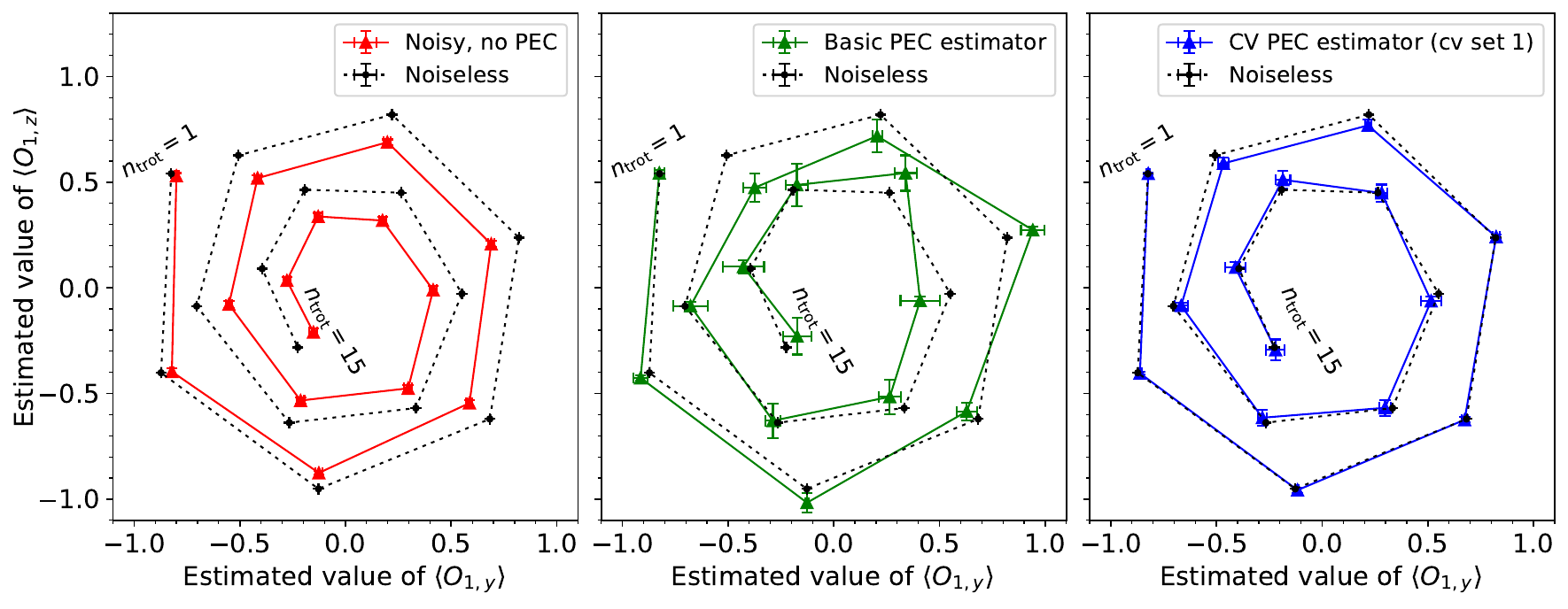}
 \caption{PEC estimation results for the observables $O_{1,y}$ and $O_{1,z}$ for the 4-qubits circuits. The middle and right panels show the results for estimations performed using the basic estimator and cv-based estimator with cv set 1. The left panel shows the results from noisy simulations without PEC. The results for sequential values of $n_\mathrm{trot}$ from 1 (outermost) and 15 (innermost) are conected to form spirals.}
 \label{fig:spiral}
\end{figure}

It can be seen that the noisy results in the left panel of \fref{fig:spiral} are inconsistent with the corresponding noiseless results. On the other hand, the PEC estimates in the middle and right panels are consistent with the noiseless results, within statistical fluctuations of magnitudes comparable to the corresponding error bars. This is not surprising since the noise model used in the PEC estimations exactly matches the noise model used in the simulation of the circuits. The error bars, in the right panel, for the cv-based estimator are substantially smaller than the corresponding error bars, in the middle panel, for the basic estimator. This demonstrates a variance reduction and, equivalently, a sampling overhead reduction. Next we will quantify this variance reduction across the various estimations performed in this study.

One way to compare the precisions of the different estimators is to perform each estimation task multiple times with each method using different independent datasets, in order to estimate the corresponding precision.
However, this is computationally too expensive for the purposes of this study, so we use the values of $\sigmaest{method}$ from the estimation formulas as a precision estimate.\footnote{Note that $\sigmaest{method}$ itself is a random variable with statistical fluctuations. With 200 PEC mitigation circuits the standard deviation of $\sigmaest{method}$ will be about $5\%$ of $\sqrt{\var[\Test{method}]}$, i.e., the uncertainty on the uncertainty-estimate will be about 5\% (provided the number of controls is not too large). This precision is sufficient for the purposes of this paper.} We will first validate that our formulas provide reasonably unbiased precision estimates. For every single estimation task, we define a Studentized residual as follows:
\begin{align}
 \text{Studentized residual} &\equiv \frac{\Test{method} - \Test{noiseless}}{~~~~\sqrt{\sigmaest{method}^2 + \sigmaest{noiseless}^2}~~~~} \label{eq:studentized_residual}
\end{align}
where $\Test{noiseless}$ is the estimate of $T$ from noiseless simulations and $\sigmaest{noiseless}^2$ is computed using the formula for $\sigmaest{no-pec}^2$ in \eqref{eq:sigmasq_nopec}. \Fref{fig:residual_cdf} depicts the empirical cumulative distribution function (CDF) of the absolute value of the Studentized residual under each estimation method. For each method, empirical CDF is computed from the 304 estimations performed using that method. For reference, the CDF of the absolute value of a standard normal distributed random variable is also shown in \fref{fig:residual_cdf}. The similarity of the empirical CDFs to the reference CDF validates using $\sigmaest{method}$ as an estimate of precision.\footnote{Some unimportant caveats here are that a) the estimations of different observables for the same target circuit are dependent on each other, and b) the Studentized residual is expected to only approximately follow the standard normal distribution.}
\begin{figure}
 \centering
 \includegraphics[width=.5\textwidth]{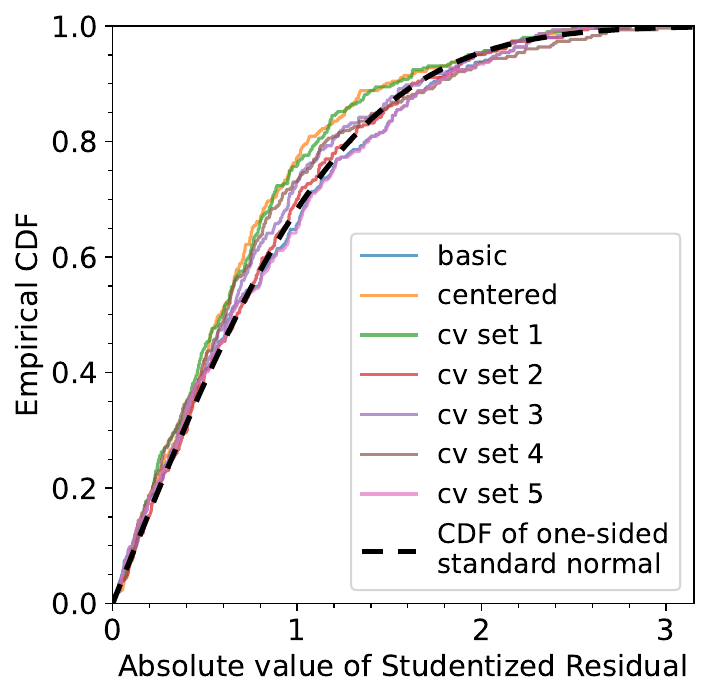}
 \caption{The empirical cumulative distribution function (CDF) of the absolute value of the Studentized residual defined in \eqref{eq:studentized_residual} under the different estimation methods. For each method, the empirical CDF is computed from the 304 estimations performed using that method.}
 \label{fig:residual_cdf}
\end{figure}

Noting that the variance of $\Test{basic}$ is proportional to $1/N$, one can define a ``Data Amplification Factor'' (DAF) of an estimation method
for a given estimation task as follows:
\begin{align}
 \text{Data Amplification Factor} &\equiv \frac{\sigmaest{basic}^2}{\sigmaest{method}^2}\,.
\end{align}
This is an estimate of the factor by which the number of datapoints $N$ has to be increased, in order for the basic estimator to reach the precision achieved (using only the original dataset) by a given method. The amplification factor can also be reparameterized in terms of a ``Sampling Overhead Reduction Percentage'' (SORP) as follows:
\begin{align}
 \text{Sampling Overhead Reduction Percentage} &\equiv \left(1-\frac{\sigmaest{method}^2}{\sigmaest{basic}^2}\right)\times 100\%\,.
\end{align}
The performance improvement from using a given method (when compared to the basic estimator) is roughly equivalent to reducing the value of $\gamma^2$ by the SORP. We will use these two performance metrics to evaluate the variance and sampling overhead reduction achieved by the different estimation methods.\footnote{Note that both these metrics are subject to statistical fluctuations, since they are based on data-based estimates of precision. Also, the variance under estimation methods other than the basic estimator is only approximately proportional to $1/N$; these metrics will increase (i.e., improve) slightly if the number of datapoints $N$ is increased.}

\Fref{fig:4-qubits-amp} shows the DAF and the SORP for the different estimation methods (except the basic estimator), for all the estimation tasks relevant to 4-qubit circuits. \Fref{fig:10-qubits-amp} shows the same for 10-qubit circuits---for brevity, the results for only 10 out of 22 observables are shown.
\begin{figure}
 \centering
 \includegraphics[width=\textwidth]{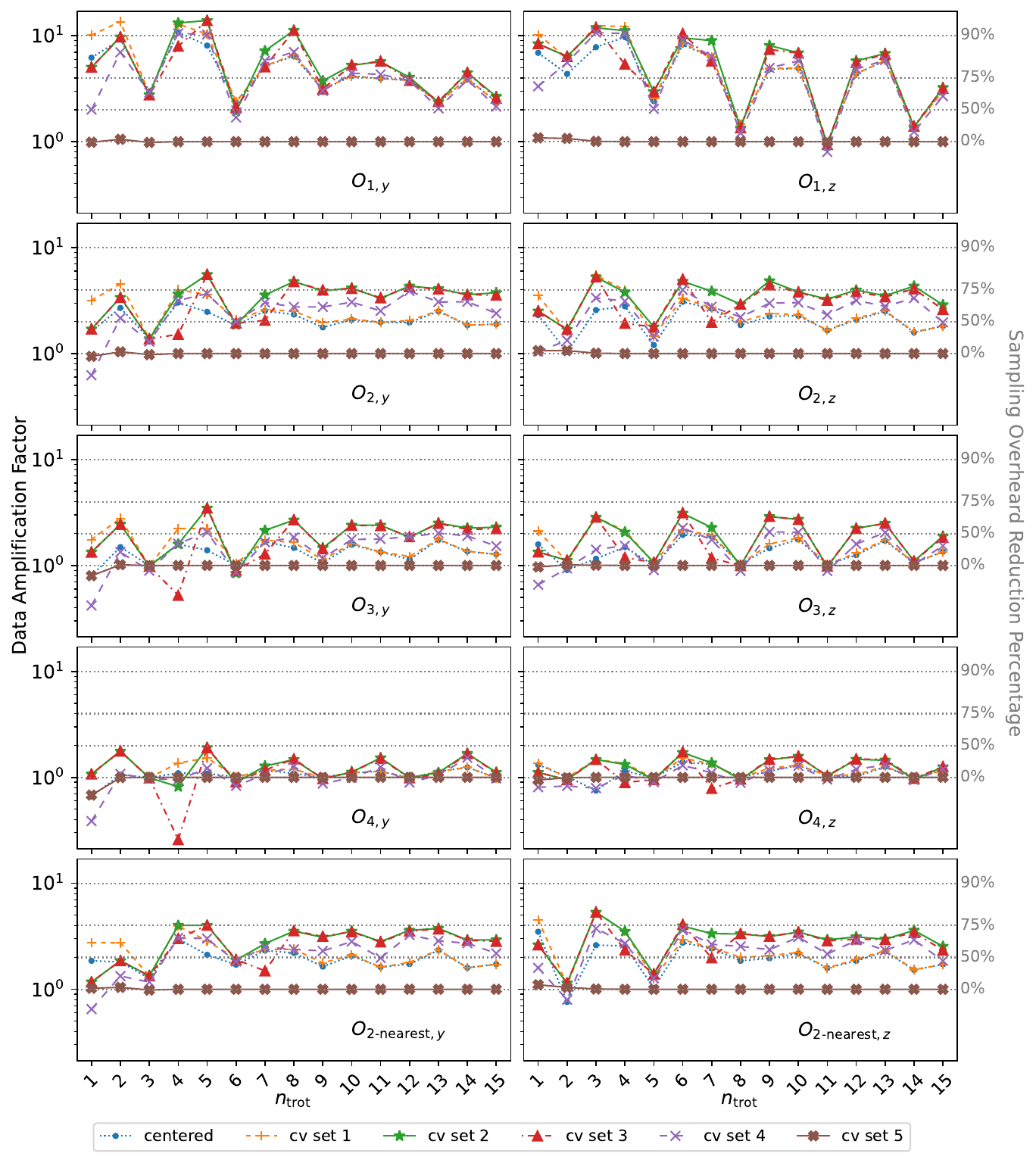}
 \caption{The Data Amplification Factor (DAF) for the different estimation methods (except the basic estimator), for all the estimation tasks relevant to the 4-qubit circuits. Each panel corresponds to a different observable (indicated within the panel), and shows the DAF values for each estimation method for $n_\mathrm{trot}\in\{1,\dots,15\}$. The horizontal dotted lines correspond to sampling overhead reduction percentages of $0\%$, $50\%$, $75\%$, and $90\%$.}
 \label{fig:4-qubits-amp}
\end{figure}
\begin{figure}
 \centering
 \includegraphics[width=\textwidth]{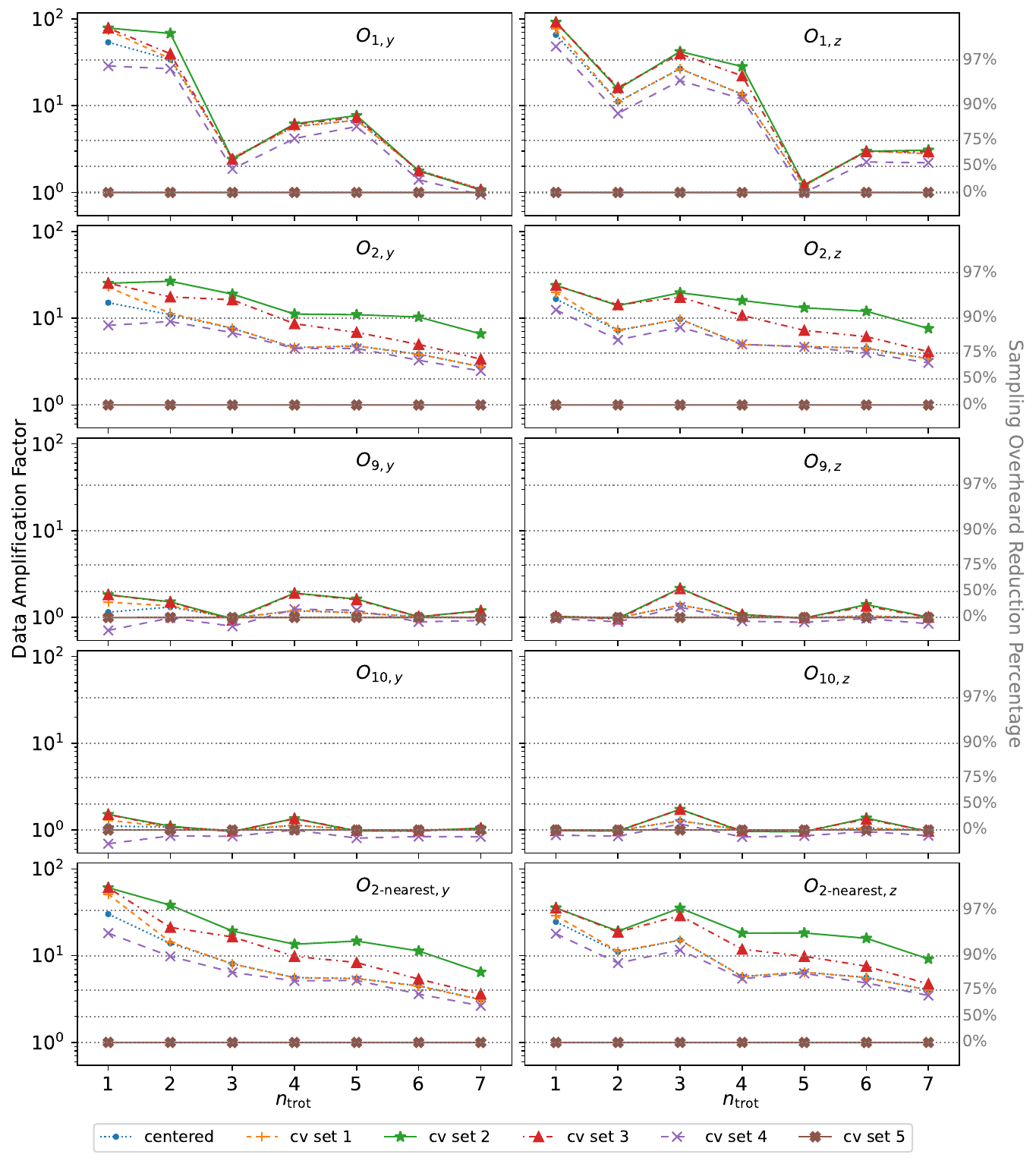}
 \caption{Similar to \fref{fig:4-qubits-amp}, but for the estimation tasks relevant to the 10-qubit circuits. For brevity, panels for only 10 of the 22 observables are known. The horizontal dotted lines correspond to sampling overhead reduction percentages of $0\%$, $50\%$, $75\%$, $90\%$, and $97\%$.}
 \label{fig:10-qubits-amp}
\end{figure}
Some specific observations from \fref{fig:4-qubits-amp} and \fref{fig:10-qubits-amp} are as follows:
\begin{itemize}
 \item The variance reduction achieved by a given CV set depends on the estimation task at hand.
 \item As anticipated in \sref{sec:experiments_cvs}, cv set 5 achieves little to no variance reduction.
 \item For several estimation tasks, just using $W$ as a control (CV set 1) offers more than a 50\% reduction in the sampling overhead.
 \item In most estimation tasks, the centered estimator achieves a variance reduction comparable to CV set 1.
 \item CV sets 2 and 3 outperform CV set 1 for several estimation tasks.
 \item CV set 4 does not significantly outperform CV set 1 in a number of estimation tasks, despite the latter being a subset of the former. However, in the estimation tasks corresponding to $O_{2,y}$, $O_{2,z}$, $O_{\text{2-nearest},y}$, and $O_{\text{2-nearest},z}$, CV set 4 does provide a higher sampling overhead reduction than CV set 1.
\end{itemize} 
The overall variance reduction performance of the different methods is summarized in \tref{tab:performance_summary}. In particular, CV sets 2 and 3 offer a sampling overhead reduction of at least 63\% and 58\%, respectively, in at least 50\% of the estimation tasks.\footnote{A conservative version of this statistic is quoted in the abstract of this paper.} In at least 10\% of the estimation tasks, they offered a sampling overhead reduction of at least 91\% and 89\%, respectively.
\begin{table}[ht]
 \centering
 \caption{The $25^\mathrm{th}$, $50^\mathrm{th}$, $75^\mathrm{th}$, and $90^\mathrm{th}$ percentiles of the data amplification factor values (and sampling overhead reduction percentage values in parentheses) achieved by each of the estimation methods over the 304 estimation tasks. To illustrate how to interpret this table, the highlighted cell means that in at least $(100-75=)$ 25\% of the 304 estimations, cv set 2 gave an amplification factor of at least 5.27 and a corresponding sampling overhead reduction of at least 81\% (the ``at least''-s are indicative of the direction of rounding).}
 \label{tab:performance_summary}
 \small
 \begin{tabular}{M{.15\textwidth} M{.15\textwidth} M{.15\textwidth} M{.15\textwidth} M{.15\textwidth}}
 \toprule
 \multirow{2}*{Method} & \multicolumn{4}{c}{Data Amplification Factor (Sampling Overhead Reduction Percentage)}\\
 \cmidrule(lr){2-5}
 & $25^\mathrm{th}$ percentile  & $50^\mathrm{th}$ percentile  & $75^\mathrm{th}$ percentile  & $90^\mathrm{th}$ percentile \\
 \midrule[.8pt]
 centered & 1.13 (11\%) & 1.67 (40\%) & 2.87 (65\%) & 6.24 (83\%) \\
 \cmidrule(lr){1-5}
 cv set 1 & 1.17 (14\%) & 1.79 (44\%) & 3.32 (69\%) & 6.60 (84\%) \\
 \cmidrule(lr){1-5}
 cv set 2 & 1.39 (28\%) & 2.76 (63\%) & \cellcolor{red!20} 5.27 (81\%) & 11.89 (91\%) \\
 \cmidrule(lr){1-5}
 cv set 3 & 1.35 (25\%) & 2.43 (58\%) & 4.51 (77\%) & 9.71 (89\%) \\
 \cmidrule(lr){1-5}
 cv set 4 & 1.05 (4\%) & 1.80 (44\%) & 3.06 (67\%) & 5.75 (82\%) \\
 \cmidrule(lr){1-5}
 cv set 5 & 0.99 (-1\%) & 1.00 (0\%) & 1.00 (0\%) & 1.00 (0\%) \\
 \bottomrule
 \end{tabular}
\end{table}

\section{Summary, Conclusions, and Outlook} \label{sec:conclusions}

In this work, we have proposed using the control variates variance reduction technique to decrease the sampling overhead in quantum computations based on quasi-probabilistic decompositions. We adapted the standard control variates technique in a few ways, in order to make the technique more appropriate for QPD-based estimations. We also described a generic way to construct control variates for use with our technique. Collectively, the techniques introduced in this paper have been dubbed ``CV4Quantum''. We applied our method to probabilistic error cancellation in a realistic simulations-based study, and found significant reductions in sampling overheads in many of the PEC estimations considered. In the rest of this section, we will discuss some directions for future research.

As discussed in this paper, the performance gain achieved by our technique depends on the effectiveness of the control variates used for the particular estimation task at hand. The construction of effective controls to use with our technique could be an impactful line of research. To motivate this, one can consider the ceiling on the variance reduction achievable using the control variates technique. As discussed in \sref{sec:golden-control}, the a priori unknown ``golden-ticket'' control variate can completely eliminate the inter-operation variance term in \eqref{eq:variance_decomposition}. The corresponding performance improvement can be translated into a ``ceiling DAF'' and a ``ceiling SORP'' value.\footnote{This was done by estimating the intra- and inter- operation variances from the experimental data.} In our study, the $25^\mathrm{th}$, $50^\mathrm{th}$, $75^\mathrm{th}$, and $90^\mathrm{th}$ percentiles (over the 304 estimation tasks) of the ceiling DAF (ceiling SORP) values are 4.55 (78\%), 35.03 (97\%), 145.43 (99.3\%), 405.54 (99.7\%), respectively. Comparing this to \tref{tab:performance_summary}, we see that there is room to improve the variance reduction by refining the choice of control variates. This is understandable because a) the inter-operation variance is the dominant source of uncertainty in our study and b) refining the choice of controls for each estimation task was not a goal of this paper.

In this paper, we restricted ourselves to controls whose expectations may be exactly precomputed. However, the control variates technique may also be used with controls whose expectations are only approximately known (e.g., estimated using Monte Carlo sampling), as long as the uncertainties on the means of the controls are comparatively small, as explained in Section 3.2 of Ref.~\cite{doi:https://doi.org/10.1002/9781118445112.stat07975}. This opens up the possibility of using techniques like machine learning to construct/train functions of the indices $(k_1,\dots,k_M)$ that are highly correlated with the observable whose expectation is to be estimated.

A seemingly simple strategy for finding effective controls is the following: Perform the estimation using multiple sets of control variates, and use the estimate  $\Test{cv}$ with the smallest associated uncertainty $\sigmaest{cv}$. However, since $\sigmaest{cv}$ is only an \emph{estimate} (with random fluctuations) for the standard deviation of $\Test{cv}$, picking the minimum $\sigmaest{cv}$ over several sets of control variates could lead to underestimated uncertainties due to the look-elsewhere effect. One should be wary of this when using any approach to construct or distill good controls using data, including the ML-based approach and the try-multiple-sets-of-controls approach described here. A foolproof procedure would be to first use a subset of the available experimental data to construct good controls, and then use these controls with the rest of the experimental data to perform the actual estimation, without a look-elsewhere effect.

The CV4Quantum as described in this paper works only for reducing the inter-operation variance in \eqref{eq:variance_decomposition}. However, it can be extended to tackle the intra-operation variance as well, by creating controls that are not deterministic functions of the indices $(k_1,\dots,k_M)$. More specifically, in addition to the observable of interest $X$, one can have auxiliary observables, say $Y_1, Y_2,\dots$, whose noiseless expectations are exactly known. These auxiliary observables can be measurements on some auxiliary qubits whose final state, at the end of the noiseless computation, is exactly known. They can also be some special combinations (with known noiseless expectations) of the measurements of the primary qubits. Using such $Y_i$-s as controls, one can potentially reduce the intra-operation variance. Exploring this idea is left for future work.

\begin{acks}
 The authors thank Ewout van den Berg, Zlatko K. Minev, Abhinav Kandala, Kristan Temme for providing detailed information regarding the noise models used in their work, Ref.~\cite{vandenBerg2023}.
 The authors thank Ewout van den Berg, Abhinav Kandala, Andrew Eddins, Gabriel Perdue, Stephen Mrenna,
 Henry Lamm, Konstantin T. Matchev, and Kyoungchul Kong for useful discussions and/or providing feedback on this manuscript. The following open source software were used \emph{directly} in performing this research and generating images and plots: Python \cite{van1995python}, Qiskit \cite{Qiskit}, NumPy \cite{harris2020array}, SciPy \cite{2020SciPy-NMeth}, Matplotlib \cite{Hunter:2007}, Jupyterlab \cite{Jupyterlab}, quantikz \cite{kay2023tutorial}.
 

 This document was prepared using the resources of the Fermi National Accelerator Laboratory (Fermilab), a U.S. Department of Energy, Office of Science, Office of High Energy Physics HEP User Facility. Fermilab is managed by Fermi Forward Discovery Group, LLC, acting under Contract No. 89243024CSC000002.
 
 PS is supported by the U.S. Department of Energy, Office of Science, Office of High Energy Physics QuantISED program under the grants ``HEP Machine Learning and Optimization Go Quantum'', Award Number 0000240323, and ``DOE QuantiSED Consortium QCCFP-QMLQCF'', Award Number DE-SC0019219. WY was supported by the National Science Foundation Mathematical Sciences Graduate Internship (MSGI) Program in summer 2022.
 
 PS thanks the organizers of the workshop titled ``Quantum Error Mitigation for Particle and Nuclear Physics'' at the InQubator for Quantum Simulation (IQuS), University of Washington, Seattle, where this work was conceived. This work was partially performed at the Aspen Center for Physics (ACP), which is supported by National Science Foundation grant PHY-1607611. PS thanks ACP for hospitality and support during May--June of 2022 and 2023.

 Any opinions, findings, and conclusions or recommendations expressed in this material are those of the authors and do not necessarily reflect the views of the National Science Foundation or the U.S. Department of Energy.
\end{acks}

\section*{Code and Data Availability}
The code and data that support the findings of this study are openly available at the following
URL: \url{https://gitlab.com/prasanthcakewalk/cv4quantum-demo}.

\bibliographystyle{unsrtnat}
\bibliography{references}

\appendix
\section{Lemmas and Proofs} \label{appendix:lemmas_and_proofs}
The different symbols used in this section have the same meanings as in the rest of this paper. From the properties of the Moore--Penrose inverse, we have
\begin{align}
 \sum_{b,c=1}^{N_\mathrm{cv}}K^+_{ab}\,K_{bc}\,K^+_{cd} &= K^+_{ad}\,, \label{eq:2Kplus1K}\\
 \sum_{b,c=1}^{N_\mathrm{cv}}K_{ab}\,K^+_{bc}\,K_{cd} &= K_{ad}\,. \label{eq:1Kplus2K}
\end{align}
Furthermore, $\mathbf{K}$ and $\mathbf{K}^+$ are both symmetric matrices.

\begin{lemma} \label{lemma:span}
If there exist constants $\alpha_1,\dots, \alpha_{N_\mathrm{cv}}$ such that
\begin{align}
 U - \E\left[U\right] = \sum_{a=1}^{N_\mathrm{cv}} \alpha_a\,\left(\big.V_a - \mu_a\right)\,, \label{eq:U_in_span}
\end{align}
then $\var\left[U_{(\lambda^\ast_1,\dots, \lambda^\ast_{N_\mathrm{cv}})}\right] = 0$.
\end{lemma}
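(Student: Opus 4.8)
The plan is to start from the closed-form expression for the optimal variance already derived in \eqref{eq:var_U_lambdas_opt}, namely $\var[U_{(\lambda^\ast_1,\dots,\lambda^\ast_{N_\mathrm{cv}})}] = \var[U] - \sum_{a,b=1}^{N_\mathrm{cv}} K^+_{ab}\,\cov[U,V_a]\,\cov[U,V_b]$, and to show that under the hypothesis \eqref{eq:U_in_span} the subtracted double sum equals $\var[U]$ exactly, so that the right-hand side collapses to $0$.

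First I would express the two ingredients of that sum in terms of the coefficients $\alpha_c$. Taking the covariance of both sides of \eqref{eq:U_in_span} against $(V_a-\mu_a)$ and using the definition of $\mathbf{K}$ as the covariance matrix of the controls gives $\cov[U,V_a] = \sum_{c=1}^{N_\mathrm{cv}} \alpha_c\,K_{ca}$; squaring \eqref{eq:U_in_span} and taking expectations gives $\var[U] = \sum_{c,d=1}^{N_\mathrm{cv}} \alpha_c\,\alpha_d\,K_{cd}$. These are routine bilinear manipulations with no subtlety. Substituting the first identity into the subtracted double sum and interchanging the order of summation yields $\sum_{a,b} K^+_{ab}\,\cov[U,V_a]\,\cov[U,V_b] = \sum_{c,d} \alpha_c\,\alpha_d \bigl(\sum_{a,b} K_{ca}\,K^+_{ab}\,K_{bd}\bigr)$, and the inner triple sum is exactly $K_{cd}$ by the Moore--Penrose identity \eqref{eq:1Kplus2K} (together with the symmetry of $\mathbf{K}$ and $\mathbf{K}^+$, which lets the index placements line up). Hence the subtracted sum equals $\sum_{c,d}\alpha_c\,\alpha_d\,K_{cd} = \var[U]$, and therefore $\var[U_{(\lambda^\ast_1,\dots,\lambda^\ast_{N_\mathrm{cv}})}] = \var[U] - \var[U] = 0$.

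I do not expect a genuine obstacle here; the statement is a direct algebraic consequence of \eqref{eq:var_U_lambdas_opt} and \eqref{eq:1Kplus2K}. The one point requiring a little care is that $\mathbf{K}$ may be singular — which is precisely why the pseudoinverse appears — so one must not cancel $\mathbf{K}^+$ against $\mathbf{K}$ naively; the computation has to be routed through the genuine identity \eqref{eq:1Kplus2K}, which holds for the Moore--Penrose inverse irrespective of the rank of $\mathbf{K}$. An alternative, slightly more conceptual route would be to note that $\lambda_a = \alpha_a$ is a (not necessarily unique) minimizer in this case, substitute $\lambda_a = \alpha_a$ directly into \eqref{eq:var_U_lambdas}, and read off that the variance vanishes; this sidesteps the pseudoinverse but then requires separately invoking that \eqref{eq:var_U_lambdas_opt} does attain the minimum. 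I would present the first route as the main argument and perhaps mention the second as a remark.
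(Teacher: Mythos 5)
Your proposal is correct and follows essentially the same route as the paper's proof: start from the minimum-variance formula \eqref{eq:var_U_lambdas_opt}, substitute \eqref{eq:U_in_span} to write $\var[U]$ and $\cov[U,V_a]$ in terms of the $\alpha_a$-s and $K_{ab}$-s (via Bienaym\'{e}'s identity), and collapse the resulting quadruple sum with the Moore--Penrose identity \eqref{eq:1Kplus2K} so the two terms cancel. Your handling of the possibly singular $\mathbf{K}$ through the genuine pseudoinverse identity matches the paper exactly, so nothing further is needed.
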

\begin{proof}
From \eqref{eq:var_U_lambdas}, \eqref{eq:lambda_star}, and \eqref{eq:2Kplus1K} we have
\begin{align}
 \var\left[U_{(\lambda^\ast_1,\dots, \lambda^\ast_{N_\mathrm{cv}})}\right] &= \var\left[U\right] - \sum_{a,b=1}^{N_\mathrm{cv}}K^+_{ab}\,\cov\left[U, V_a\right]\,\cov\left[U, V_b\right]
\end{align}
Substituting the expression for $U$ from \eqref{eq:U_in_span} here, and using Bienaym\'{e}'s identity and \eqref{eq:1Kplus2K}, we have 
\begin{align}
 \var\left[U_{(\lambda^\ast_1,\dots, \lambda^\ast_{N_\mathrm{cv}})}\right] &= \sum_{a,b=1}^{N_\mathrm{cv}}K_{ab}\,\alpha_a\,\alpha_b - \sum_{a,b,c,d=1}^{N_\mathrm{cv}} K^+_{ab}\,K_{ac}\,\alpha_c\,K_{bd}\,\alpha_d \qquad = 0
\end{align}
\end{proof}

\begin{lemma} \label{lemma:goldenticket}
If $~U\equiv WX$, and $V\equiv W\,\E\left[X\,\big|\, k_1,\dots,k_M\right]$, then
\begin{align}
 \var\left[U_{\lambda^\ast}\right] = \E\left[\Big.W^2\,\var\left[X\,\big|\, k_1,\dots,k_M\right]\right]\,.
\end{align}
\end{lemma}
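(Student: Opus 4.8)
The plan is to reduce everything to the single-control variance formula \eqref{eq:min_var_1cv} together with the law-of-total-variance decomposition \eqref{eq:variance_decomposition}. Write $\vec{k}\equiv(k_1,\dots,k_M)$ and $g(\vec{k})\equiv\E\left[X\,\big|\,\vec{k}\right]$, so that $V = W\,g(\vec{k})$ is a deterministic function of $\vec{k}$, while $U=WX$ is not. The decomposition \eqref{eq:variance_decomposition} already reads
\begin{align}
 \var[U] = \var\!\left[\Big.W\,g(\vec{k})\right] + \E\!\left[\Big.W^2\,\var\left[X\,\big|\,\vec{k}\right]\right] = \var[V] + \E\!\left[\Big.W^2\,\var\left[X\,\big|\,\vec{k}\right]\right]\,,
\end{align}
so the claim is equivalent to showing that, after optimizing $\lambda$, the full inter-operation term $\var[V]$ is removed, i.e. $\var[U_{\lambda^\ast}] = \var[U] - \var[V]$.

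First I would handle the nondegenerate case $\var[V]\neq 0$. By \eqref{eq:min_var_1cv}, $\var[U_{\lambda^\ast}] = \var[U] - \cov^2[U,V]/\var[V]$, so it suffices to prove $\cov[U,V] = \var[V]$. Using the tower property and the fact that $W$ and $g(\vec{k})$ are $\vec{k}$-measurable,
\begin{align}
 \E[UV] = \E\!\left[\Big.W^2\,X\,g(\vec{k})\right] = \E\!\left[\Big.W^2\,g(\vec{k})\,\E\left[X\,\big|\,\vec{k}\right]\right] = \E\!\left[\Big.W^2\,g(\vec{k})^2\right] = \E\!\left[V^2\right]\,,
\end{align}
and likewise $\E[U] = \E\!\left[\E\left[WX\,\big|\,\vec{k}\right]\right] = \E\!\left[W\,g(\vec{k})\right] = \E[V]$. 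Hence $\cov[U,V] = \E[UV] - \E[U]\E[V] = \E[V^2] - \E[V]^2 = \var[V]$, which gives $\cov^2[U,V]/\var[V] = \var[V]$ and therefore $\var[U_{\lambda^\ast}] = \var[U] - \var[V] = \E\!\left[W^2\,\var\left[X\,\big|\,\vec{k}\right]\right]$ by the decomposition above.

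It remains to dispatch the degenerate case $\var[V] = 0$, where \eqref{eq:min_var_1cv} gives $\var[U_{\lambda^\ast}] = \var[U]$; but then the inter-operation term in \eqref{eq:variance_decomposition} vanishes, so $\var[U] = \E\!\left[W^2\,\var\left[X\,\big|\,\vec{k}\right]\right]$ as well, and the identity still holds. The only delicate point is making sure the conditioning manipulations are done carefully — in particular that $W^2 g(\vec{k})$ can be pulled outside the conditional expectation over $X$ — but this is immediate since $W$ is a fixed function of the sampled indices; I do not anticipate any real obstacle beyond this bookkeeping.
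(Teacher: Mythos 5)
Your proof is correct and takes essentially the same route as the paper's: both arguments hinge on showing $\cov[U,V]=\var[V]$, plugging this into the single-control formula \eqref{eq:min_var_1cv}, and identifying the remainder with the intra-operation term via the law of total variance. The only differences are cosmetic --- you compute $\E[UV]$ and $\E[U]$ directly with the tower property where the paper invokes the law of total covariance, and you spell out the degenerate case $\var[V]=0$ explicitly, which the paper's presentation handles only implicitly.
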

\begin{proof}
Using the law of total covariance, we have
\begin{align}
 \cov[U, V] &= \E\left[\Big.\cov\left[U, V\,\big|\,k_1,\dots,k_M\right]\right] + \cov\left[\Big.\E\left[U\,\big|\,k_1,\dots,k_M\right]~~,~~\E\left[V\,\big|\,k_1,\dots,k_M\right]\right]\,.\label{eq:lemma_proof_1}
\end{align}
Since $V$ is a deterministic function of $(k_1,\dots,k_M)$, we have
\begin{align}
 \var\left[V\,\big|\,k_1,\dots,k_M\right] = \cov\left[U, V\,\big|\,k_1,\dots,k_M\right] = 0\,.\label{eq:lemma_proof_2}
\end{align}
Furthermore, since $W$ is a deterministic function of $(k_1,\dots,k_M)$
\begin{align}
 \E\left[U\,\big|\,k_1,\dots,k_M\right] = \E\left[V\,\big|\,k_1,\dots,k_M\right]\,.\label{eq:lemma_proof_3}
\end{align}
From \eqref{eq:lemma_proof_1}, \eqref{eq:lemma_proof_2}, and \eqref{eq:lemma_proof_3}, and the law of total variance we have
\begin{alignat}{2}
 \cov[U, V] ~~&=~~ \var\left[\Big.\E\left[U\,\big|\,k_1,\dots,k_M\right]\right] ~~&&=~~ \var\left[\Big.\E\left[V\,\big|\,k_1,\dots,k_M\right]\right]\\
 ~~&=~~ \var[U] - \E\left[\Big.\var\left[U\,\big|\,k_1,\dots,k_M\right]\right] ~~&&=~~ \var[V]\,. \label{eq:lemma_proof_4}
\end{alignat}
Now, from \eqref{eq:min_var_1cv}, \eqref{eq:lemma_proof_4}, and the fact that $W$ is a deterministic function of $(k_1,\dots,k_M)$, we have
\begin{align}
 \var[U_{\lambda^\ast}] &= \var[U] - \var[V]\\
 &= \E\left[\Big.\var\left[U\,\big|\,k_1,\dots,k_M\right]\right]\\
 &= \E\left[\Big.W^2\,\var\left[X\,\big|\, k_1,\dots,k_M\right]\right]\,.
\end{align}
\end{proof}

\begin{lemma} \label{lemma:shiftinvariance}
If the following two assumptions are both true, then {\normalfont$\Test{unbiased-cv}$} defined in \eqref{eq:Test_unbiased_cv} is shift invariant.
\begin{enumerate}
 \item[] Assumption 1:\quad$(W-\mu_W)$ lies in the span of the random variables $(V_a-\mu_a)$. 
 \item[] Assumption 2:\quad$\lambda^{(\neg i)}_a$ transforms as $\displaystyle\lambda^{(\neg i)}_a \mapsto \lambda^{(\neg i)}_a + \delta \sum_{b=1}^{N_\mathrm{cv}}K^+_{ab}\,C_b$ under $X^{(i)}\mapsto X^{(i)}+\delta$.
\end{enumerate}
\end{lemma}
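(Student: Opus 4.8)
The plan is to substitute $X^{(i)}\mapsto X^{(i)}+\delta$ (applied uniformly to every datapoint) into the definition \eqref{eq:Test_unbiased_cv} of $\Test{unbiased-cv}$ and check that the net change is exactly $\mu_W\,\delta$. Only three ingredients of $\Test{unbiased-cv}$ feel the shift: the product $W^{(i)}X^{(i)}$ gains an additive $\delta\,W^{(i)}$; the residual $V_a^{(i)}-\mu_a$ is untouched, since $V_a$ and its mean $\mu_a$ depend only on $(k_1,\dots,k_M)$; and, by Assumption~2, each coefficient $\lambda^{(\neg i)}_a$ gains the additive term $\delta\sum_b K^+_{ab}C_b$. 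Collecting these,
\begin{align}
 \Test{unbiased-cv}\ \longmapsto\ \Test{unbiased-cv}+\delta\,\Delta\,,\qquad\text{where}\qquad
 \Delta\equiv\frac{1}{N}\sum_{i=1}^N\left[W^{(i)}-\sum_{a,b=1}^{N_\mathrm{cv}}K^+_{ab}\,C_b\,\left(V_a^{(i)}-\mu_a\right)\right]\,,
\end{align}
so shift invariance is equivalent to $\Delta=\mu_W$. I would in fact establish the stronger, term-by-term identity that for \emph{every} realization of $(k_1,\dots,k_M)$,
\begin{align}
 W-\mu_W=\sum_{a,b=1}^{N_\mathrm{cv}}K^+_{ab}\,C_b\,\left(V_a-\mu_a\right)\,,
\end{align}
i.e.\ that the quantity $W_\text{res}$ of \eqref{eq:alg3_c} is the constant $\mu_W$; summing this over the $N$ datapoints collapses $\Delta$ to $\frac1N\sum_i\mu_W=\mu_W$.

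To prove the identity, I would use Assumption~1 to write $W-\mu_W=\sum_{a=1}^{N_\mathrm{cv}}\alpha_a\left(V_a-\mu_a\right)$ for some constants $\alpha_a$. Taking the covariance of both sides against $V_b$ and using the symmetry of $\mathbf K$ gives $C_b=\cov[W,V_b]=\sum_{c=1}^{N_\mathrm{cv}}K_{bc}\,\alpha_c$. Substituting this for $C_b$, the right-hand side of the identity becomes $\sum_a\big(\sum_{b,c}K^+_{ab}K_{bc}\alpha_c\big)(V_a-\mu_a)$, so it remains to show that the coefficient vector $n_a\equiv\sum_{b,c}K^+_{ab}K_{bc}\alpha_c-\alpha_a$ produces the zero random variable $\sum_a n_a(V_a-\mu_a)$. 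By the pseudoinverse identity \eqref{eq:1Kplus2K} one has $\sum_a K_{da}\,n_a=0$ for all $d$, and hence $\sum_a n_a(V_a-\mu_a)$ has mean zero and variance $\sum_{a,d}K_{ad}\,n_a\,n_d=0$. Since each $k_m$ ranges over a finite set on which $p_m$ is strictly positive, a zero-variance random variable must be constant at every realization of $(k_1,\dots,k_M)$, hence identically $0$; this yields $\sum_a\big(\sum_{b,c}K^+_{ab}K_{bc}\alpha_c\big)(V_a-\mu_a)=\sum_a\alpha_a(V_a-\mu_a)=W-\mu_W$, as needed. Plugging the identity back into $\Delta$ gives $\Delta=\mu_W$, so $\Test{unbiased-cv}\mapsto\Test{unbiased-cv}+\mu_W\delta$.

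The main obstacle is this last step, where one must cope with the non-uniqueness of the decomposition coefficients $\alpha_a$: the Moore--Penrose inverse selects a particular representative of the solution, and one has to verify that the discrepancy $n_a$ lies in the kernel of $\mathbf K$ and that a kernel-direction combination of the controls is deterministically zero (not merely almost surely zero). This is exactly where the finiteness of the index sets $\{1,\dots,K_m\}$ and the strict positivity of the $p_m$ enter. Everything else --- the linearity of the estimator in its inputs and the covariance bookkeeping --- is routine and relies only on Assumptions~1 and~2 together with the pseudoinverse identities \eqref{eq:2Kplus1K}--\eqref{eq:1Kplus2K}. (This mirrors the mechanism behind \lref{lemma:span}, where a span condition likewise forces a variance to vanish.)
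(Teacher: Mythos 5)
Your proof is correct and follows essentially the same route as the paper: both reduce the problem to showing that $W_\text{res}^{(i)} \equiv W^{(i)} - \sum_{a,b}K^+_{ab}\,C_b\,\left(V_a^{(i)}-\mu_a\right)$ is deterministically equal to $\mu_W$, and then use Assumption~2 so that the shift contributes $\frac{\delta}{N}\sum_i W_\text{res}^{(i)} = \mu_W\,\delta$. The only difference is cosmetic: the paper gets the zero-variance fact by citing \lref{lemma:span} and then notes that a zero-variance variable on the (finite, strictly positive) index distribution is constant, whereas you inline that lemma's pseudoinverse computation and spell out the kernel-of-$\mathbf{K}$ and strict-positivity details explicitly.
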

\begin{proof}
As in \alref{alg:cv_estimator}, let us define $W_\text{res}^{(i)}$ as
\begin{align}
 W_\text{res}^{(i)} \equiv W^{(i)} - \sum_{a,b=1}^{N_\mathrm{cv}} K^+_{ab}\,C_b\,\left(V_a^{(i)}-\mu_a\right)\,.
\end{align}
From Assumption 1 and \lref{lemma:span}, we have $\var\left[W_\text{res}^{(i)}\right] = 0$. Since $W_\text{res}^{(i)}$ is drawn from a discrete probability distribution, this implies that $W_\text{res}^{(i)} = \E\left[W_\text{res}^{(i)}\right] = \mu_W$\,. Now, from Assumption 2 and \eqref{eq:Test_unbiased_cv}, one can see that under the transformation $X^{(i)}\mapsto X^{(i)}+\delta$, $\Test{unbiased-cv}$ transforms as
\begin{align}
 \Test{unbiased-cv} \mapsto \Test{unbiased-cv} + \frac{\delta}{N}\sum_{i=1}^N W_\text{res}^{(i)} = \Test{unbiased-cv} + \delta\,\mu_W\,,
\end{align}
\end{proof}

\section{Error Estimation Formulas} \label{appendix:uncertainty_estimates}

This section contains the derivations for the error estimation formulas provided in \sref{sec:formulas}. Consider a generic estimator $\Test{method}$ for $T$ given by
\begin{align}
 \var\left[\Test{method}\right] &\equiv \frac{1}{N}\sum_{i=1}^N Y^{(\sim i)}_\mathrm{method}
\end{align}
where $Y^{(\sim i)}_\mathrm{method}$-s are exchangeable random variables. Using Bienaym\'{e}'s identity and the exchangeability of $Y^{(\sim i)}_\mathrm{method}$-s, the variance of $\Test{method}$ is given by
\begin{align}
 \var\left[\Test{method}\right] &= \frac{1}{N}\,\var\left[Y^{(\sim 1)}_\mathrm{method}\right] + \frac{N-1}{N}\,\cov\left[Y^{(\sim 1)}_\mathrm{method}~,~Y^{(\sim 2)}_\mathrm{method}\right]\,.\label{eq:tmethod_generic}
\end{align}
Next, let us consider the expected value of the sample variance of $Y^{(\sim i)}_\mathrm{method}$-s:
\begin{align}
 \E\left[\Big.\svar\left[Y_\mathrm{method}\right]\right] &= \frac{1}{2}~\E\left[\left(Y^{(\sim 1)}_\mathrm{method} - Y^{(\sim 2)}_\mathrm{method}\right)^2\right]\\
 &= \E\left[\left(Y^{(\sim 1)}_\mathrm{method}\right)^2\right] - \E\left[Y^{(\sim 1)}_\mathrm{method}~Y^{(\sim 2)}_\mathrm{method}\right]\\
 &= \var\left[Y^{(\sim 1)}_\mathrm{method}\right] - \cov\left[Y^{(\sim 1)}_\mathrm{method}~,~Y^{(\sim 2)}_\mathrm{method}\right]\,.\label{eq:var_1stterm_expectation}
\end{align}
From \eqref{eq:tmethod_generic} and \eqref{eq:var_1stterm_expectation}, we have
\begin{align}
 \var\left[\Test{method}\right] - \E\left[\frac{1}{N}\svar\left[Y_\mathrm{method}\right]\right] = \cov\left[Y^{(\sim 1)}_\mathrm{method}~,~Y^{(\sim 2)}_\mathrm{method}\right]\,.\label{eq:var_Tmethod_generic_secondterm}
\end{align}
This means that we can estimate $\var\left[\Test{method}\right]$ as $\svar\left[Y_\mathrm{method}\right]/N$ plus a correction term to estimate or upperbound $\cov\left[Y^{(\sim 1)}_\mathrm{method}~,~Y^{(\sim 2)}_\mathrm{method}\right]$. Next we will derive this correction term for the centered and cv estimators.

\subsection{Error Estimation Formula for the Centered Estimator}
From \eqref{eq:alg2_b}, using the distributive property of covariance over addition, the fact that $(W^{(i)}, X^{(i)})$-s are independent and identically distributed (iid), and the fact that $\E[W^{(i)}] = \mu_W$ we have
\begin{align}
 \cov\left[Y^{(\sim 1)}_\mathrm{centered}~,~Y^{(\sim 2)}_\mathrm{centered}\right] &= \frac{1}{(N-1)^2}\,\cov\left[X^{(2)}\,\left(W^{(1)}-\mu_W\right)~,~X^{(1)}\,\left(W^{(2)}-\mu_W\right)\right]\\
 &= \frac{1}{(N-1)^2}\E^2\left[X^{(1)}\,\left(W^{(1)}-\mu_W\right)\right]\\
 &= \frac{1}{(N-1)^2}\cov^2\left[X^{(1)}, W^{(1)}\right]\,.\label{eq:cov_Y_centered}
\end{align}
$\cov^2\left[X^{(1)},W^{(1)}\right]$ can be estimated as $\scov^2\left[X,W\right]$, leading to the formula for $\sigmaest{centered}$ in \eqref{eq:alg2_e}. Furthermore, since $\scov\left[X,W\right]$ is an unbiased estimator for $\cov\left[X^{(1)},W^{(1)}\right]$, we have
\begin{align}
 \E\left[\Big.\scov^2\left[X,W\right]\right] &= \E^2\left[\Big.\scov\left[X,W\right]\right] + \var\left[\Big.\scov\left[X, W\right]\right]\\
 &= \cov^2\left[X^{(1)}, W^{(1)}\right] + \left|~\mathcal{O}\left(\frac{1}{N}\right)~\right|\,,\label{eq:scov_X_W_bias}
\end{align}
where $\mathcal{O}$ corresponds to the big-O notation. From \eqref{eq:alg2_e}, \eqref{eq:var_Tmethod_generic_secondterm}, \eqref{eq:cov_Y_centered}, and \eqref{eq:scov_X_W_bias}, we have
\begin{align}
 \E\left[\sigmaest{centered}^2\right] &= \var\left[\Test{centered}\right] + \left|~\mathcal{O}\left(\frac{1}{N^3}\right)~\right|\,.\label{eq:bias_centered}
\end{align}
This is one of the bias properties in \tref{tab:estimator_properties}.

\subsection{Error Estimation Formula for the Control-Variates-Based Estimator}

From \eqref{eq:alg3_c}, \eqref{eq:alg3_d}, and \eqref{eq:alg3_f}, $Y^{(\sim i)}_\mathrm{cv}$ can be written as
\begin{align}
 Y^{(\sim i)}_\mathrm{cv} &= W^{(i)}\,X^{(i)} - \sum_{a,b=1}^{N_\mathrm{cv}}K^+_{ab}\,\left(S^{(\neg i)}_a + C_a\,\smean^{(\neg i)}_\text{sans-one}[X]\right)\,\left(V_b^{(i)} - \mu_b\right)\\
 &= W^{(i)}\,X^{(i)} - \frac{1}{2(N-1)(N-2)}\sum_{a,b=1}^{N_\mathrm{cv}}\sum_{j\neq i}\sum_{k\neq i\,\text{or}\,j}K^+_{ab}\,\beta^{(jk)}_a\,\left(V_b^{(i)} - \mu_b\right)
\end{align}
where $\beta_a^{(jk)}$ is symmetric in the indices $(j,k)$ and is given by
\begin{align}
 \beta_a^{(jk)} &\equiv \left(X^{(j)} - X^{(k)}\right)\left(G^{(j)}_a - G^{(k)}_a\right) + \left(X^{(j)} + X^{(k)}\right)\,C_a\,.
\end{align}
Now, using the distributive property of covariance over addition, the fact that $(W^{(i)}, X^{(i)})$-s are iid, and the fact that $\E\left[V^{(i)}_b\right] = \mu_b$ we have
\begin{align}
\begin{split}
 &\cov\left[Y^{(\sim 1)}_\mathrm{cv}~,~Y^{(\sim 2)}_\mathrm{cv}\right] = \frac{1}{(N-1)^2(N-2)^2}\cov\left[\sum_{a,b=1}^{N_\mathrm{cv}}\sum_{i=3}^N K^+_{ab}\,\beta^{(2i)}_a\,\left(V_b^{(1)} - \mu_b\right)~,~\right.\\
 &\qquad\qquad\qquad\qquad\qquad\qquad\qquad\qquad\qquad\qquad\left.\sum_{c,d=1}^{N_\mathrm{cv}}\sum_{j=1}^N K^+_{cd}\,\beta^{(1j)}_c\,\left(V_d^{(2)} - \mu_d\right)\right]
\end{split}\\
\begin{split}
 &= \frac{1}{(N-1)^2(N-2)^2}\cov\left[\sum_{a,b=1}^{N_\mathrm{cv}}\sum_{i=3}^N K^+_{ab}\,\left(\beta^{(2i)}_a-\E\left[\beta^{(12)}_a\right]\right)\,\left(V_b^{(1)} - \mu_b\right)~,~\right.\\
 &\qquad\qquad\qquad\qquad\qquad\qquad\left.\sum_{c,d=1}^{N_\mathrm{cv}}\sum_{j=3}^N K^+_{cd}\,\left(\beta^{(1j)}_c - \E\left[\beta^{(12)}_c\right]\right)\,\left(V_d^{(2)} - \mu_d\right)\right]
\end{split}
\end{align}
Using the fact that $\cov[A, B] \leq \var[A] = \var[B]$ for two identically distributed (not necessarily independent) random variables $A$ and $B$, we have
\begin{align}
\begin{split}
 \cov&\left[Y^{(\sim 1)}_\mathrm{cv},Y^{(\sim 2)}_\mathrm{cv}\right]\leq \frac{1}{(N-1)^2(N-2)^2}\cov\left[\sum_{a,b=1}^{N_\mathrm{cv}}\sum_{i=3}^N K^+_{ab}\,\left(\beta^{(1i)}_a-\E\left[\beta^{(12)}_a\right]\right)\,\left(V_b^{(2)} - \mu_b\right)~,~\right.\\
 &\qquad\qquad\qquad\qquad\qquad\qquad\qquad\qquad\left.\sum_{c,d=1}^{N_\mathrm{cv}}\sum_{j=3}^N K^+_{cd}\,\left(\beta^{(1j)}_c - \E\left[\beta^{(12)}_c\right]\right)\,\left(V_d^{(2)} - \mu_d\right)\right]
\end{split}\label{eq:Ycv_cov_var_ineq}\\
 &= \frac{1}{(N-1)^2(N-2)^2}\sum_{a,b,c,d=1}^{N_\mathrm{cv}}K^+_{ab}\,K_{bd}\,K^+_{cd}\,\cov\left[\sum_{i=3}^N \beta_a^{(1i)}~,~\sum_{j=3}^N\beta_c^{(1j)}\right]\\
 &= \frac{1}{(N-1)^2(N-2)^2}\sum_{a,b=1}^{N_\mathrm{cv}}K^+_{ab}\,\cov\left[\sum_{i=3}^N \beta_a^{(1i)}~,~\sum_{j=3}^N\beta_b^{(1j)}\right] \label{eq:cov_intermsof_betasum}\\
\begin{split}
 &= \frac{1}{(N-1)^2}\sum_{a,b=1}^N K^+_{ab}\left(\bigg.\frac{N-3}{N-2}\,\cov\left[\beta^{(12)}_a~,~\beta^{(13)}_b\right]+~\frac{1}{N-2}\,\cov\left[\beta^{(12)}_a~,~\beta^{(12)}_b\right]\right) \label{eq:cov_intermsof_beta}
\end{split}
\end{align}
Equation \eqref{eq:cov_intermsof_betasum} suggests upperbounding $\cov\left[Y^{(\sim 1)}_\mathrm{cv}, Y^{(\sim 2)}_\mathrm{cv}\right]$ via the quantity $F^{(\sim i)}_a$ defined as follows
\begin{align}
 F^{(\sim i)}_a &\equiv \frac{1}{N-1}\sum_{j\neq i}\beta_a^{(ij)}\,.
\end{align}
It can be shown that $F^{(\sim i)}_a$ is identical to the $L_a^{(\sim i)}$ defined in \alref{alg:cv_estimator} in equations \eqref{eq:alg3_a} and \eqref{eq:alg3_b}, up to an additive term independent of $i$ as follows:
\begin{align}
 F^{(\sim i)}_a &= \frac{1}{N-1}\sum_{j\neq i} \left[\left(R^{(\sim i)} - R^{(\sim j)}\right)\left(G_a^{(i)} - G_a^{(j)}\right) + \left(R^{(\sim i)} + R^{(\sim j)}\right)\,C_a + 2\,\smean[X]\,C_a\right]\\
\begin{split}
 &= \frac{1}{N-1}\sum_{j\neq i} \bigg[R^{(\sim i)}\,\left(G_a^{(i)}-\smean\left[G_a\right]\right) + R^{(\sim j)}\,\left(G_a^{(j)}-\smean\left[G_a\right]\right)\\
 &\qquad\qquad\qquad\qquad - R^{(\sim i)}\,\left(G_a^{(j)}-\smean\left[G_a\right]\right) - R^{(\sim j)}\,\left(G_a^{(i)}-\smean\left[G_a\right]\right)\\
 &\qquad\qquad\qquad\qquad + \left(R^{(\sim i)} + R^{(\sim j)}\right)\,C_a + 2\,\smean[X]\,C_a\bigg]
\end{split}\\
\begin{split}
 &= \frac{1}{N-1}\bigg[(N-1)\,R^{(\sim i)}\,\left(G_a^{(i)}-\smean\left[G_a\right]\right) - R^{(\sim i)}\,\left(G_a^{(i)}-\smean\left[G_a\right]\right)\\
 &\qquad\qquad\qquad\qquad+ R^{(\sim i)}\,\left(G_a^{(i)}-\smean\left[G_a\right]\right) + R^{(\sim i)}\,\left(G_a^{(i)}-\smean\left[G_a\right]\right)\\
 &\qquad\qquad\qquad\qquad + \left((N-1)\,R^{(\sim i)} - R^{(\sim i)}\right)\,C_a\bigg] + (\text{term independent of }i)
\end{split}\\
 &= \frac{R^{(\sim i)}}{N-1}\left[(N-2)\,C_a + N\,\left(G_a^{(i)}-\smean\left[G_a\right]\right)\right] + (\text{term independent of }i)\\
 &= L_a^{(\sim i)} + (\text{term independent of }i)\,. \label{eq:F_L_connection}
\end{align}
Let us consider the expected value of the sample covariance of $L^{(\sim i)}_a$ and $L^{(\sim i)}_b$. Using \eqref{eq:F_L_connection}, the distributive property of covariance over addition and the fact that $\beta_a^{(ij)}$ is symmetric in $(i,j)$, we have
\begin{align}
 \E&\left[\Big.\scov\left[L_a, L_b\right]\right] =\E\left[\Big.\scov\left[F_a, F_b\right]\right] = \frac{1}{2}~\E\left[\left(F_a^{(\sim 1)} - F_a^{(\sim 2)}\right)\left(F_b^{(\sim 1)} - F_b^{(\sim 2)}\right)\right] \\
 &= \E\left[F_a^{(\sim 1)}\,F_b^{(\sim 1)}\right] - \E\left[F_a^{(\sim 1)}\,F_b^{(\sim 2)}\right] \\
 &= \cov\left[F^{(\sim 1)}_a,F^{(\sim 1)}_b\right] - \cov\left[F^{(\sim 1)}_a,F^{(\sim 2)}_b\right]\\
\begin{split}
 &= \frac{N-2}{N-1}\,\cov\left[\beta_a^{(12)}~,~\beta_b^{(13)}\right] + \frac{1}{N-1}\,\cov\left[\beta_a^{(12)}~,~\beta_b^{(12)}\right]\\
 &\qquad- \frac{3(N-2)}{(N-1)^2}\,\cov\left[\beta_a^{(12)}~,~\beta_b^{(13)}\right] - \frac{1}{(N-1)^2}\,\cov\left[\beta_a^{(12)}~,~\beta_b^{(12)}\right]
\end{split}\\
 &= \frac{(N-2)(N-3)}{(N-1)^2}\left(\frac{N-4}{N-3}\,\cov\left[\beta_a^{(12)}~,~\beta_b^{(13)}\right] + \frac{1}{N-3}\,\cov\left[\beta_a^{(12)}~,~\beta_b^{(12)}\right]\right)\,.\label{eq:scov_L}
\end{align}
From \eqref{eq:cov_intermsof_beta} and \eqref{eq:scov_L} we have
\begin{align}
\begin{split}
 \cov&\left[Y^{(\sim 1)}_\mathrm{cv}~,~Y^{(\sim 2)}_\mathrm{cv}\right] - \E\left[\frac{1}{(N-2)(N-3)}\,\sum_{a,b=1}^{N_\mathrm{cv}}K^+_{ab}\,\scov[L_a, L_b]\right] \\
 &\leq \frac{1}{(N-1)^2(N-2)(N-3)}\sum_{a,b=1}^{N_\mathrm{cv}}K^+_{ab}\left(\cov\left[\beta^{(12)}_a~,~\beta_b^{(13)}\right] - \cov\left[\beta^{(12)}_a~,~\beta_b^{(12)}\right]\right)
\end{split}\\
 &= -~\frac{1}{2(N-1)^2(N-2)(N-3)}\sum_{a,b=1}^{N_\mathrm{cv}}K^+_{ab}\,\underbrace{\cov\left[\beta^{(12)}_a-\beta^{(13)}_a~,~\beta_b^{(12)}-\beta_b^{(13)}\right]}_{\text{positive semidefinite matrix indexed by } (a,b)}\\
 &\leq 0\qquad\qquad\left(\genfrac{}{}{0pt}{0}{\text{\small since the Frobenius inner product of two}}{\text{\small positive semidefinite matrixes is non-negative}}\right)\,.\label{eq:cov_Ycv_ineq}
\end{align}
From \eqref{eq:alg3_i}, \eqref{eq:var_Tmethod_generic_secondterm}, \eqref{eq:cov_Ycv_ineq}, and the nature of the inequality in \eqref{eq:Ycv_cov_var_ineq}, we have
\begin{align}
 \E\left[\sigmaest{cv}^2\right] &= \var\left[\Test{cv}\right] + \left|~\mathcal{O}\left(\frac{1}{N^2}\right)~\right|\,.\label{eq:bias_cv}
\end{align}
This is one of the bias properties in \tref{tab:estimator_properties}.

\section{Proofs of Some Properties of the Estimators} \label{appendix:estimator_properties}

Here we will prove the properties in \tref{tab:estimator_properties} that have not been proved in \sref{sec:methodology} and/or \aref{appendix:uncertainty_estimates}.

\begin{lemma}
 $\sigmaest{cv}^2$ in \eqref{eq:alg3_i} is non-negative.
\end{lemma}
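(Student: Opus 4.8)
The plan is to split the right-hand side of \eqref{eq:alg3_i} into its two summands and show each is separately non-negative; I would work under the assumption that $N$ is large enough for \eqref{eq:alg3_i} to be well posed (i.e.\ $N\geq 4$, so that $(N-2)(N-3)>0$ and every sample covariance appearing in the formula is defined). The first summand, $\tfrac{1}{N}\svar[Y_\mathrm{cv}]$, is non-negative immediately from the definition of the sample variance, so there is nothing to prove there.

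For the second summand, $\tfrac{1}{(N-2)(N-3)}\sum_{a,b=1}^{N_\mathrm{cv}}K^+_{ab}\,\scov[L_a,L_b]$, the key idea is to recognize $\sum_{a,b=1}^{N_\mathrm{cv}}K^+_{ab}\,\scov[L_a,L_b]$ as the Frobenius (trace) inner product $\mathrm{tr}(\mathbf{K}^+\mathbf{S})$ of two symmetric positive semidefinite matrices, where $\mathbf{S}$ denotes the matrix with entries $S_{ab}\equiv\scov[L_a,L_b]$. I would then establish positive semidefiniteness of each factor in turn. First, $\mathbf{K}$, being the covariance matrix of the control variates $V_1,\dots,V_{N_\mathrm{cv}}$, is symmetric positive semidefinite; hence, by the spectral characterization of the Moore--Penrose inverse (same eigenvectors, with the nonzero eigenvalues replaced by their reciprocals), $\mathbf{K}^+$ is symmetric positive semidefinite as well. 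Second, $\mathbf{S}$ is a sample covariance matrix: it equals $\tfrac{1}{N-1}\sum_{i=1}^N \big(\ell^{(\sim i)}-\overline{\ell}\big)\big(\ell^{(\sim i)}-\overline{\ell}\big)^{\top}$, where $\ell^{(\sim i)}$ is the vector with components $L_1^{(\sim i)},\dots,L_{N_\mathrm{cv}}^{(\sim i)}$ and $\overline{\ell}$ is their average, i.e.\ a sum of rank-one positive semidefinite matrices and therefore positive semidefinite. It then remains to invoke the standard fact that $\mathrm{tr}(AB)\geq 0$ whenever $A$ and $B$ are symmetric positive semidefinite: writing $A=\sum_j\lambda_j\,u_j u_j^{\top}$ with $\lambda_j\geq 0$ gives $\mathrm{tr}(AB)=\sum_j\lambda_j\,u_j^{\top}Bu_j\geq 0$. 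Adding the two non-negative contributions yields $\sigmaest{cv}^2\geq 0$.

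The only subtlety worth flagging---and the nearest thing to an obstacle here---is the bookkeeping point that one must certify $\mathbf{K}^+$, not merely $\mathbf{K}$, as positive semidefinite: this matters precisely because $\mathbf{K}$ may be singular (that singularity being the reason the pseudoinverse appears in the algorithm at all), so one cannot simply argue ``$\mathbf{K}^+=\mathbf{K}^{-1}$ is positive definite''. Once that is pinned down and $\mathbf{S}$ is identified as a Gram matrix, the argument is complete. An essentially equivalent alternative would be to factor $\mathbf{K}^+=\mathbf{B}\mathbf{B}^{\top}$ and note that $\mathrm{tr}(\mathbf{K}^+\mathbf{S})=\mathrm{tr}(\mathbf{B}^{\top}\mathbf{S}\mathbf{B})\geq 0$, since $\mathbf{B}^{\top}\mathbf{S}\mathbf{B}$ is positive semidefinite.
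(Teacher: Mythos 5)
Your proposal is correct and follows essentially the same argument as the paper, which also decomposes $\sigmaest{cv}^2$ into the non-negative sample-variance term plus the Frobenius inner product of the positive semidefinite matrices $\mathbf{K}^+$ and the sample covariance matrix of the $L_a$-s. Your version merely spells out the supporting facts (the spectral argument for $\mathbf{K}^+$ being positive semidefinite and the trace inequality) that the paper cites without proof.
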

\begin{proof}
 This lemma follows from the facts that a) sample variances are non-negative, b) sample covariance matrices are positive semidefinite, c) the matrix $\mathbf{K}^+$ is positive semidefinite, and d) the Frobenius inner product of two positive semidefinite matrices is non-negative. 
\end{proof}

\begin{lemma}
 $\Test{cv}$ and $\sigmaest{cv}^2$ are invariant under invertible linear transformations of $V_a$ (with appropriate changes to $\mu_a$, $C_a$, and $K^+_{ab}$).
\end{lemma}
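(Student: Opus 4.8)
The plan is to push the whole statement through a single linear-algebra fact about the Moore--Penrose inverse under a congruence transformation, restricted to the range of $\mathbf{K}$. Let $\mathbf{M}=(M_{ab})$ be an invertible $N_\mathrm{cv}\times N_\mathrm{cv}$ matrix and take the transformed controls to be $\tilde V_a\equiv\sum_b M_{ab}V_b$. First I would record how every object entering \alref{alg:cv_estimator} changes. Writing $u_a^{(i)}\equiv V_a^{(i)}-\mu_a$, with $\mathbf{u}^{(i)}$ the vector whose entries are the $u_a^{(i)}$, the means satisfy $\tilde\mu_a=\sum_b M_{ab}\mu_b$, hence $\tilde u_a^{(i)}=\sum_b M_{ab}u_b^{(i)}$, $\widetilde{\mathbf{K}}=\mathbf{M}\mathbf{K}\mathbf{M}^{\mathsf{T}}$, and $\tilde C_a=\sum_b M_{ab}C_b$. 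Since $G_a^{(i)}=W^{(i)}u_a^{(i)}$, and $S_a^{(\neg i)}=\scovsans^{(\neg i)}[X,G_a]$ and $L_a^{(\sim i)}$ are linear in $G_a$ and $C_a$, the vectors $\mathbf{G}^{(i)}$, $\mathbf{S}^{(\neg i)}$, $\mathbf{L}^{(\sim i)}$ transform by left multiplication by $\mathbf{M}$ as well. In short, every $a$-indexed vector in the algorithm transforms as $\mathbf{v}\mapsto\mathbf{M}\mathbf{v}$, $\mathbf{K}$ undergoes the congruence $\mathbf{K}\mapsto\mathbf{M}\mathbf{K}\mathbf{M}^{\mathsf{T}}$, and $\mathbf{K}^+$ is replaced by $(\mathbf{M}\mathbf{K}\mathbf{M}^{\mathsf{T}})^+$, while quantities built only from $X^{(i)}$ and $W^{(i)}$ (such as $\smeansans^{(\neg i)}[X]$ and $R^{(\sim i)}$) are untouched.

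The second step is to verify that $\mathbf{u}^{(i)}$, $\mathbf{C}$, each $\mathbf{S}^{(\neg i)}$, and each $\mathbf{L}^{(\sim i)}-\smean[\mathbf{L}]$ lie in $\operatorname{range}\mathbf{K}$. If $\mathbf{z}\in\ker\mathbf{K}$ then $\var[\sum_a z_a u_a]=\mathbf{z}^{\mathsf{T}}\mathbf{K}\mathbf{z}=0$, so $\sum_a z_a u_a^{(i)}$ equals its mean $0$ almost surely, hence in every realized sample; this yields $\mathbf{z}^{\mathsf{T}}\mathbf{u}^{(i)}=0$, $\mathbf{z}^{\mathsf{T}}\mathbf{C}=\cov[W,\sum_a z_a u_a]=0$, $\mathbf{z}^{\mathsf{T}}\mathbf{G}^{(i)}=W^{(i)}\sum_a z_a u_a^{(i)}=0$, and therefore $\mathbf{z}^{\mathsf{T}}\mathbf{S}^{(\neg i)}=0$ and $\mathbf{z}^{\mathsf{T}}\mathbf{L}^{(\sim i)}=0$. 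Since $\mathbf{K}$ is symmetric, $(\ker\mathbf{K})^\perp=\operatorname{range}\mathbf{K}$, which gives the claim.

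The crux is the identity: \emph{if $\mathbf{a},\mathbf{b}\in\operatorname{range}\mathbf{K}$ and $\mathbf{M}$ is invertible, then $(\mathbf{M}\mathbf{a})^{\mathsf{T}}(\mathbf{M}\mathbf{K}\mathbf{M}^{\mathsf{T}})^+(\mathbf{M}\mathbf{b})=\mathbf{a}^{\mathsf{T}}\mathbf{K}^+\mathbf{b}$.} I would prove it as follows. For $\mathbf{a}\perp\ker\mathbf{K}$ and $\mathbf{b}\in\operatorname{range}\mathbf{K}$, the scalar $\mathbf{a}^{\mathsf{T}}\mathbf{x}$ is the same for every solution $\mathbf{x}$ of $\mathbf{K}\mathbf{x}=\mathbf{b}$ (two solutions differ by an element of $\ker\mathbf{K}$), and $\mathbf{K}^+\mathbf{b}$ is one such solution because $\mathbf{K}\mathbf{K}^+$ is the orthogonal projector onto $\operatorname{range}\mathbf{K}$; hence $\mathbf{a}^{\mathsf{T}}\mathbf{K}^+\mathbf{b}=\mathbf{a}^{\mathsf{T}}\mathbf{x}$. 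For the transformed data, $\operatorname{range}(\mathbf{M}\mathbf{K}\mathbf{M}^{\mathsf{T}})=\mathbf{M}\operatorname{range}\mathbf{K}$ since $\mathbf{M}^{\mathsf{T}}$ is invertible, so $\mathbf{M}\mathbf{a}$ and $\mathbf{M}\mathbf{b}$ lie in its range, and if $\mathbf{K}\mathbf{x}=\mathbf{b}$ then $\mathbf{M}^{-\mathsf{T}}\mathbf{x}$ solves $(\mathbf{M}\mathbf{K}\mathbf{M}^{\mathsf{T}})\mathbf{y}=\mathbf{M}\mathbf{b}$; therefore $(\mathbf{M}\mathbf{a})^{\mathsf{T}}(\mathbf{M}\mathbf{K}\mathbf{M}^{\mathsf{T}})^+(\mathbf{M}\mathbf{b})=(\mathbf{M}\mathbf{a})^{\mathsf{T}}\mathbf{M}^{-\mathsf{T}}\mathbf{x}=\mathbf{a}^{\mathsf{T}}\mathbf{x}=\mathbf{a}^{\mathsf{T}}\mathbf{K}^+\mathbf{b}$. (Conceptually: on $\operatorname{range}\mathbf{K}$ the map $\mathbf{K}^+$ is the honest inverse of $\mathbf{K}$, and $\mathbf{M}$ restricts to an isomorphism $\operatorname{range}\mathbf{K}\to\operatorname{range}(\mathbf{M}\mathbf{K}\mathbf{M}^{\mathsf{T}})$ intertwining $\mathbf{K}$ with its congruence, so the associated bilinear forms agree.)

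Finally I would assemble the pieces. In \eqref{eq:alg3_f} one has $W^{(i)}-W_\text{res}^{(i)}=\mathbf{C}^{\mathsf{T}}\mathbf{K}^+\mathbf{u}^{(i)}$ and $\sum_{a,b}K^+_{ab}S_a^{(\neg i)}(V_b^{(i)}-\mu_b)=(\mathbf{S}^{(\neg i)})^{\mathsf{T}}\mathbf{K}^+\mathbf{u}^{(i)}$; by the second step the left and right vectors of each bilinear form lie in $\operatorname{range}\mathbf{K}$, so the identity shows each form is the same whether evaluated with the original or the transformed controls. Hence every $Y_\mathrm{cv}^{(\sim i)}$ in \eqref{eq:alg3_f} is invariant term by term, and so are $\Test{cv}=\smean[Y_\mathrm{cv}]$ and $\svar[Y_\mathrm{cv}]$. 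For the remaining term of $\sigmaest{cv}^2$ in \eqref{eq:alg3_i}, rewrite $\sum_{a,b}K^+_{ab}\scov[L_a,L_b]=\frac{1}{N-1}\sum_i(\mathbf{L}^{(\sim i)}-\smean[\mathbf{L}])^{\mathsf{T}}\mathbf{K}^+(\mathbf{L}^{(\sim i)}-\smean[\mathbf{L}])$, a sum of quadratic forms whose vectors lie in $\operatorname{range}\mathbf{K}$, hence invariant by the same identity; so $\sigmaest{cv}^2$ is invariant as well. The main obstacle is precisely the pseudoinverse identity: for singular $\mathbf{K}$ it is false for general $\mathbf{a},\mathbf{b}$ (one can exhibit a $2\times 2$ counterexample), so the real content of the proof is the second step---checking that every vector ever contracted against $\mathbf{K}^+$ in \alref{alg:cv_estimator} actually lies in $\operatorname{range}\mathbf{K}$.
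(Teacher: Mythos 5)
Your proof is correct, and it takes a genuinely different route from the paper's. The paper's proof (\aref{appendix:estimator_properties}) simply tracks how each object in \alref{alg:cv_estimator} transforms under the congruence: it posits $\mathbf{K}^+\mapsto(\mathbf{M}^T)^{-1}\mathbf{K}^+\mathbf{M}^{-1}$ as the ``appropriate change'' (its \eqref{eq:Kplus_transformation}), shows $\vec{G}^{(i)},\vec{L}^{(\sim i)},\vec{S}^{(\neg i)},\vec{C}\mapsto\mathbf{M}\,(\cdot)$ exactly as you do, and then reads off the invariance of $W^{(i)}-W^{(i)}_\mathrm{res}$, of $Y^{(\sim i)}_\mathrm{cv}$, and of $\mathrm{trace}[\mathbf{K}^+\,\mathbf{Lscov}]$; it also handles the slightly more general affine map $\vec{V}^{(i)}\mapsto\mathbf{M}\vec{V}^{(i)}+\vec{\alpha}$. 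You instead prove the restricted identity $(\mathbf{M}\mathbf{a})^T(\mathbf{M}\mathbf{K}\mathbf{M}^T)^+(\mathbf{M}\mathbf{b})=\mathbf{a}^T\mathbf{K}^+\mathbf{b}$ for $\mathbf{a},\mathbf{b}\in\mathrm{range}(\mathbf{K})$, and verify via the zero-variance argument (the same device used in the paper's proof of \lref{lemma:shiftinvariance}) that every vector ever contracted against $\mathbf{K}^+$ in \eqref{eq:alg3_f} and \eqref{eq:alg3_i} lies in $\mathrm{range}(\mathbf{K})$. What your route buys is robustness in the degenerate case: when $\mathbf{K}$ is singular, $(\mathbf{M}\mathbf{K}\mathbf{M}^T)^+$ is in general \emph{not} equal to $(\mathbf{M}^T)^{-1}\mathbf{K}^+\mathbf{M}^{-1}$, so under the natural reading that the transformed $K^+_{ab}$ is recomputed as the Moore--Penrose inverse of the transformed covariance matrix (as Table~\ref{tab:definitions} prescribes), the paper's transformation rule needs either full-rank $\mathbf{K}$ or the convention that ``appropriate change'' means the congruence-transformed matrix itself; your argument establishes invariance of $\Test{cv}$ and $\sigmaest{cv}^2$ with the honest pseudoinverse even for rank-deficient $\mathbf{K}$. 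What the paper's route buys is brevity and the extra generality of shift terms $\vec{\alpha}$, which your linear-only argument would need a one-line extension to absorb (a shift only changes $\tilde{\mu}_a$, leaving $\tilde{u}^{(i)}_a$, $\tilde{C}_a$, and $\widetilde{\mathbf{K}}$ as in your Step 1, so the extension is immediate).
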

\begin{proof}
Let the vector notation $\vec{A}$ represent an $N_\mathrm{cv}$ dimensional column vector $(A_1,\dots, A_{N_\mathrm{cv}})$. Let $\mathbf{M}$ be an invertible $N_\mathrm{cv}\times N_\mathrm{cv}$ matrix and let $\vec{\alpha}$ be a constant vector. To prove the lemma, we will simply track how different quantities in \alref{alg:cv_estimator} transform under the affine transformation
 \begin{align}
  \vec{V}^{(i)} \longmapsto \mathbf{M}\,\vec{V}^{(i)} + \vec{\alpha}\,.
 \end{align}
 Under this transformation, we have
 \begin{align}
  \vec{\mu} &\longmapsto \mathbf{M}\,\vec{\mu} + \vec{\alpha}\,,\\
  \vec{C} &\longmapsto \mathbf{M}\,\vec{C}\,,\\
  \mathbf{K} &\longmapsto \mathbf{M}\,\mathbf{K}\,\mathbf{M}^T\,,\\
  \mathbf{K}^+ &\longmapsto \left(\mathbf{M}^T\right)^{-1}\,\mathbf{K}^+\,\mathbf{M}^{-1}\,.\label{eq:Kplus_transformation}
 \end{align}
Plugging these into \eqref{eq:alg3_a}, \eqref{eq:alg3_b}, \eqref{eq:alg3_c}, and \eqref{eq:alg3_d}, we have
\begin{align}
 \vec{G}^{(i)} &\longmapsto \mathbf{M}\,\vec{G}^{(i)}\,,\\
 \vec{L}^{(\sim i)} &\longmapsto \mathbf{M}\,\vec{L}^{(\sim i)}\,, \label{eq:L_transformation}\\
 W^{(i)} - W^{(i)}_\mathrm{res} &\longmapsto W^{(i)} - W^{(i)}_\mathrm{res}\,, \label{eq:W_transformation}\\
 \vec{S}_a^{(\neg i)} &\longmapsto \mathbf{M}\,\vec{S}^{(\neg i)}\,. \label{eq:S_transformation}
\end{align}
Plugging \eqref{eq:W_transformation} and \eqref{eq:S_transformation} into \eqref{eq:alg3_f} we have
\begin{align}
 Y_\mathrm{cv}^{(\sim i)} &\longmapsto Y_\mathrm{cv}^{(\sim i)}\,.\label{eq:Y_transformation}
\end{align}
Let $\mathbf{Lscov}$ be an $N_\mathrm{cv}\times N_\mathrm{cv}$ matrix whose $(a,b)$-the element is $\scov\left[L_a, L_b\right]$. From \eqref{eq:Kplus_transformation} and \eqref{eq:L_transformation}, we have
\begin{align}
 \mathbf{Lscov} &\longmapsto \mathbf{M}\,\mathbf{Lscov}\,\mathbf{M}^T\,,\\
 \mathbf{K}^+\,\mathbf{Lscov} &\longmapsto \left(\mathbf{M}^T\right)^{-1}\,\mathbf{K}^+\,\mathbf{Lscov}\,\mathbf{M}^T\,,\\
 \Longrightarrow~~\mathrm{trace}\left[\Big.\mathbf{K}^+\,\mathbf{Lscov}\right] &\longmapsto \mathrm{trace}\left[\Big.\mathbf{K}^+\,\mathbf{Lscov}\right]\,.\label{eq:trace_transformation}
\end{align}
From \eqref{eq:alg3_h}, \eqref{eq:alg3_i}, \eqref{eq:Y_transformation}, and \eqref{eq:trace_transformation} we have
\begin{align}
 \Test{cv} &\longmapsto \Test{cv}\,,\qquad\qquad\qquad\sigmaest{cv}^2 \longmapsto \sigmaest{cv}^2\,.
\end{align}
\end{proof}

\subsection*{Proofs of the scaling properties in \tref{tab:estimator_properties}}

From \eqref{eq:bias_centered} and \eqref{eq:bias_cv} (and the well known properties of the sample variance of the average of iid variables), it follows that
\begin{subequations}\label{eq:bias_sigmasq_method}
\begin{align}
 \text{Bias of }\sigmaest{basic}^2 &= 0\,,\\
 \text{Bias of }\sigmaest{centered}^2 &= \left|\mathcal{O}(N^{-3})\right|\,,\\
 \text{Bias of }\sigmaest{cv}^2 &= \left|\mathcal{O}(N^{-2})\right|\,.
\end{align}
\end{subequations}
In the large $N$ limit, $Y_\mathrm{cv}^{(\sim i)}$ becomes approximately independent of $Y_\mathrm{cv}^{(\sim j)}$ for $i\neq j$ (and likewise for $Y_\mathrm{centered}^{(\sim i)}$). In this limit, the first terms of \eqref{eq:alg2_e} and \eqref{eq:alg3_i} dominate the corresponding second terms. From this, and the fact that the variance of sample variance of iid variables scales as $\mathcal{O}(N^{-1})$, it follows that
\begin{align}
 \E\left[\sigmaest{method}^2\right] &= \mathcal{O}(N^{-1})\,,\qquad\forall \text{method}\in\{\text{basic, centered, cv}\}\,,\label{eq:E_sigmaesq_method}\\
 \var\left[\sigmaest{method}^2\right] &= \mathcal{O}(N^{-3})\,,\qquad\forall \text{method}\in\{\text{basic, centered, cv}\}\,.\label{eq:var_sigmaesq_method}
\end{align}
From \eqref{eq:bias_sigmasq_method} and \eqref{eq:E_sigmaesq_method}, it follows that
\begin{align}
 \var\left[\Test{method}\right] = \mathcal{O}(N^{-1})\,,\qquad\forall \text{method}\in\{\text{basic, centered, cv}\}\,.\label{eq:var_T}
\end{align}
From \eqref{eq:bias_sigmasq_method}, \eqref{eq:var_sigmaesq_method}, and the definition of MSE, it follows that
\begin{align}
 \text{MSE of }\sigmaest{method}^2 = \mathcal{O}(N^{-3})\,,\qquad\forall \text{method}\in\{\text{basic, centered, cv}\}\,.\label{eq:mse_sigmasq_method}
\end{align}
We can write
\begin{align}
 \text{MSE of } \sigmaest{method}^2 &= \E\left[\left(\sigmaest{method}^2 - \var\left[\Test{method}\right]\right)^2\right]\,,\\
 &= \E\left[\left(\sigmaest{method} - \sqrt{\var\left[\Test{method}\right]}\right)^2~\left(\sigmaest{method} + \sqrt{\var\left[\Test{method}\right]}\right)^2\right]\,.\label{eq:mse_relationship}
\end{align}
In the large $N$ limit, $\left(\sigmaest{method} + \sqrt{\var\left[\Test{method}\right]}\right)^2$ scales, like $\var\left[\Test{method}\right]$, as $\mathcal{O}(N^{-1})$. Using this, \eqref{eq:mse_sigmasq_method}, and \eqref{eq:mse_relationship} we have
\begin{align}
 \text{MSE of }\sigmaest{method} = \mathcal{O}(N^{-2})\,,\qquad\forall \text{method}\in\{\text{basic, centered, cv}\}\,.\label{eq:mse_sigma_method}
\end{align}
From the definition of MSE, we have
\begin{align}
 \text{Bias of }\sigmaest{method} = \mathcal{O}(N^{-1})\,,\qquad\forall \text{method}\in\{\text{basic, centered, cv}\}\,,\\
 \var\left[\sigmaest{method}\right] = \mathcal{O}(N^{-2})\,,\qquad\forall \text{method}\in\{\text{basic, centered, cv}\}\,.\label{eq:var_sigma_method}
\end{align}
From Jensen's inequality we have
\begin{align}
 \E\left[\sigmaest{method}^2\right] - \var\left[\Test{method}\right] &\geq \E^2\left[\sigmaest{method}\right] - \var\left[\Test{method}\right]\\
 &= \left(\E\left[\sigmaest{method}\right] - \sqrt{\var\left[\Test{method}\right]}\right)~\left(\E\left[\sigmaest{method}\right] + \sqrt{\var\left[\Test{method}\right]}\right)\,,\\
 \Rightarrow \text{Bias of } \sigmaest{method}^2 &\geq \left(\text{Bias of } \sigmaest{method}\right)~\left(\E\left[\sigmaest{method}\right] + \sqrt{\var\left[\Test{method}\right]}\right)
\end{align}
Since the bias of $\sigmaest{method}^2$ is non-negative and $\sigmaest{method}$ is non-negative, it follows that
\begin{align}
 \text{Bias of }\sigmaest{method} = \left|\mathcal{O}(N^{-1})\right|\,,\qquad\forall \text{method}\in\{\text{basic, centered, cv}\}\,.\label{eq:bias_sigma_method}
\end{align}
Finally, using \eqref{eq:var_T}, \eqref{eq:mse_sigmasq_method}, \eqref{eq:mse_sigma_method}, and the definition of relative MSE in \eqref{eq:rmse_def} we have
\begin{align}
 \text{Relative MSE of }\sigmaest{method}^2 \text{ wrt }\var\left[\Test{method}\right] &= \mathcal{O}\left(N^{-1}\right)\,,~~\forall \text{method}\in\{\text{basic, centered, cv}\}\,,\\
 \text{Relative MSE of }\sigmaest{method} \text{ wrt }\sqrt{\var\left[\Test{method}\right]} &= \mathcal{O}\left(N^{-1}\right)\,,~~\forall \text{method}\in\{\text{basic, centered, cv}\}\,.
\end{align}

\section{Parameters of the Noise Model} \label{appendix:noise_model_params}

\begin{table}[ht]
 \centering
 \caption{Values of $-\ln(1-2\,\epsilon_{\text{\sc t},i})/2$, from Ref.~\cite{vandenBerg2023}, for the 4-qubit noise model for different local Pauli interactions $P_i$ and different layer types $\mathrm{T}\in\{1,2\}$. In each row, the ``Pauli String'' entry denotes how the corresponding local Pauli interaction $P_i$ acts on the qubits of the circuit, in order from qubit $q_1$ to qubit $q_4$. The ``Layer Type 1'' and ``Layer Type 2'' entries are the corresponding values of $-\ln(1-2\,\epsilon_{\text{\sc t},i})/2$ for $\mathrm{T}=1$ and $\mathrm{T}=2$, respectively.}
 \label{tab:4_qubits_noise_model_params}
 \footnotesize
 \begin{tabular}{| c c c | c c c |}
  \hline
  \textbf{Pauli String} & \textbf{Layer Type 1} & \textbf{Layer Type 2} & \textbf{Pauli String} & \textbf{Layer Type 1} & \textbf{Layer Type 2} \\
  \hline
  \rowcolor{black!10} \texttt{XIII} & 0.000669974414 & 0.000659860950 & \texttt{ZZII} & 0.000870530575 & 0.000000000000 \\
  \texttt{YIII} & 0.000596776037 & 0.000752241008 & \texttt{IXXI} & 0.000000000000 & 0.001227428083 \\
  \rowcolor{black!10} \texttt{ZIII} & 0.001627948647 & 0.000392342844 & \texttt{IXYI} & 0.000038574173 & 0.000243853430 \\
  \texttt{IXII} & 0.000774443205 & 0.000680371432 & \texttt{IXZI} & 0.000000000000 & 0.000109003362 \\
  \rowcolor{black!10} \texttt{IYII} & 0.000340566246 & 0.000930175204 & \texttt{IYXI} & 0.000020093400 & 0.001419707915 \\
  \texttt{IZII} & 0.000993663526 & 0.002420048873 & \texttt{IYYI} & 0.000053643887 & 0.000103959855 \\
  \rowcolor{black!10} \texttt{IIIX} & 0.000702136549 & 0.000672692617 & \texttt{IYZI} & 0.000012700720 & 0.000242172351 \\
  \texttt{IIIY} & 0.000602160288 & 0.001066022658 & \texttt{IZXI} & 0.000000000000 & 0.001924003893 \\
  \rowcolor{black!10} \texttt{IIIZ} & 0.000478242572 & 0.000219538914 & \texttt{IZYI} & 0.000004671438 & 0.000408402037 \\
  \texttt{IIXI} & 0.000651171826 & 0.001018792408 & \texttt{IZZI} & 0.000023699323 & 0.000831125651 \\
  \rowcolor{black!10} \texttt{IIYI} & 0.000204387730 & 0.000686086310 & \texttt{IIXX} & 0.000620397324 & 0.000000000000 \\
  \texttt{IIZI} & 0.000466925204 & 0.000422472781 & \texttt{IIYX} & 0.000030479719 & 0.000004344610 \\
  \rowcolor{black!10} \texttt{XXII} & 0.000633952178 & 0.000000000000 & \texttt{IIZX} & 0.000002142946 & 0.000043848925 \\
  \texttt{XYII} & 0.000000000000 & 0.000048372336 & \texttt{IIXY} & 0.000662476196 & 0.000011875130 \\
  \rowcolor{black!10} \texttt{XZII} & 0.000010257834 & 0.000049852946 & \texttt{IIYY} & 0.000005369033 & 0.000011953629 \\
  \texttt{YXII} & 0.000631370212 & 0.000317148337 & \texttt{IIZY} & 0.000018339999 & 0.000000000000 \\
  \rowcolor{black!10} \texttt{YYII} & 0.000006120382 & 0.000256771868 & \texttt{IIXZ} & 0.000092828806 & 0.000069207293 \\
  \texttt{YZII} & 0.000000000000 & 0.000069369248 & \texttt{IIYZ} & 0.000379231035 & 0.000061892636 \\
  \rowcolor{black!10} \texttt{ZXII} & 0.000020565733 & 0.000182306026 & \texttt{IIZZ} & 0.000446400089 & 0.000020360472 \\
  \texttt{ZYII} & 0.000612440964 & 0.000258825995 & & & \\
  \hline
 \end{tabular}
\end{table}

\begin{table}[ht]
 \centering
 \caption{Same as \tref{tab:4_qubits_noise_model_params}, but for the noise model used for the 10-qubit circuits.}
 \label{tab:10_qubits_noise_model_params}
 \footnotesize
 \resizebox{.85\textwidth}{!}{\begin{tabular}{| c c c | c c c |}
  \hline
  \textbf{Pauli String} & \textbf{Layer Type 1} & \textbf{Layer Type 2} & \textbf{Pauli String} & \textbf{Layer Type 1} & \textbf{Layer Type 2}\\
  \hline
  \rowcolor{black!10} \texttt{XIIIIIIIII} & 0.001192209288 & 0.001574096427 & \texttt{IIZZIIIIII} & 0.000061362164 & 0.000041643557 \\
  \texttt{YIIIIIIIII} & 0.000749369410 & 0.003297398650 & \texttt{IIIXXIIIII} & 0.000123281363 & 0.001772043115 \\
  \rowcolor{black!10} \texttt{ZIIIIIIIII} & 0.000591214457 & 0.000439373589 & \texttt{IIIXYIIIII} & 0.000937205886 & 0.001211585333 \\
  \texttt{IXIIIIIIII} & 0.001098771827 & 0.001099386481 & \texttt{IIIXZIIIII} & 0.000000000000 & 0.000426801823 \\
  \rowcolor{black!10} \texttt{IYIIIIIIII} & 0.001096640031 & 0.001307873798 & \texttt{IIIYXIIIII} & 0.000122791833 & 0.000000000000 \\
  \texttt{IZIIIIIIII} & 0.000396081356 & 0.002244376821 & \texttt{IIIYYIIIII} & 0.000000000000 & 0.000087273160 \\
  \rowcolor{black!10} \texttt{IIXIIIIIII} & 0.000741038474 & 0.001257650908 & \texttt{IIIYZIIIII} & 0.000000000000 & 0.000533974096 \\
  \texttt{IIYIIIIIII} & 0.000016519117 & 0.001220585681 & \texttt{IIIZXIIIII} & 0.000000000000 & 0.000000000000 \\
  \rowcolor{black!10} \texttt{IIZIIIIIII} & 0.001088379181 & 0.000725489302 & \texttt{IIIZYIIIII} & 0.000000000000 & 0.000000000000 \\
  \texttt{IIIXIIIIII} & 0.000244355726 & 0.001270798557 & \texttt{IIIZZIIIII} & 0.000000000000 & 0.000720377032 \\
  \rowcolor{black!10} \texttt{IIIYIIIIII} & 0.000576171145 & 0.000195510139 & \texttt{IIIIXXIIII} & 0.001934621533 & 0.000000000000 \\
  \texttt{IIIZIIIIII} & 0.001427589586 & 0.000898078793 & \texttt{IIIIXYIIII} & 0.001629487149 & 0.000242928059 \\
  \rowcolor{black!10} \texttt{IIIIXIIIII} & 0.004558124788 & 0.001020041782 & \texttt{IIIIXZIIII} & 0.003052175387 & 0.000074305505 \\
  \texttt{IIIIYIIIII} & 0.000000000000 & 0.000187456486 & \texttt{IIIIYXIIII} & 0.000000000000 & 0.000000000000 \\
  \rowcolor{black!10} \texttt{IIIIZIIIII} & 0.002055441881 & 0.001195809556 & \texttt{IIIIYYIIII} & 0.000000000000 & 0.000000000000 \\
  \texttt{IIIIIXIIII} & 0.000000000000 & 0.001557243291 & \texttt{IIIIYZIIII} & 0.000138128144 & 0.000471576868 \\
  \rowcolor{black!10} \texttt{IIIIIYIIII} & 0.000000000000 & 0.001102903309 & \texttt{IIIIZXIIII} & 0.000484195825 & 0.000716452361 \\
  \texttt{IIIIIZIIII} & 0.000000000000 & 0.001335759609 & \texttt{IIIIZYIIII} & 0.000000000000 & 0.000803853053 \\
  \rowcolor{black!10} \texttt{IIIIIIXIII} & 0.001279398865 & 0.002663727502 & \texttt{IIIIZZIIII} & 0.000795053889 & 0.001907149099 \\
  \texttt{IIIIIIYIII} & 0.000513044213 & 0.000779190131 & \texttt{IIIIIXXIII} & 0.000015211361 & 0.002326031848 \\
  \rowcolor{black!10} \texttt{IIIIIIZIII} & 0.000804474619 & 0.001528538727 & \texttt{IIIIIXYIII} & 0.000011118094 & 0.000043861514 \\
  \texttt{IIIIIIIXII} & 0.000458229905 & 0.002072231505 & \texttt{IIIIIXZIII} & 0.000036471446 & 0.000205902312 \\
  \rowcolor{black!10} \texttt{IIIIIIIYII} & 0.000000000000 & 0.001722421273 & \texttt{IIIIIYXIII} & 0.000075912447 & 0.002657858584 \\
  \texttt{IIIIIIIZII} & 0.000383806139 & 0.000703703968 & \texttt{IIIIIYYIII} & 0.000000000000 & 0.000000000000 \\
  \rowcolor{black!10} \texttt{IIIIIIIIXI} & 0.000467617044 & 0.001041426266 & \texttt{IIIIIYZIII} & 0.000210159084 & 0.000203542224 \\
  \texttt{IIIIIIIIYI} & 0.000920457389 & 0.003963801984 & \texttt{IIIIIZXIII} & 0.000000000000 & 0.001477898996 \\
  \rowcolor{black!10} \texttt{IIIIIIIIZI} & 0.000168519683 & 0.000000000000 & \texttt{IIIIIZYIII} & 0.000270907748 & 0.001185063494 \\
  \texttt{IIIIIIIIIX} & 0.001194147229 & 0.001434615371 & \texttt{IIIIIZZIII} & 0.000000000000 & 0.001046785030 \\
  \rowcolor{black!10} \texttt{IIIIIIIIIY} & 0.000892751762 & 0.002631458309 & \texttt{IIIIIIXXII} & 0.001560365223 & 0.000017533842 \\
  \texttt{IIIIIIIIIZ} & 0.000547924822 & 0.000389343372 & \texttt{IIIIIIXYII} & 0.000000000000 & 0.000000000000 \\
  \rowcolor{black!10} \texttt{XXIIIIIIII} & 0.001007541252 & 0.000118112821 & \texttt{IIIIIIXZII} & 0.000268129809 & 0.000300757657 \\
  \texttt{XYIIIIIIII} & 0.000000000000 & 0.000043935951 & \texttt{IIIIIIYXII} & 0.000000000000 & 0.000000000000 \\
  \rowcolor{black!10} \texttt{XZIIIIIIII} & 0.000201257642 & 0.000125018218 & \texttt{IIIIIIYYII} & 0.000052918402 & 0.000000000000 \\
  \texttt{YXIIIIIIII} & 0.000629181408 & 0.000010172716 & \texttt{IIIIIIYZII} & 0.001530563576 & 0.000003891001 \\
  \rowcolor{black!10} \texttt{YYIIIIIIII} & 0.000205566380 & 0.000156020690 & \texttt{IIIIIIZXII} & 0.000058398263 & 0.000000000000 \\
  \texttt{YZIIIIIIII} & 0.000089129682 & 0.000063652221 & \texttt{IIIIIIZYII} & 0.000092291145 & 0.000000000000 \\
  \rowcolor{black!10} \texttt{ZXIIIIIIII} & 0.000431830479 & 0.000100350982 & \texttt{IIIIIIZZII} & 0.000553265191 & 0.000221411717 \\
  \texttt{ZYIIIIIIII} & 0.001738900933 & 0.000058926515 & \texttt{IIIIIIIXXI} & 0.000000000000 & 0.000765097294 \\
  \rowcolor{black!10} \texttt{ZZIIIIIIII} & 0.001289067246 & 0.000124157533 & \texttt{IIIIIIIXYI} & 0.000000000000 & 0.000000000000 \\
  \texttt{IXXIIIIIII} & 0.000099020020 & 0.000700906874 & \texttt{IIIIIIIXZI} & 0.000133939777 & 0.000356784735 \\
  \rowcolor{black!10} \texttt{IXYIIIIIII} & 0.000000000000 & 0.000275730285 & \texttt{IIIIIIIYXI} & 0.000000000000 & 0.001808081964 \\
  \texttt{IXZIIIIIII} & 0.000229637029 & 0.000159724387 & \texttt{IIIIIIIYYI} & 0.000000000000 & 0.000000000000 \\
  \rowcolor{black!10} \texttt{IYXIIIIIII} & 0.000132599092 & 0.002349787799 & \texttt{IIIIIIIYZI} & 0.000000000000 & 0.000000000000 \\
  \texttt{IYYIIIIIII} & 0.000000000000 & 0.000000000000 & \texttt{IIIIIIIZXI} & 0.000353891078 & 0.000152048958 \\
  \rowcolor{black!10} \texttt{IYZIIIIIII} & 0.000271474031 & 0.000000000000 & \texttt{IIIIIIIZYI} & 0.000028668672 & 0.002266435455 \\
  \texttt{IZXIIIIIII} & 0.000644700037 & 0.000268975988 & \texttt{IIIIIIIZZI} & 0.000121441485 & 0.001657914194 \\
  \rowcolor{black!10} \texttt{IZYIIIIIII} & 0.000000000000 & 0.000952029156 & \texttt{IIIIIIIIXX} & 0.001012587051 & 0.000000000000 \\
  \texttt{IZZIIIIIII} & 0.000393355287 & 0.000996173733 & \texttt{IIIIIIIIXY} & 0.001249378880 & 0.000179567753 \\
  \rowcolor{black!10} \texttt{IIXXIIIIII} & 0.001510954480 & 0.000000000000 & \texttt{IIIIIIIIXZ} & 0.000144233331 & 0.000000000000 \\
  \texttt{IIXYIIIIII} & 0.000000000000 & 0.000000000000 & \texttt{IIIIIIIIYX} & 0.000097945998 & 0.000065833132 \\
  \rowcolor{black!10} \texttt{IIXZIIIIII} & 0.000136586128 & 0.000000000000 & \texttt{IIIIIIIIYY} & 0.000095967505 & 0.000000000000 \\
  \texttt{IIYXIIIIII} & 0.000825053403 & 0.000049287415 & \texttt{IIIIIIIIYZ} & 0.000817642849 & 0.000139637787 \\
  \rowcolor{black!10} \texttt{IIYYIIIIII} & 0.000000000000 & 0.000064329889 & \texttt{IIIIIIIIZX} & 0.000032998376 & 0.000000000000 \\
  \texttt{IIYZIIIIII} & 0.000301372131 & 0.000000000000 & \texttt{IIIIIIIIZY} & 0.000000000000 & 0.000000000000 \\
  \rowcolor{black!10} \texttt{IIZXIIIIII} & 0.000768728717 & 0.000000000000 & \texttt{IIIIIIIIZZ} & 0.000598775101 & 0.000120436643 \\
  \texttt{IIZYIIIIII} & 0.000167192701 & 0.000000000000 & & & \\
  \hline
 \end{tabular}}
\end{table}

\end{document}